\newtheorem{theorem}{Theorem}
\newtheorem{proposition}[theorem]{Proposition}
\newtheorem{lemma}[theorem]{Lemma}
\newtheorem{definition}[theorem]{Definition}
\newtheorem{observation}[theorem]{Observation}
\title{Incremental Maximization via Continuization}
\author[1]{Yann Disser}
\author[2]{Max Klimm\thanks{Supported by Deutsche Forschungsgemeinschaft under Germany's Excellence Strategy, Berlin Mathematics Research Center (grant EXC-2046/1, Project 390685689).}}
\author[3]{Kevin Schewior\thanks{Supported in part by the Independent Research Fund Denmark, Natural Sciences, grant DFF-0135-00018B.}}
\author[1]{David Weckbecker\thanks{Supported by DFG grant DI 2041/2.}}
\affil[1]{TU Darmstadt, Germany, \tt \{disser|weckbecker\}@mathematik.tu-darmstadt.de}
\affil[2]{TU Berlin, Germany, \tt klimm@math.tu-berlin.de}
\affil[3]{University of Southern Denmark, Denmark, \tt kevs@sdu.dk}
\date{}
\newcommand{\R}{\mathbb{R}}
\newcommand{\N}{\mathbb{N}}
\newcommand{\C}{\mathbb{C}}
\newcommand{\E}{\mathbb{E}}
\newcommand{\randalg}{\textsc{RandomizedScaling}\xspace}
\newcommand{\alg}{\textsc{Alg}}
\newcommand{\opt}{\textsc{Opt}}
\newcommand{\greedycap}{\textsc{GreedyScaling}\xspace}
\newcommand{\incmax}{\textsc{IncMax}\xspace}
\newcommand{\incmaxsep}{\textsc{IncMaxSep}\xspace}
\newcommand{\contincmax}{\textsc{IncMaxCont}\xspace}
\newcommand{\capacity}{C}
\begin{document}

\maketitle

\begin{abstract}
We consider the problem of finding an incremental solution to a cardinality-constrained maximization problem that not only captures the solution for a fixed cardinality, but also describes how to gradually grow the solution as the cardinality bound increases.
The goal is to find an incremental solution that guarantees a good competitive ratio against the optimum solution for all cardinalities simultaneously.
The central challenge is to characterize maximization problems where this is possible, and to determine the best-possible competitive ratio that can be attained.
A lower bound of $2.18$ and an upper bound of $\varphi + 1 \approx 2.618$ are known on the competitive ratio for monotone and accountable objectives [Bernstein~et~al.,~Math.~Prog.,~2022], which capture a wide range of maximization problems.
We introduce a continuization technique and identify an optimal incremental algorithm that provides  strong evidence that $\varphi + 1$ is the best-possible competitive ratio.
Using this continuization, we obtain an improved lower bound of $2.246$ by studying a particular recurrence relation whose characteristic polynomial has complex roots exactly beyond the lower bound.
Based on the optimal continuous algorithm combined with a scaling approach, we also provide a $1.772$-competitive randomized algorithm.
We complement this by a randomized lower bound of~$1.447$ via Yao's principle.
\end{abstract}


\newpage

\section{Introduction}

A classical optimization problem takes as input a single instance and outputs a single solution. While this paradigm can be appropriate in static situations, it fails to capture scenarios that are characterized by perpetual growth, such as growing infrastructure networks, expanding companies, or private households with a steady income. In these cases, a single static solution may be rendered useless unless it can be extended perpetually into larger, more expansive solutions that are adequate for the changed circumstances.

To capture scenarios like this more adequately, we adopt the \emph{incremental optimization} framework formalized as follows. 
An instance of the \textsc{Incremental Maximization} (\incmax) problem is given by a countable set $U$ of elements and a monotone objective function $f \colon 2^U \to \mathbb{R}_{\geq 0}$ that assigns each subset $X \subseteq U$ a value $f(X)$.
A solution for an \incmax instance is an order $\sigma = (e_1,e_2,\dots)$ of the elements of $U$ such that each prefix of~$\sigma$ yields a good solution with respect to the objective function $f$. Formally, for $k \in [n]$, let \mbox{$\opt(k) = \max\{ f(X) : |X| = k, X \subseteq U\}$} denote the optimal value of the problem of maximizing $f(X)$ under the cardinality-constraint $|X| = k$. 
A deterministic solution $\sigma = (e_1,e_2,\dots)$ is called $\alpha$-competitive if $\opt(k) / f(\{e_1,\dots,e_k\}) \leq \alpha$ for all $k \in [n]$.
A randomized solution is a probability distribution $\Sigma = (E_1,E_2,\dots)$ over deterministic solutions (where $E_1,E_2,\dots$ are random variables).
It is called $\alpha$-competitive if $\opt(k) / \mathbb{E}[f(\{E_1,\dots,E_k\})] \leq \alpha$ for all $k \in [n]$.
In both cases, we call the infimum over all $\alpha \geq 1$, such that the solution is $\alpha$-competitive, the \emph{(randomized) competitive ratio} of the solution.
A (randomized) algorithm is called $\alpha$-competitive for some $\alpha \geq 1$ if, for every instance, it produces an $\alpha$-competitive solution,
and its \emph{(randomized) competitive ratio} is the infimum over all such~$\alpha$.
The \emph{(randomized) competitive ratio} of a class of problems (or a problem instance) is the infimum over the competitive ratios of all (randomized) algorithms for it.

Clearly, in this general form, no meaningful results regarding the existence of competitive solutions are possible. 
For illustration consider the instance $U = \{a,b,c\}$ where for some $M \in \mathbb{N}$ we have 
\[
	f(X) = \begin{cases}M, & \textrm{if }\{b,c\}\subseteq X,\\ |\{a\}\cap X|, & \textrm{otherwise}\end{cases} \quad \text{ for all $X \subseteq U$.} 
\]
Then, every solution needs to start with element~$a$ in order to be competitive for $k=1$, but any such order cannot be better than $M$-competitive for $k=2$. The underlying issue is that the optimal solution for $k=2$ given by $\{b,c\}$ does not admit a competitive partial solution of cardinality $k=1$. To circumvent this issue, Bernstein et al.~\cite{BernsteinDisserGrossHimburg/20} consider \emph{accountable} functions,
i.e., functions~$f$, such that, for every $X\subseteq U$, there exists $e\in X$ with $f(X\setminus\{e\})\geq f(X)-f(X)/|X|$.
They further show that many natural incremental optimization problems are monotone and accountable such as the following.
\begin{description}
	\item[Weighted matching:] $U$ is the set of edges of a weighted graph, and $f(X)$ is the maximum weight of a matching contained in $X$;
	\item[Set packing:] $U$ is a set of weighted subsets of a ground set, and $f(X)$ is the maximum weight of a set of mutually disjoint subsets of $X$;
	\item[Submodular function maximization:] $U$ is arbitrary, and $f$ is monotone and submodular;
	\item[(Multi-dimensional) Knapsack:] $U$ is a set of items with (multi-dimensional) sizes and values, and $f(X)$ is the maximum value of a subset of items of $X$ that fits into the knapsack. 
\end{description}

Bernstein et al.~\cite{BernsteinDisserGrossHimburg/20} gave an algorithm to compute a $(1 + \varphi)$-competitive incremental solution and showed that the competitive ratio of the \incmax problem is at least $2.18$. 
Throughout this work, we assume that the objective $f$ is accountable.

%

\vspace*{.2cm}
\textbf{Our results.}
As a first step, we reduce the general \incmax problem to the special case of \incmaxsep, where the elements of the instance can be partitioned into a (countable) set of uniform and modular subsets such that the overall objective is the maximum over the modular functions on the subsets.
We then define the \contincmax problem as a continuization, assuming that there exists one such subset with (fractional) elements of every size~$c\in\R_{>0}$.
The smooth structure of this problem better lends itself to analysis.

We consider the continuous algorithm $\greedycap(c_1,\rho)$ that adds a sequence of these subsets, starting with the subset of size~$c_1>0$ and proceeding along a sequence of subsets of largest possible sizes under the constraint that $\rho$-competitiveness is maintained for as long as possible.
We first show that there always exists an optimal solution of this form.

\begin{restatable}{theorem}{GreedyCapBest}\label{thm:GreedyCap_best_possible}
	For every instance of~\contincmax, there exists a starting value~$c_1$ such that the algorithm $\greedycap(c_1,\rho^*)$ achieves the best-possible competitive ratio~$\rho^*\geq1$.
\end{restatable}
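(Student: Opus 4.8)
Let $\rho^*$ denote the infimum of the competitive ratios over all algorithms for the given \contincmax instance; this infimum is at most $\varphi+1<\infty$ by the known upper bound, and $\rho^*\geq1$ holds trivially. The plan is to show that $\rho^*$ is attained, and in fact by an algorithm of the form $\greedycap(c_1,\rho^*)$, in three steps: reduce arbitrary strategies to the greedy form for every $\rho>\rho^*$; confine the admissible starting values to a fixed compact interval; and then pass to the limit $\rho\downarrow\rho^*$.

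First I would show that greedy is without loss for every $\rho>\rho^*$. Fix such a $\rho$ and let $A$ be any $\rho$-competitive solution; it is described by an increasing sequence of subset sizes, whose first element we call $c_1$. Comparing $A$ with $\greedycap(c_1,\rho)$ phase by phase, both coincide while building up the subset of size $c_1$, so this first phase is $\rho$-competitive; the inductive step is a dominance lemma asserting that $\greedycap(c_1,\rho)$ inherits $\rho$-competitiveness from $A$. Its ingredients are that the greedy ``largest size that maintains $\rho$-competitiveness as long as possible'' rule never violates competitiveness within a phase it commits to, that larger subsets eventually carry weakly larger total value, and that the presence of the $\rho$-competitive solution $A$ --- whose committed sizes are pointwise dominated by the greedy ones --- certifies that the greedy process can always be continued and never ``paints itself into a corner''. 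Consequently the set $I(\rho):=\{\,c_1>0:\greedycap(c_1,\rho)\text{ is }\rho\text{-competitive}\,\}$ is non-empty for every $\rho>\rho^*$.

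Next I would exhibit constants $0<\underline c\leq\overline c<\infty$, depending only on the instance, with $I(\rho)\subseteq[\underline c,\overline c]$ for all $\rho\leq\varphi+1$. The reason is that already the first phase of $\greedycap(c_1,\rho)$ fails to be $\rho$-competitive when $c_1$ is too small or too large: there the attained value grows proportionally to the density of the subset of size $c_1$, and $\rho$-competitiveness at small cardinalities forces this density to be within a factor $\rho$ of the supremum density of the instance, which by the structural assumptions on \contincmax restricts $c_1$ to a compact subinterval of $(0,\infty)$.

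Finally, choose $\rho_n\downarrow\rho^*$ and $c_1^{(n)}\in I(\rho_n)$; by the previous step we may pass to a subsequence with $c_1^{(n)}\to c_1^*\in[\underline c,\overline c]$, and it remains to prove that $\greedycap(c_1^*,\rho^*)$ is $\rho^*$-competitive. I would do this by showing that the trajectory of $\greedycap(c_1,\rho)$ --- its sequence of phase boundaries together with the induced value function --- depends continuously on the pair $(c_1,\rho)$: each greedily chosen size is the unique solution of a tightness equation in the continuous data $f$ and $\opt$, so that, arguing inductively by monotonicity (or via an implicit-function argument), every phase boundary is a continuous function of $c_1$ and $\rho$; hence the value function of $\greedycap(c_1^{(n)},\rho_n)$ converges locally uniformly to that of $\greedycap(c_1^*,\rho^*)$, the cardinality-wise competitiveness inequalities pass to the limit, and $\greedycap(c_1^*,\rho^*)$ is $\rho^*$-competitive --- hence best possible by the definition of $\rho^*$. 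The main obstacle is exactly this last step: one must rule out that the greedily chosen sizes, each defined through a supremum, jump discontinuously as $(c_1,\rho)$ varies, i.e.\ establish continuity of the entire (infinite) phase structure in the parameters, with the dominance lemma of the reduction and the a~priori compactness above being the ingredients that make the passage to the limit well posed.
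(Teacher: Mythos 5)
Your plan diverges from the paper's proof at a critical point, and the divergence hides a genuine gap.

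The linchpin of your step~1 is a ``dominance lemma'' asserting that if \emph{some} $\rho$-competitive solution starts with size $c_1$, then $\greedycap(c_1,\rho)$ is also $\rho$-competitive. This claim is not justified and is, in general, false. Greedy always selects the \emph{largest} feasible next size $c_{i+1}$, which maximizes short-term value, but the quantity that must stay positive for the process to continue is the slack $p(c)-S-c$. Its derivative in $c$ is $p'(c)-1$, and since $p'(c)=\rho\,v'(c)/v'(p(c))$ and $v$ need not be concave (only $v/c$ decreasing is guaranteed), one can have $p'(c)<1$ on a range. There, larger $c$ \emph{reduces} the remaining slack, so the aggressive greedy choice can drive the process into a dead end while a more conservative solution with the same $c_1$ survives. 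Your remark that $A$'s sizes are ``pointwise dominated by the greedy ones'' is also not established: once the two trajectories diverge at $c_2$, the constraints governing $c_3$ differ, and no pointwise comparison follows. The paper handles exactly this difficulty via Lemma~\ref{lem:fastest_ALG}: rather than running greedy from an arbitrary $\rho$-competitive solution's $c_1$, it constructs, by a descending-chain/Bolzano--Weierstrass argument, a \emph{minimal} $\rho$-competitive solution whose first $n(k)-1$ sizes cannot be reduced; minimality then forces the greedy equation~\eqref{eq:GreedyCap_def} to hold with equality, and, importantly, the minimal starting value may differ from that of the solution you began with.

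Your step~2 is also imprecise: $\rho$-competitiveness at small cardinalities yields only the \emph{upper} bound $c_1\le d^{-1}(1/\rho)$. The matching lower bound $c_1\ge\underline c$ in the paper is not a property of every $c_1\in I(\rho)$ --- one can check that as $c_1\to0$ greedy's trajectory converges to the one started at $d^{-1}(1/(\rho-1))$, so $I(\rho)$ need not be bounded away from $0$. The lower bound is instead a consequence of minimality (dropping $c_1$ would shrink $m$), which is again only available through Lemma~\ref{lem:fastest_ALG}.

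On the positive side, your step~3 --- passing $\rho_n\downarrow\rho^*$ and invoking continuity of the greedy trajectory in $(c_1,\rho)$ --- is a legitimate concern the paper treats tersely: the paper's final lemma is stated for an ``achievable approximation ratio'' $\rho$, and the jump to ``this immediately implies the theorem'' silently presupposes that the infimum $\rho^*$ is itself achieved. Your limiting argument over $\rho$, with the compactness of starting values and continuity of the recursively defined phase boundaries, is precisely the kind of reasoning needed to close that gap. By contrast, the paper's explicit limiting argument is taken over the horizon $k$ (number of phases), not over $\rho$. So your step~3 is sound and even useful, but steps~1 and~2 need to be replaced by the minimality construction, since the dominance you posit does not hold.
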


Our continuous embedding allows us to view every algorithm as an increasing sequence of sizes of subsets that are added one after the other.
Using elementary calculus, we can show that, with the golden ratio \mbox{$\varphi := \frac{1}{2}(1+\sqrt{5})\approx1.618$}, $\greedycap(c_1,\rho)$ achieves the known upper bound of $\varphi + 1$ for a range of starting values.
Here, $d(c)$ refers to the density, i.e., value per size, of the subset of size~$c$ (see Sec.~\ref{sec:separability}).

\begin{restatable}{theorem}{GreedyCapComp}\label{thm:GreedyCap_phi+1-comp}
	$\greedycap(c_{1},\varphi+1)$ is $(\varphi+1)$-competitive if and only if $d(c_1)\geq\frac{1}{\varphi+1}$.
\end{restatable}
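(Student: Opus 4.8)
The plan is to reduce both implications to what $\greedycap(c_1,\varphi+1)$ does on its very first subset, together with an induction showing that the choice $\rho:=\varphi+1$ leaves exactly enough slack for the greedy rule to keep running. For the set-up, recall that the solution built by $\greedycap(c_1,\rho)$ completes subsets of sizes $c_1<c_2<\cdots$ one after the other; write $T_m:=\sum_{j\le m}c_j$ for the cardinality at which the $m$-th subset is completed and $v_m:=c_m\,d(c_m)$ for its value, which (by the reduction to \incmaxsep) equals $\opt(c_m)$ and is the running maximum at $T_m$. On the \emph{initial phase} $k\in[0,c_1]$ the current solution consists of $k$ elements of the first subset, so $\alg(k)=k\,d(c_1)$. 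By the normalization of Section~\ref{sec:separability}, $k\mapsto\opt(k)/k$ is non-increasing with $\opt(k)/k\to1$ as $k\to0^+$; in particular $\opt(k)\le k$ for all $k$, and the worst ratio on the initial phase equals $\sup_{k\in(0,c_1]}\opt(k)/(k\,d(c_1))=1/d(c_1)$. This settles the ``only if'' direction at once: $\rho$-competitiveness forces $1/d(c_1)\le\rho$, i.e.\ $d(c_1)\ge\tfrac1{\varphi+1}$.

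For the ``if'' direction, assume $d(c_1)\ge\tfrac1{\varphi+1}$. Then $\opt(k)\le k\le\rho\,\alg(k)$ on $[0,c_1]$, so the initial phase is $\rho$-competitive, and since $\greedycap$ keeps every completed phase $\rho$-competitive by construction, it suffices to show that the greedy rule never gets stuck \emph{prematurely}: it fails to admit a next size only when $\opt$ has already saturated below $\rho v_m$, in which case holding value $v_m$ stays $\rho$-competitive for all larger cardinalities. I would prove this by induction on $m$, carrying an invariant stating that the coasting window $\kappa_m:=\sup\{k:\opt(k)\le\rho v_m\}$ is wide enough (roughly $\kappa_m\ge T_m+c_{m+1}$ for the next selected size, or $\kappa_m=\infty$). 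The base case $m=1$ is precisely the hypothesis, since $\kappa_1\ge\rho v_1=\rho c_1 d(c_1)\ge c_1=T_1$. For the inductive step, one first checks that while the $(m{+}1)$-st subset is being built the only binding competitiveness constraint is that its growing partial value reach $v_m$ before the coasting window closes — this uses monotonicity of $\opt(k)/k$ to bound $\opt$ on the remainder of the phase — so $\greedycap$ picks the largest $c_{m+1}$ with $d(c_{m+1})\ge v_m/(\kappa_m-T_m)$, after which one verifies that $v_{m+1}=c_{m+1}d(c_{m+1})$ is large enough to reproduce the invariant. The recurrence closes exactly at $\rho=\varphi+1$ because of the identity $\varphi^2=\varphi+1$ (equivalently $(\varphi{+}1)(\varphi{-}1)=\varphi$); for $\rho<\varphi+1$ the same computation makes the window shrink strictly from step to step, so the process eventually stalls while $\opt$ is still growing. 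Finally, the $T_m$ are unbounded, so the phase intervals $[0,c_1],[T_1,T_2],[T_2,T_3],\dots$ cover $(0,\infty)$, and $\rho$-competitiveness on each of them makes the solution $\rho$-competitive.

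The step I expect to be the main obstacle is exactly this inductive step: pinning down an invariant that is at once implied by $d(c_1)\ge\tfrac1{\varphi+1}$, self-reproducing under the greedy rule, and strong enough to preclude premature stalling — and then carrying out the (elementary but delicate) calculus that makes it close precisely at $\rho=\varphi+1$ rather than at some larger threshold. A secondary point requiring care is the stalling case, where one must argue that $\greedycap$ exhausts its admissible next sizes only once $\opt\le\rho v_m$ holds for \emph{all} $k$, so that no further growth of the solution is needed for $\rho$-competitiveness.
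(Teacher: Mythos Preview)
Your ``only if'' direction is correct and is exactly how the paper does it (it is Lemma~\ref{lem:competitive_equivalence} specialized to the first phase).

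For the ``if'' direction you have the right skeleton---an induction on phases, with an invariant that reproduces itself under the greedy choice and closes precisely at $\rho=\varphi+1$---but the invariant you write down is too weak to do any work. The statement $\kappa_m\ge T_m$ (in the paper's notation $p(c_m)>\sum_{j\le m}c_j$) is trivially true whenever the algorithm is running at all; it is part of the \emph{conclusion} of Lemma~\ref{lem:competitive_equivalence}(iii), not an inductive hypothesis, and your base-case computation $\kappa_1\ge\rho v_1\ge c_1$ correspondingly proves something that holds regardless of $d(c_1)$. What you actually need is a quantitative growth bound on the $c_i$'s that controls the accumulated overhead $T_m$.

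The invariant the paper uses is simply $c_{i+1}\ge(\varphi+1)\,c_i$. This is exactly the kind of invariant you were looking for, and the induction is short: if $c_j\le(\varphi+1)^{j-i}c_i$ for all $j\le i$, then the geometric sum gives $T_i<\frac{1}{1-(\varphi+1)^{-1}}\,c_i=\varphi\,c_i$; on the other hand, $p(c_i)\ge(\varphi+1)c_i$ always holds because $d(p(c_i))\le d(c_i)$. Plugging both into the greedy rule yields
\[
d(c_{i+1})=\frac{v(c_i)}{p(c_i)-T_i}<\frac{v(c_i)}{p(c_i)-\varphi c_i}\le\frac{v(c_i)}{p(c_i)-\tfrac{\varphi}{\varphi+1}p(c_i)}=(\varphi+1)\frac{v(c_i)}{p(c_i)}=d(p(c_i)),
\]
hence $c_{i+1}>p(c_i)\ge(\varphi+1)c_i$, which closes the induction. (The identity you anticipated, $\varphi^2=\varphi+1$, appears as $\tfrac{1}{1-(\varphi+1)^{-1}}=\varphi$.) Once you have $c_i<c_{i+1}$ for all $i$, Proposition~\ref{prop:GreedyCap_rho-comp_equivalence} gives $(\varphi+1)$-competitiveness directly; there is no need to analyze each phase interval separately or to treat a ``stalling'' case (the paper assumes $v(c)\to\infty$, so the algorithm never stops).
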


On the other hand, we are able to, for every starting value~$c_1$, construct an instance of\linebreak \contincmax where $\greedycap(c_1,\rho)$ is not better than~$(\varphi + 1)$-competitive for any $\rho > 1$.
We emphasize that the optimum value of $\varphi + 1$ emerges naturally from the geometry of complex roots.
Based on this evidence, we conjecture that $\varphi + 1$ is the best-possible competitive ratio.

Of course, proving a general lower bound requires to construct a single instance such that $\greedycap$ is not better than~$(\varphi + 1)$-competitive for \emph{every} starting value.
Careful chaining of our construction for a single starting value yields the following.

\begin{restatable}{proposition}{GreedyCapLB}\label{thm:lower_bound_greedycap}
	For every countable set $S\subset\R_{>0}$ of starting values, there exists an instance of \contincmax such that $\greedycap(c_{1},\rho)$ is not $\rho$-competitive for any $c_{1}\in S$ and any $\rho<\varphi+1$.
\end{restatable}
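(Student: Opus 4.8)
The plan is to build on the single-starting-value construction (referenced just before the statement) and chain countably many copies of it together, one per element of~$S$, into a single instance of \contincmax on which $\greedycap$ fails for every $c_1 \in S$ simultaneously. First I would recall the structure of the density profile $d(\cdot)$ produced for a fixed starting value: it is a piecewise-defined, typically decreasing function on an interval of sizes, engineered so that a greedy trajectory starting there is forced, after finitely many jumps, into a configuration where maintaining $\rho$-competitiveness for $\rho < \varphi+1$ becomes impossible — essentially because the required jump factor would violate the recurrence whose characteristic polynomial has complex roots precisely for $\rho < \varphi + 1$. The key observation to exploit is that $\greedycap(c_1,\rho)$ is \emph{myopic}: from a given size with a given density, its next move depends only on the local shape of $d$ to the right of the current size. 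So the bad behavior for starting value $c_1$ only constrains $d$ on the "reachable cone" above $c_1$.

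The main step is the chaining. Enumerate $S = \{s_1, s_2, \dots\}$ and place the $i$-th bad gadget on a size-interval $I_i = [a_i, b_i]$ so that (i) the intervals are pairwise disjoint and ordered, $b_i < a_{i+1}$, and (ii) $s_i \in I_i$, which may require first rescaling each gadget (\contincmax instances are scale-invariant in the size coordinate, and densities can be renormalized) so that its designated starting value lands at~$s_i$. Between consecutive gadgets I would insert a "buffer" region of sizes on which $d$ is set small enough (or decreasing fast enough) that a greedy run which has entered gadget~$I_i$ can never profitably jump past $b_i$ into $I_{i+1}$ — i.e., the competitiveness analysis within $I_i$ completes before the trajectory escapes. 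The density on $(-\infty$-side of$)$ the buffer before $I_{i+1}$ must also be low enough that a greedy run starting at some $s_j$ with $j < i+1$ does not reach $a_{i+1}$ with enough "budget" to disrupt the $I_{i+1}$-gadget's internal logic — but since that gadget already forces failure on its own, we only need that the earlier trajectories don't \emph{accidentally become competitive}; monotonicity of $f$ plus the fact that $\opt(c)$ grows at least as fast as the best single subset means adding more buffer subsets of low density cannot improve the competitive ratio. Then, for each $i$, run the single-starting-value argument verbatim inside $I_i$: $\greedycap(s_i,\rho)$ restricted to this instance behaves exactly as in the isolated construction up to the point of forced failure, hence is not $\rho$-competitive for any $\rho < \varphi+1$.

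The hard part will be guaranteeing the \emph{non-interference} between gadgets in both directions: ensuring that a greedy trajectory seeded in an early gadget cannot "leak" into a later one and behave better than the later gadget predicts, and conversely that placing a later gadget at larger sizes does not retroactively give the algorithm a cheaper option that rescues competitiveness for an early~$s_i$ (for instance by jumping directly from a small size far to the right). The clean way to handle this is to make the size-gaps between successive $I_i$ grow fast enough (e.g. $a_{i+1} \gg b_i$, say geometrically or faster) and the buffer densities decay fast enough that any jump skipping a gadget incurs a competitiveness loss strictly worse than $\varphi+1$ on its own — so $\greedycap$, which by definition keeps $\rho$-competitiveness "for as long as possible," will never take such a jump when $\rho < \varphi + 1$. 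Once the gadgets are provably isolated, the proposition follows by applying the single-value result to each coordinate of~$S$ in turn. I would also double-check that the resulting $d$ (a countable concatenation of rescaled pieces plus buffers) still defines a legitimate \contincmax instance, i.e., corresponds to a monotone accountable objective with subsets of every size $c \in \R_{>0}$ — filling any uncovered sizes with dominated (very low density) subsets so that the "every size present" requirement of \contincmax holds without affecting the argument.
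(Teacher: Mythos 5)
The paper proves the proposition by iterating Theorem~\ref{thm:excluding_one_starting_value}: given any current instance $\bar{v}$ and any $c_1 < \capacity$, that theorem extends $\bar{v}$ only on sizes in $[\capacity,\overline{\capacity}]$ --- i.e., far to the \emph{right} of $c_1$ --- so that $\greedycap(c_1,\rho)$ is forced to fail inside the new stretch, and, crucially, this failure is robust to arbitrary further alterations of $v$ beyond $\overline{\capacity}$. The chaining picks $\capacity_1<\overline{\capacity}_1\leq\capacity_2<\overline{\capacity}_2\leq\dots$ with $\capacity_i>s_i$ for each $i$, stacking the gadgets for $s_1,s_2,\dots$ at ever-larger sizes; the two clauses of the theorem then give non-interference for free: adding gadget $i$ does not change $v$ on $[0,\capacity_i]$ (so earlier gadgets are preserved literally), and the earlier failures are robust to alterations beyond $\overline{\capacity}_j$ for $j<i$. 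Your plan differs in a way that breaks: you want each gadget $I_i=[a_i,b_i]$ to \emph{contain} its own starting value, $s_i\in I_i$, with the $I_i$ pairwise disjoint, ordered, and separated by buffers. That is impossible for a general countable $S$: take $S=\mathbb{Q}\cap(0,1)$, whose points are dense, so no family of disjoint ordered intervals with gaps can have one $s_i$ in each. In the paper's construction the gadget for $s_i$ lives entirely at sizes $>\capacity_i>s_i$, so the $s_i$ themselves may be dense, accumulating, or unbounded without causing any trouble.

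Your worry about leakage between gadgets is legitimate, but your proposed fix (very low buffer densities, rapidly growing gaps) does not obviously close it: the buffer density cannot be chosen freely, since $v(c)=c\,d(c)$ must stay nondecreasing and $d$ nonincreasing, and --- more fundamentally --- gadget $I_i$ in your scheme is not tailored to the trajectory that $\greedycap(s_j,\rho)$ actually follows for $j\neq i$, so you would have to analyze the behavior of every seeded trajectory inside every $I_i$. The paper avoids this entirely. Gadget $i$ is built \emph{along} the trajectory of $\greedycap(s_i,\rho)$ on the instance constructed so far, starting from wherever that trajectory first reaches $\capacity_i$, and the gadget is engineered so that precisely this trajectory must fail inside it; trajectories from $s_j$ with $j<i$ have already failed below $\capacity_i$, and trajectories from $s_j$ with $j>i$ are dealt with later. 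If you drop the requirement $s_i\in I_i$ and instead place gadget $i$ on the fly, beyond all previous gadgets, along the $s_i$-trajectory, you essentially recover the paper's argument.
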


Crucially, while this gives a lower bound if we only allow rational starting values~$c_1 \in \mathbb{Q}$, transferring the lower bound back to \incmax requires excluding all reals.
Even though we are not able to achieve this, we can extrapolate our analysis in terms of complex calculus to any \contincmax algorithm.
With this, we beat the currently best known lower bound of~$2.18$ in~\cite{BernsteinDisserGrossHimburg/20}.

\begin{restatable}{theorem}{detLB}\label{thm:det_lower_bound}
	The \incmax\ problem has a competitive ratio of at least $2.246$.
\end{restatable}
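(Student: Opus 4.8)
\section*{Proof proposal for Theorem~\ref{thm:det_lower_bound}}

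The plan is to prove the bound in the continuous model, for \emph{every} algorithm at once, on a single adversarial instance, and then port it to \incmax. The adversary will use an instance whose density function $d$ is non-increasing with $c\mapsto c\,d(c)$ increasing, so that $\opt(k)=k\,d(k)$; on such an instance, an exchange argument (as behind Theorem~\ref{thm:GreedyCap_best_possible}) shows that it suffices to consider algorithms that acquire each modular subset completely and in increasing order of size, i.e.\ algorithms described by an increasing sequence $0<c_1<c_2<\cdots$ with \emph{arbitrary} starting value~$c_1$. Writing $C_i=\sum_{j\le i}c_j$ for the cardinality at which subset~$i$ is finished, the value held at cardinality $k\in[C_i,C_{i+1}]$ is $\max\{c_id(c_i),(k-C_i)d(c_{i+1})\}$, so $\rho$-competitiveness becomes two families of inequalities for every~$i$: a \emph{stalling} constraint (while the held value is still $c_id(c_i)$ during the acquisition of subset~$i+1$, the optimum must not exceed $\rho\,c_id(c_i)$) and a \emph{budget} constraint $\opt(C_{i+1})\le\rho\,c_{i+1}d(c_{i+1})$.

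Next, I would pass to the scale-invariant variables $r_i=c_{i+1}/c_i$ and $s_i=C_i/c_i$ (which satisfy $s_i=1+s_{i-1}/r_{i-1}$), and choose the adversarial density from a one-parameter family for which the two inequalities simplify into explicit bounds on $r_i$ in terms of $s_i$; admissibility (a nonempty range for $r_i$ together with $r_i\ge 1$) then confines $s_i$ to a bounded interval, and the greediest admissible choice of each $r_i$ turns the system into a recurrence relation that one linearizes at its fixed point. Its characteristic polynomial has coefficients depending on $\rho$ and the adversary's parameter, and the key computation is that its roots become \emph{non-real} precisely once $\rho$ drops below a threshold. In that regime the linearized solution oscillates with sign changes, so $s_i$ --- equivalently, the ratios $c_{i+1}/c_i$ --- cannot stay admissible (positive and increasing) for more than finitely many steps; truncating the instance there shows that no algorithm is $\rho$-competitive on it. Maximizing the threshold over the adversary's parameter yields the claimed value $\approx 2.246$, a root of an explicit low-degree polynomial.

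It remains to transfer the bound to \incmax. Replacing the continuous subset of each size~$c$ by a uniform modular subset of $\lceil c/\delta\rceil$ equal-valued elements and letting $\delta\to 0$ turns the \contincmax instance into a family of \incmax instances on which the above analysis degrades by only $O(\delta)$. Here the obstruction of Proposition~\ref{thm:lower_bound_greedycap} --- that defeating $\greedycap(c_1,\cdot)$ requires excluding uncountably many real starting values --- does not arise, precisely because the recurrence argument rules out \emph{all} increasing-sequence algorithms uniformly in~$c_1$, rather than one $\greedycap$ parameter at a time.

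The main obstacle is making the implication ``non-real characteristic roots $\Rightarrow$ no $\rho$-competitive algorithm'' valid for arbitrary algorithms and not merely for the greedy jump rule: a general admissible sequence need not saturate the recurrence, may leave slack in either constraint, and a fully general algorithm may even build subsets partially or out of order. One must argue that any such deviation only accelerates the escape of $s_i$ from the admissible interval, so that the complex-root regime genuinely excludes every algorithm; carrying this out while simultaneously optimizing the adversarial parameter to land on the exact constant $2.246$ is the technical core.
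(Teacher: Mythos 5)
Your proposal follows essentially the same route as the paper's proof: pass to the continuous model, set up a recurrence for the algorithm's choices, observe that below a threshold $\rho^*$ the characteristic polynomial acquires complex-conjugate roots, and argue that the resulting oscillation forces the sequence out of admissibility after finitely many steps, whence no $\rho$-competitive solution exists; then transfer the bound to \incmax via the discretization in Proposition~\ref{thm:CRC_to_RC}. The constant $2.246$ is indeed obtained as a root of a low-degree polynomial — in the paper it is the discriminant of a \emph{cubic} characteristic polynomial (a third-order linear recurrence), yielding the degree-six polynomial $-4\rho^6+24\rho^4-\rho^3-30\rho^2+31\rho-4$.

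The genuine gap is exactly the one you flag at the end: making the complex-root argument defeat \emph{every} algorithm, not just the one that saturates the recurrence. Your sketch says one must show that any slack or deviation ``only accelerates the escape'' from admissibility, but this is asserted rather than argued, and it is in fact the technical heart of the paper's proof. The paper does not argue monotonicity under deviations directly; instead it constructs the adversarial instance so rigidly that deviations are essentially unavailable. Concretely, it takes the sequence $(t_n)$ satisfying a third-order nonlinear recurrence, passes to $a_n=1/t_n$, and subtracts consecutive equations to obtain an exactly linear homogeneous third-order recurrence (your description ``linearize at its fixed point'' is not quite what happens — there is no linearization error to control). It then builds a piecewise-linear value function whose graph consists of plateaus on $[v_n/t_n,\,v_n/t_{n+1}]$ and ramps of slope $t_{n+1}$ on $[v_n/t_{n+1},\,v_{n+1}/t_{n+1}]$. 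On this staircase, every size an algorithm can usefully pick has density in the finite set $\{t_0,\dots,t_\ell\}$, and a careful induction (using Lemma~\ref{lem:competitive_equivalence} twice per step, once to rule out stopping short of $\frac{1}{\rho}\cdot\frac{v_n}{t_n}$ and once to rule out skipping ahead) pins $d(c_n)=t_n$ for every~$n$, after which the sign change in $t_{\ell+1}$ contradicts Lemma~\ref{lem:competitive_equivalence}(iii). Until you supply either this rigid instance construction or a rigorous monotonicity argument for arbitrary admissible sequences, the proposal is a plan rather than a proof.

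One further caveat on the transfer step: replacing each continuous subset by $\lceil c/\delta\rceil$ unit elements and letting $\delta\to 0$ is morally what Proposition~\ref{thm:CRC_to_RC} does, but that proposition's direction is that the competitive ratio of \incmaxsep is \emph{at least} that of \contincmax — exactly what you need for a lower bound — so invoking it directly is cleaner than re-deriving a $\delta$-discretization, and avoids having to re-establish that the $O(\delta)$ loss is uniform over all algorithms.
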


We can also apply our technique, specifically the reduction to separable problem instances and the structure of the \greedycap algorithm, to the analysis of randomized algorithms for\linebreak \incmax.
We employ a scaling approach based on the algorithms in~\cite{BernsteinDisserGrossHimburg/20}, combined with a randomized selection of the starting value~$c_1$ inspired by a randomized algorithm for the \textsc{CowPath} problem in \cite{KaoReifTate/93}.
The resulting algorithm has a randomized competitive ratio that beats our deterministic lower bound.

\begin{restatable}{theorem}{randUB}\label{thm:rand_upper_bound}
	\incmax admits a $1.772$-competitive randomized algorithm.
\end{restatable}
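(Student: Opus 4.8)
The plan is to build a randomized incremental algorithm that combines three ingredients already assembled in the paper: the reduction to \incmaxsep, the scaling technique of Bernstein et al.~\cite{BernsteinDisserGrossHimburg/20}, and a randomized choice of the starting value~$c_1$ in the spirit of the randomized \textsc{CowPath} strategy of~\cite{KaoReifTate/93}. First I would reduce to \incmaxsep, so that an instance consists of uniform modular subsets of various sizes with associated densities $d(c)$, and the objective is the maximum of the modular functions on these subsets; by the reduction, it suffices to bound the competitive ratio on such instances. In this setting, the \textsc{Alg} we analyze will, like \greedycap, add subsets in a sequence of increasing sizes, but the sizes will be determined by a fixed geometric scaling factor $\beta>1$: after committing to a subset of size~$c$, the algorithm next commits to the subset of size~$\beta c$ (or the largest available size below it), so the sizes form a geometric progression $c_1, \beta c_1, \beta^2 c_1, \dots$. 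The key twist is that $c_1$ is chosen randomly: writing $c_1 = \beta^{\,x}$ for a fixed base point, we draw the fractional offset~$x$ uniformly at random from $[0,1)$, which makes the (random) grid of committed sizes ``rotationally symmetric'' on the logarithmic scale.

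The core of the argument is the competitive-ratio computation. Fix a target cardinality $k$, and let $c^\star$ be the size of the subset achieving $\opt(k)$, so $\opt(k) = k\, d(c^\star)$ (after the reduction, $\opt(k)$ is attained by taking $k$ elements from a single subset, with value governed by that subset's density). The algorithm, by the time it has inserted $k$ elements, has fully committed to every subset of size at most some threshold depending on $k$ and on the random grid; in particular there is a committed size $c$ with $c \le c^\star$ and $c^\star / c \le \beta^{\,Y}$, where $Y$ is (up to rounding from the discreteness of available sizes) the random variable $\{x - \log_\beta c^\star\}$, i.e.\ uniform on $[0,1)$ by the choice of~$x$. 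Using accountability — crucially, the property that densities cannot increase too steeply, which is what the $\greedycap$ analysis and Theorem~\ref{thm:GreedyCap_phi+1-comp} exploit — one bounds $d(c^\star)$ in terms of $d(c)$, so that $f(\{E_1,\dots,E_k\}) \ge c\, d(c)$ compares to $\opt(k) = c^\star d(c^\star)$ with a loss factor that is a function of $\beta$ and the random offset only. Taking the expectation over $x \sim \mathrm{Unif}[0,1)$ turns the worst-case factor $\sup_{t\in[0,1)} g_\beta(t)$ into an integral $\int_0^1 g_\beta(t)\,dt$, which is strictly smaller; optimizing over the free parameter~$\beta$ then yields the numerical bound. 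The expected value in the denominator of the competitive ratio is handled by linearity, exactly as in the definition of randomized competitiveness in the introduction, so no concentration argument is needed.

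The last step is to plug in the optimal $\beta$ and verify the constant. I expect the optimization to reduce to minimizing, over $\beta>1$, an expression of the shape $\frac{1}{\ln\beta}\int_1^\beta h(u)\,\frac{du}{u}$ for an explicit elementary $h$ coming from the density bound; differentiating in $\beta$ gives a transcendental equation whose numerical solution yields a competitive ratio of at most $1.772$. One should also double-check the boundary cases: when the available sizes are sparse so that the geometric grid cannot be followed exactly, the largest available size below $\beta^i c_1$ is used, and one must argue this only helps (larger committed subsets, hence larger $f$-value) or costs a controlled amount; and when $\opt(k)$ is attained by a very small or very large subset relative to the current grid position, the uniform randomization still guarantees the claimed expected loss. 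The main obstacle, I anticipate, is getting the density-comparison inequality tight enough: accountability alone gives a one-step bound on how fast $d$ can grow, and chaining it over a geometric progression of sizes must be done carefully (this is where the $\greedycap$-style ``maintain $\rho$-competitiveness as long as possible'' reasoning enters) so that the integrand $g_\beta(t)$ is the true worst case rather than a lossy overestimate — any slack here directly inflates the final constant above $1.772$.
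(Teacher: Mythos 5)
Your algorithmic idea matches the paper's \randalg almost exactly: reduce to \incmaxsep, commit to subsets whose sizes form a geometric progression with ratio a tunable constant $r$ (the paper's $r\approx5.1646$, your $\beta$), randomize the starting point by drawing the fractional log-offset uniformly in $[0,1)$ (the \textsc{CowPath} trick), and bound the expected competitive ratio by an integral over the offset. So the high-level plan is right. However, the proof sketch has a genuine gap in the competitive analysis that would prevent you from getting anywhere near $1.772$.

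The problem is the lower bound you use for the algorithm's value at cardinality $k$. You write $f(\{E_1,\dots,E_k\})\geq c\,d(c)$, where $c$ is the largest \emph{fully committed} size with $c\leq c^\star$. This discards the contribution of the subset currently being added. In the separable setting the algorithm's value between two commitment events is governed by $\max\bigl\{v_{c_{i-1}},\,(\capacity-t_{i-1})\,d_{c_i}\bigr\}$, where $t_{i-1}=\sum_{j\le i-1}c_j$ is the number of elements already consumed; both terms in the max are essential. Keeping only the first term gives, in expectation over the offset, roughly $\int_0^1 r^{-t}\,dt=(1-1/r)/\ln r\le 1$, and with the additional loss $c_{i-1}/t_{i-1}\approx (r-1)/r$ from the elements spent on earlier (now useless) commitments you cannot beat a ratio of $2$. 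The paper's Lemma~\ref{lem:estimates_ratio_alg_opt} establishes exactly the two-term max bound (separately for the three subintervals that $\capacity$ can fall into, plus the $\lfloor\cdot\rfloor$-rounding of sizes), and Lemma~\ref{lem:estimate_sum-integrals_g(r)} then evaluates the resulting six-piece integral $I(k,\delta)$ and shows it is bounded below by the explicit function $g(r)$ through a five-way case analysis. This is where the constant comes from; it is not of the simple single-integral form $\frac{1}{\ln\beta}\int_1^\beta h(u)\,\tfrac{du}{u}$ that you anticipate.

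Two further points. First, the rounding to integer sizes $c_i=\lfloor r^{i+\epsilon}\rfloor$ is not a benign ``boundary case'': for small cardinalities ($\capacity<\sum_{i=0}^3 r^i$) the continuous analysis breaks down and the paper resorts to an explicit finite computation over the set $S_4$ of possible prefixes $(c_0,\dots,c_3)$. Second, accountability plays a more limited role than your sketch suggests: it is used once, in the reduction Proposition~\ref{prop:CR_incmax_incmaxsep} (via Lemma~\ref{lem:accountable_iff_good_ordering_exists}), and afterwards the separable instance is assumed (by Lemma~\ref{obs:incmaxsep_assumptions}) to have non-increasing densities and non-decreasing values, so no further ``density cannot grow too steeply'' argument is chained along the geometric progression. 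Your own closing caveat --- that slack in the density-comparison step ``directly inflates the final constant above $1.772$'' --- is accurate: as written, the sketch does not establish the claimed bound.
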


We complement this result with a lower bound via Yao's principle for separable instances of \incmax.

\begin{restatable}{theorem}{randLB}\label{thm:lb}
	Every randomized \incmax algorithm has competitive ratio at least~$1.447$.
\end{restatable}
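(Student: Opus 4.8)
The plan is to apply Yao's principle to \incmaxsep instances. For any such instance $I$ with positive objective, the mediant inequality combined with weak LP duality shows that the randomized competitive ratio of $I$ is at least
\[
 \sup_{p}\ \frac{\sum_{k}p_k\,\opt(k)}{\max_{\sigma}\ \sum_{k}p_k\,f(\{e_1,\dots,e_k\})},
\]
where the supremum is over probability distributions $p$ on the cardinalities $k\in[n]$ and $\sigma=(e_1,e_2,\dots)$ ranges over all orders of the ground set. Hence it suffices to exhibit an instance $I$ (or a family of them) together with a distribution $p$ for which this quantity is at least $1.447$ --- equivalently, for which no single deterministic order collects more than a $1/1.447$ fraction of the $p$-weighted optimum.

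For the hard instances I would take, as produced by our reduction to \incmaxsep, a ``chain'' $I_m$ built from uniform modular subsets $S_0,S_1,\dots,S_m$, where $S_j$ consists of $c_j$ elements of density $d_j$, the sizes $c_0<c_1<\dots<c_m$ grow geometrically, and the densities $d_j$ decrease so that the ``prize'' $d_jc_j$ for completely filling $S_j$ grows by a fixed factor per step; then $\opt$ essentially follows the prize sequence while reaching larger subsets costs proportionally more cardinality. This is the discrete shadow of the worst-case \contincmax instances underlying \Cref{thm:lower_bound_greedycap} and \Cref{thm:det_lower_bound}. The adversary's distribution $p$ would be supported on the cardinalities $k_j$ at which $S_j$ would just be completed if subsets were filled in increasing size order, with weights forming a geometric sequence tuned to the prize growth.

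The core is to upper bound $\max_{\sigma}\sum_k p_k\,f(\{e_1,\dots,e_k\})$. Since $f$ is a maximum of modular functions, the value at any prefix comes from the single subset that the prefix has invested in the most, so an exchange/domination argument reduces a best response to one that fills whole subsets consecutively and, against our choice of $p$, in increasing order of size. For such a canonical order the $p$-weighted collected value is a geometric sum, and optimizing the adversary's free parameters (the geometric ratios of sizes, densities, and probabilities) against this best deterministic response becomes a one-parameter calculus problem whose optimal value is $1.447$ in the limit $m\to\infty$ --- the same flavour of ratio/recurrence optimization that yields the randomized \textsc{CowPath} bound of~\cite{KaoReifTate/93} and the deterministic bounds of this paper. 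Letting $m$ grow then shows that no randomized algorithm beats $1.447-\varepsilon$ on $I_m$ for every $\varepsilon>0$, which gives the claim.

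I expect the main obstacle to be the bound on $\max_\sigma$ against \emph{arbitrary} orders: one must rule out that hedging --- partially and non-consecutively loading several subsets in parallel --- ever outperforms the canonical increasing-order strategy, and then choose the instance's geometric parameters and the distribution $p$ so that the resulting LP is balanced, i.e.\ so that every reasonable deterministic response collects exactly the same $p$-weighted value; any imbalance would either weaken the bound or fall short of optimality. A secondary nuisance is controlling the boundary effects at the two ends of the finite chain $I_m$ so that the clean geometric recurrence governs the interior.
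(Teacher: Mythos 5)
You correctly identify Yao's principle over \incmaxsep as the right tool, and your ratio-of-expectations form
\[
\sup_{p}\ \frac{\sum_{k}p_k\,\opt(k)}{\max_{\sigma}\ \sum_{k}p_k\,f(\{e_1,\dots,e_k\})}
\]
is the form that actually follows from the paper's definition of $\alpha$-competitiveness (which requires $\opt(k)/\mathbb{E}[f(\cdot)]\le\alpha$ for every $k$): average the per-$k$ inequalities $\mathbb{E}[\alg(k)]\ge\opt(k)/\rho$ with weights $p_k$ and pick a deterministic solution in the support that attains the resulting expectation of $\sum_k p_k\alg(k)$. The paper's own proof instead invokes Yao in the expectation-of-ratios form $\inf_\alg\sum_ip_i\,\opt(i)/\alg(i)$; that quantity does not in general lower-bound the ratio-of-expectations competitive ratio, so the version you write is the one that should be carried through.

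The genuine gap in your proposal is that the adversary construction and the constant $1.447$ are asserted, not derived. What the paper certifies is a single finite instance with $N=10$ sets, densities $(1,\tfrac12,\tfrac12,\tfrac12,\tfrac25,\tfrac13,\tfrac13,\tfrac13,\tfrac13,\tfrac13)$, and an adversary distribution supported only on the three cardinalities $\{1,4,10\}$ with weights $(0.132,\,0.395,\,0.473)$. That structure is decidedly non-geometric --- the densities have plateaus and the distribution is sparse --- so there is no basis for the claim that a one-parameter geometric chain $I_m$ converges to $1.447$ as $m\to\infty$, and you perform no calculation to support it. Moreover, the reduction of the best response $\max_\sigma$ to canonical increasing-size fill orders --- which you rightly flag as the main obstacle --- needs an argument tailored to the objective $\sum_k p_k\alg(k)$: canonical orders do not dominate arbitrary ones pointwise in $k$ (e.g.\ skipping a small set can strictly help at a specific $k$), so this is not a direct corollary of the earlier observation that was formulated for the single worst-case-ratio objective. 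To make the proposal into a proof you would need to (i) establish the reduction to the finite family $\mathcal{A}_N$ for the $p$-weighted sum objective, and (ii) exhibit a concrete instance and distribution and verify the finitely many constraints, as the paper does with its explicit $N=10$ LP certificate, rather than rely on an unoptimized geometric ansatz whose limiting value is only conjectured.
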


\textbf{Related work.}
Our work is based on the incremental maximization framework introduced by Bernstein et al.~\cite{BernsteinDisserGrossHimburg/20}. We provide a new structural understanding that leads to a better lower bound and new randomized bounds.

A similar framework is considered for matchings by Hassin and Rubinstein~\cite{HassinR02}. 
Here, the objective~$f$ is the total weight of a set of edges and the solution is additionally required to be a matching.
Hassin and Rubinstein~\cite{HassinR02} show that the competitive ratio in this setting is $\sqrt{2}$ and Matuschke, Skutella, and Soto~\cite{MatuschkeSS18} show that the randomized competitive ratio is $\ln(4) \approx 1.38$.
The setting was later generalized to the intersection of matroids~\cite{FujitaKM13} and to independence systems with bounded exchangeability~\cite{KakimuraM13,Mestre06}.
Note that, while our results hold for a broader class of objective functions, we require monotonicity of the objective and cannot model the constraint that the solution must be a matching.
We can, however, capture the matching problem by letting the objective~$f$ be the largest weight of a matching contained as a subset in the solution (i.e., not all parts of the solution need to be used).
That being said, it is easy to verify that the lower bound of $\sqrt{2}$ on the competitiveness of any deterministic algorithm in the setting of~\cite{HassinR02} also applies in our case.

Hassin and Segev~\cite{HassinS06} studied the problem of finding a small subgraph that contains, for all~$k$, a path (or tree) of cardinality at most $k$ with weight at least $\alpha$ times the optimal solution and show that for this $\alpha|V|/(1 - \alpha^2)$ edges suffice.
There are further results on problems where the items have sizes and the cardinality-constraint is replaced by a knapsack constraint \cite{DisserKMS17,DisserKW21,KlimmK22,MegowM13}.
Goemans and Unda~\cite{GoemansU17} studied general incremental maximization problems with a sum-objective.

Incremental \emph{minimization} problems further been studied for a variety of minimzation problems such as $k$-median \cite{ChrobakKNY08,MettuP03,LinNRW10}, facility location \cite{LinNRW10,Plaxton06}, and $k$-center \cite{Gonzalez85,LinNRW10}.
As noted by Lin et al.~\cite{LinNRW10}, the results for the minimum latency problem in \cite{BlumCCPRS94,GoemansK98} implicitly yield results for the incremental $k$-MST problem.
There are further results on incremental optimization problems where in each step the set of feasible solution increases \cite{HartlineS06,HartlineS07}.

\section{Separability of Incremental Maximization}\label{sec:separability}

As a first step to bound the competitive ratio of \incmax, we introduce a subclass of instances of a relatively simple structure, and show that it has the same competitive ratio as \incmax.
Thus, we can restrict ourselves to this subclass in  our search for bounds on the competitive ratio.

\begin{definition}
	An instance of \incmax with objective $f\colon 2^U\rightarrow\R_{>0}$ is called \emph{separable} if there exist a partition $U=R_1\cup R_2\cup \dots$ of~$U$ and values $d_i>0$ such that
	\[
	f(X)=\max_{i\in\N} \{|X\cap R_i|\cdot d_i\} \quad \text{for all $X\subseteq U$}.
	\]
	We refer to~$d_i$ as the \emph{density} of set~$R_i$ and to $v_i:=|R_i|\cdot d_i$ as the \emph{value} of set~$R_i$.
	The restriction of \incmax to separable instances will be denoted by \incmaxsep.
\end{definition}

We start our analysis of \incmaxsep with the following immediate observation.

\begin{lemma}\label{obs:incmaxsep_assumptions}
	Any instance of \incmaxsep can be transformed into one with the same or a worse competitive, that satisfies the following properties.
	\begin{enumerate}
		\item There is exactly one set of every cardinality, i.e., $|R_i|=i$.
		
		\item Densities are decreasing, i.e., $1 \geq d_1 \geq d_2 \geq \dots$.
		
		\item Values are increasing, i.e., $v_1 \leq v_2 \leq \dots$.
	\end{enumerate}
\end{lemma}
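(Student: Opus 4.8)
The strategy is to handle the three properties one at a time, each via an explicit transformation of the instance that does not decrease the competitive ratio — i.e., every $\alpha$-competitive solution for the transformed instance yields an $\alpha$-competitive solution for the original one, so the competitive ratio can only go up.

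For property~1 (exactly one set of each cardinality), I would first discard the empty sets $R_i$ with $|R_i|=0$ (they contribute nothing to $f$). If two sets $R_i, R_j$ have the same cardinality, note that only the one of larger density ever matters in the max defining $f$ and in $\opt(k)$; so deleting the smaller-density one leaves $f$ and all $\opt(k)$ unchanged, and any solution for the reduced instance extends trivially. Finally, if no set of cardinality $k$ is present but a set of cardinality $k' > k$ is, we can add a fresh set $R$ of cardinality $k$ with density equal to $d_{k'}$ where $k'$ is the smallest present cardinality exceeding $k$ (or simply a tiny density if there is no structural reason to care); this only adds options, and since it does not increase any $\opt(k)$ beyond what already-present larger sets guarantee — one has to check this carefully — it does not help the algorithm. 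Reindexing so that $|R_i| = i$ gives property~1. The subtle point is ensuring the added small sets do not accidentally make the instance \emph{easier}; the cleanest fix is to give each new $R_i$ the density of the next larger original set, so $\opt$ is unchanged.

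For property~2 (decreasing densities), assuming property~1 already holds: if $d_i < d_{i+1}$ for some $i$, then the set $R_i$ is dominated — a prefix of the modular function on $R_{i+1}$ of size $i$ already achieves value $i\cdot d_{i+1} > i \cdot d_i = v_i$. I would replace $d_i$ by $d_{i+1}$ (equivalently, raise the density of the smaller set up to that of the next one). This can only increase $f$ pointwise, hence increase each $\opt(k)$; but one must check it does not increase $\opt(k)$ faster than it helps the algorithm. The right way to see this is that after the replacement the algorithm can still mimic any solution it had before, and $v_i$ increases toward $v_{i+1}$, so iterating this (finitely many times locally, or as a monotone limit) yields $d_1 \ge d_2 \ge \cdots$; the normalization $d_1 \le 1$ is obtained by scaling $f$ by $1/d_1$, which does not change competitive ratios at all. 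For property~3 (increasing values), observe that if $v_i > v_{i+1}$, then $R_{i+1}$ is useless: any prefix of $R_i$ of size $\le i$ dominates the corresponding prefix of $R_{i+1}$ pointwise (since $d_i \ge d_{i+1}$ by property~2) and $R_i$ also has larger total value, so we may delete $R_{i+1}$, then reindex and repeat. Hence in the surviving instance $v_1 \le v_2 \le \cdots$.

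The main obstacle is getting the direction of the reductions exactly right: each transformation must provably not \emph{decrease} the competitive ratio, and the danger is that "adding options" or "raising a density" silently also raises $\opt$ in a way that makes the instance strictly easier. I would address this by, in each case, exhibiting an explicit map from solutions of the new instance to solutions of the old instance that preserves the competitive guarantee for every $k$ — this is routine once the transformation is pinned down, but it is where all the care lies. The three transformations must also be applied in the order 1, 2, 3, since each later one relies on the structural property established by the earlier ones; one should remark that applying transformation~2 or~3 does not destroy property~1 (they only delete sets or change densities, not cardinalities), so the composition is well-defined.
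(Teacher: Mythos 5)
Your plan follows essentially the same route as the paper: transform the instance step by step via dominance arguments (drop the lower-density duplicate, fill in missing cardinalities with a dummy set that is dominated by an existing one, rescale so $d_1\le 1$, and argue that a set violating density- or value-monotonicity is never useful), while checking that $\opt(k)$ is unchanged and that solutions of the transformed instance map back to equally good solutions of the original. Your choice to give a newly added set of cardinality $i$ the density of the next larger existing set is a harmless variant of the paper's choice (the paper gives it the value $v_{i-1}$ of the next smaller set, except for $i=1$ where it uses $d_2$ exactly as you do); both keep $\opt$ unchanged and make the new set dominated.

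One concrete slip: in your treatment of property~3 you propose to \emph{delete} $R_{i+1}$ when $v_i>v_{i+1}$ ``then reindex and repeat,'' and you assert that transformations 2 and 3 preserve property~1 because they ``only delete sets or change densities, not cardinalities.'' Deleting a set does destroy property~1 (the cardinality $i+1$ is then missing), and reindexing cannot repair this, since the cardinality of a set is part of the instance and cannot be renamed. The repair is immediate and stays within your own framework: either raise the density of $R_{i+1}$ so that $v_{i+1}=v_i$ (this keeps one set per cardinality, preserves $d_{i+1}\le d_i$, leaves every $\opt(k)$ unchanged since $R_i$ already achieves value $v_i$ with $i\le k$ elements, and keeps $R_{i+1}$ dominated so solutions map back), or delete and then re-run your transformation~1 to refill the gap. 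With that correction your argument matches the paper's proof in substance.
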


\begin{proof}
	We will show that every instance that does not satisfy the assumptions can be transformed into one that does, without changing the optimum value for any size, and without changing the value of the best incremental solution.
	Thus the competitive ratio of the two instances coincide.
	
	If there are two sets $R_i,R_j$ with $|R_i|=|R_j|$, it only makes sense to consider the one with higher density, as every solution adding elements from the set of lower density can be improved by adding elements from the other set instead.
	If there is $i\in\N_{\geq2}$ such that there is no set with~$i$ elements, we can add a new set with~$i$ elements to the instance.
	This new set will have value $v_{i-1}$ .
	Then, every solution that adds elements from the newly introduced set can be improved by adding elements from set $R_{i-1}$ instead.
	If there is no set $R_1$ with $1$ element, we can introduce it with density~$d_2$.
	Then, every solution that adds this one element can instead also add one element from~$R_2$.
	Thus, the first assumption can be made.
	
	The assumption that $1\geq d_1$ is without loss of generality by rescaling the objective~$f$.
	If there was $i\in\N$ with $d_i<d_{i+1}$, every solution to the problem instance that adds elements from the set~$R_i$ could be improved by adding elements from the set~$R_{i+1}$ instead.
	Since $|R_{i+1}|\geq|R_i|$, this is possible.
	
	The third assumption can be made because, if there was $i\in\N$ with $v_i >v_{i+1}$, a solution that adds elements from~$R_{i+1}$ can be improved by adding elements from~$R_i$ instead.
\end{proof}

In the following, we assume that every instance satisfies the properties from Lemma~\ref{obs:incmaxsep_assumptions}.

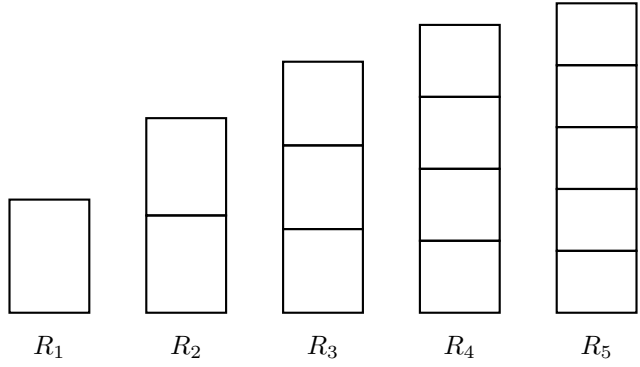
\begin{figure}[t]
	\begin{center}
		\begin{tikzpicture}[scale=1.5]
			\def\columnwidth{0.7}
			\def\factor{0.86}
			\def\heightA{1}
			\def\heightB{\factor}
			\def\heightC{\factor*\factor}
			\def\heightD{\factor*\factor*\factor}
			\def\heightE{\factor*\factor*\factor*\factor}
			
			\def\xOffset{0*(\columnwidth+0.5)}
			\node at ({\xOffset+0.5*\columnwidth},-0.3) {$R_1$};
			\foreach \x in {1}{
				\draw[black,thick] ({\xOffset},{(\x-1)*\heightA}) rectangle ({\xOffset+\columnwidth},{\x*\heightA});
			}
			
			\def\xOffset{1*(\columnwidth+0.5)}
			\node at ({\xOffset+0.5*\columnwidth},-0.3) {$R_2$};
			\foreach \x in {1,2}{
				\draw[black,thick] ({\xOffset},{(\x-1)*\heightB}) rectangle ({\xOffset+\columnwidth},{\x*\heightB});
			}
			
			\def\xOffset{2*(\columnwidth+0.5)}
			\node at ({\xOffset+0.5*\columnwidth},-0.3) {$R_3$};
			\foreach \x in {1,2,3}{
				\draw[black,thick] ({\xOffset},{(\x-1)*\heightC}) rectangle ({\xOffset+\columnwidth},{\x*\heightC});
			}
			
			\def\xOffset{3*(\columnwidth+0.5)}
			\node at ({\xOffset+0.5*\columnwidth},-0.3) {$R_4$};
			\foreach \x in {1,2,3,4}{
				\draw[black,thick] ({\xOffset},{(\x-1)*\heightD}) rectangle ({\xOffset+\columnwidth},{\x*\heightD});
			}
			
			\def\xOffset{4*(\columnwidth+0.5)}
			\node at ({\xOffset+0.5*\columnwidth},-0.3) {$R_5$};
			\foreach \x in {1,2,...,5}{
				\draw[black,thick] ({\xOffset},{(\x-1)*\heightE}) rectangle ({\xOffset+\columnwidth},{\x*\heightE});
			}
		\end{tikzpicture}
	\end{center}
	\caption{Illustration of an instance of \incmaxsep with $N=5$ sets. Each set $R_i$ consists of~$i$ elements. The height of the elements represents their value. As in Lemma~\ref{obs:incmaxsep_assumptions}, the values of the single elements becomes less the larger~$i$ is, while the value of the whole set~$R_i$ increases.}
\end{figure}
\begin{definition}
	We say that a solution for \incmaxsep is \emph{represented} by a sequence of sizes $(c_{1},c_{2},\dots)$ if it first adds all elements from the set $R_{c_1}$, then all elements from the set $R_{c_2}$, and so on.
\end{definition}
A solution of \incmaxsep can only improve if it is altered in a way that it chooses sets $R_{c_1},R_{c_2},\dots$ and adds all elements of these sets, one after the other.
If not all elements of one set are added, the solution does not degrade if a smaller set is added instead because the density of the smaller set is at least as large as the density of the larger set.
Adding all elements of one set consecutively is better because the value of the solution increases faster this way.

\begin{lemma}[\cite{BernsteinDisserGrossHimburg/20}, Observation 2]
	There is an algorithm achieving the best-possible competitive ratio for \incmaxsep such that the solution generated by this algorithm can be represented by a sequence $(c_{1},c_{2},\dots)$.
	We can assume that $v_{c_{i}}<v_{c_{i+1}}$ and thus, since the values $(v_i)_{i\in\N}$ are non-decreasing, $c_{i}<c_{i+1}$ for all $i\in\N$.
\end{lemma}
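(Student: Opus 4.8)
The lemma makes precise the exchange reasoning sketched just above it, so the plan is to formalize that reasoning as two ratio-non-increasing transformations and then invoke the existence of an optimal algorithm for \incmaxsep. Throughout I use the three normalizations from Lemma~\ref{obs:incmaxsep_assumptions}: $|R_i| = i$, the densities $d_1 \ge d_2 \ge \cdots$, and the values $v_1 \le v_2 \le \cdots$.

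\emph{Transformation 1: a solution may be assumed to process the sets one at a time.} Let $\sigma = (e_1, e_2, \dots)$ be any solution and set $X_k = \{e_1, \dots, e_k\}$, so that $f(X_k) = \max_i |X_k \cap R_i|\, d_i$. There are two moves. (i) If some set $R_a$ is only ever partially present in the prefixes of the relevant range, then those $m < a$ elements contribute only $m\, d_a \le m\, d_m$ to any prefix containing them; reordering so that the $m = |R_m|$ elements of the smaller set $R_m$ occupy those slots (pushing the partial $R_a$-elements back to join the rest of $R_a$, and merging with any pre-existing occurrence of $R_m$) does not decrease $f(X_k)$ for any $k$. Hence we may assume each set is entirely present or entirely absent from each prefix. (ii) Among the fully-used sets, the additions belonging to one set may be made consecutively: at any prefix, $f(X_k)$ equals the contribution $|X_k \cap R_i|\, d_i$ of a \emph{single} set, so additions to other sets do not raise it, and advancing the additions of the currently relevant set while deferring the others only increases $f(X_k)$. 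Applying (i) and (ii) repeatedly (on longer and longer prefixes, passing to the limit) produces a solution represented by a sequence of sizes $(c_1, c_2, \dots)$.

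\emph{Transformation 2: the represented sizes may be assumed to have strictly increasing values.} Suppose $v_{c_i} \ge v_{c_{i+1}}$ for some $i$. During the processing of block $i+1$ the value $f(X_k)$ never exceeds $v_{c_i}$, which is already reached at the end of block $i$; so deleting block $i+1$ and advancing all subsequent blocks by $c_{i+1} = |R_{c_{i+1}}|$ positions leaves $f(X_k)$ unchanged or larger for every $k$, and hence does not worsen the competitive ratio. Iterating removes every non-increase and leaves $v_{c_1} < v_{c_2} < \cdots$. Finally, since the values are non-decreasing in the index, $v_{c_i} < v_{c_{i+1}}$ forces $c_i < c_{i+1}$ for all $i$.

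Since neither transformation ever increases $\opt(k)/f(X_k)$ on any instance, applying both to the solutions of an algorithm attaining the best-possible competitive ratio for \incmaxsep (such an algorithm exists by \cite{BernsteinDisserGrossHimburg/20}) yields an optimal algorithm all of whose solutions are represented by strictly value-increasing sequences of sizes. I expect Transformation~1 to be the main obstacle: the two moves have to be carried out coherently over the whole infinite solution, and one must check that $\rho$-competitiveness --- a constraint over \emph{all} cardinalities simultaneously --- survives the limiting process; Transformation~2 and the final implication $c_i < c_{i+1}$ are then routine.
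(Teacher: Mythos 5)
The paper does not prove this lemma; it attributes it to \cite{BernsteinDisserGrossHimburg/20} (Observation~2) and supplies only the two-sentence exchange sketch immediately above the statement (``replace a partial set by a smaller full set of at least the same density'' and ``add each chosen set consecutively''). Your Transformations~1 and~2 are a faithful expansion of exactly that sketch, so in substance you follow the same route. Two small points worth making precise, since you flagged them as the weak spots yourself. First, in step~(i) the ``merging with any pre-existing occurrence of $R_m$'' must be carried out by \emph{moving} $R_m$ forward into the slots vacated by the partial $R_a$-block (rather than simply deleting the partials and letting everything shift): with that reading one checks directly, by splitting $k$ into ranges relative to the old positions of the partial $R_a$-block and of $R_m$, that $f(X_k)$ does not decrease at any~$k$, because for every $k$ the new prefix contains at least as many elements of each $R_i$ with $i\ne a$ as the old one did, and the lost $R_a$-contribution $s\,d_a$ (with $s\le m$) is dominated by the $R_m$-contribution, which is now at least $s\,d_m\ge s\,d_a$. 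A pure delete-and-shift, by contrast, is \emph{not} pointwise non-decreasing in general, so the moving is essential. Second, the limiting concern is benign: every application of~(i) permanently removes one partially-used set, and applications of~(ii) only rearrange within a fixed multiset of sizes, so on each finite prefix the process stabilizes after finitely many steps, and a standard diagonal argument extends this to the whole (countable) solution while preserving $\rho$-competitiveness, which is a per-$k$ and hence prefix-wise condition.
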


From now on, we will only consider solutions of this form and denote a solution~$X$ by the sequence it is represented by, i.e., $X=(c_{1},c_{2},\dots)$. For a size \mbox{$\capacity\in\N$}, we denote by~$X(\capacity)$ the first~$C$ elements added by~$X$, i.e., $|X(\capacity)|=\capacity$ and, with $O_i:=\arg\max\{f(S)\mid S\subseteq U,|S|=i\}$, we have $X\bigl(\sum_{i=1}^k c_i\bigr) = \bigcup_{i=1}^k O_{c_i}$.

In order to compare the competitive ratios of \incmax and \incmaxsep, we need the following lemma which gives an intuition why accountability is a desirable property for the objective of \incmax to have.

\begin{lemma}\label{lem:accountable_iff_good_ordering_exists}
	A function $f\colon U\rightarrow\R_{\geq0}$ is accountable if and only if, for every $X\subseteq U$ and with \mbox{$n:=|X|\in\N$}, there exists an ordering $(e_{1},\dots,e_{n})$ of $X$ with $f(\{e_{1},\dots,e_{i}\})\geq\frac{i}{n}f(X)$ for all $i\in\{1,\dots,n\}$.
\end{lemma}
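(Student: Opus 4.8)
The claim is an equivalence, so I would prove the two directions separately, and the natural approach is a greedy/induction argument on $|X|$ for each direction.

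For the ``if'' direction, suppose every subset admits an ordering with the stated prefix guarantees. Fix $X$ with $n = |X|$, and apply the hypothesis to $X$ itself to obtain an ordering $(e_1,\dots,e_n)$ with $f(\{e_1,\dots,e_i\}) \geq \tfrac{i}{n} f(X)$ for all $i$. Taking $i = n-1$ gives a single element, namely $e_n$, with $f(X \setminus \{e_n\}) = f(\{e_1,\dots,e_{n-1}\}) \geq \tfrac{n-1}{n} f(X) = f(X) - f(X)/n$. Since this works for every $X$, accountability follows immediately. This direction is essentially a one-line unpacking of the definition.

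For the ``only if'' direction, suppose $f$ is accountable. I would construct the ordering of $X$ \emph{in reverse}, peeling off elements from the back. Formally, I would prove by induction on $m$ the statement: for every $Y \subseteq U$ with $|Y| = m$, there is an ordering $(e_1,\dots,e_m)$ of $Y$ such that $f(\{e_1,\dots,e_i\}) \geq \tfrac{i}{m} f(Y)$ for all $i \in \{1,\dots,m\}$. The base case $m = 1$ is trivial. For the inductive step with $|Y| = m$, apply accountability to $Y$ to obtain $e_m \in Y$ with $f(Y \setminus \{e_m\}) \geq f(Y) - f(Y)/m = \tfrac{m-1}{m} f(Y)$. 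Set $Y' := Y \setminus \{e_m\}$, so $|Y'| = m-1$, and apply the induction hypothesis to $Y'$ to get an ordering $(e_1,\dots,e_{m-1})$ of $Y'$ with $f(\{e_1,\dots,e_i\}) \geq \tfrac{i}{m-1} f(Y')$ for all $i \leq m-1$. Then the concatenated ordering $(e_1,\dots,e_{m-1},e_m)$ of $Y$ works: for $i = m$ the bound is $f(Y) \geq f(Y)$, and for $i \leq m-1$ we combine the two inequalities, using $f(Y') \geq \tfrac{m-1}{m} f(Y)$, to get $f(\{e_1,\dots,e_i\}) \geq \tfrac{i}{m-1} f(Y') \geq \tfrac{i}{m-1} \cdot \tfrac{m-1}{m} f(Y) = \tfrac{i}{m} f(Y)$. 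Applying this to $Y = X$ yields the claim.

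The only mild subtlety — and the one point worth stating carefully — is that in the inductive step the weakened target for $Y'$ is measured against $f(Y')$, not against $f(Y)$, so one must invoke the accountability inequality $f(Y') \geq \tfrac{m-1}{m} f(Y)$ a second time to ``convert'' the bound back to $f(Y)$; monotonicity of $f$ is not even needed here since accountability already delivers the required lower bound on $f(Y')$. I expect no real obstacle beyond keeping this bookkeeping straight.
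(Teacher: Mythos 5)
Your proof is correct, and it takes essentially the same approach as the paper: the ``if'' direction is the same one-line specialization to $i=n-1$, and the ``only if'' direction peels elements off the back via accountability and telescopes the resulting factors $\tfrac{i}{i+1}$. The only presentational difference is that you phrase the telescoping as a formal induction on $|Y|$, whereas the paper constructs the nested sets $X_n\supset X_{n-1}\supset\cdots\supset X_1$ explicitly and then chains the inequalities in one display; these are the same argument.
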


\begin{proof}
	``$\Leftarrow$'': Let $X\subseteq U$.
	There is an ordering $(e_{1},\dots,e_{n})$ of $X$ with $f(\{e_{1},\dots,e_{i}\})\geq\frac{i}{n}f(X)$ for all $i\in\{1,\dots,n\}$.
	Especially, we have
	\[
	f(X\setminus\{e_{n}\})=f(\{e_{1},\dots,e_{n-1}\})\geq\frac{n-1}{n}f(X)=f(X)-\frac{f(X)}{|X|},
	\]
	i.e., $f$ is accountable.
	
	``$\Rightarrow$'': Let $f$ be accountable and $X_{n}:=X$.
	We define $X_{n-1},\dots,X_{1}$ recursively.
	Suppose, for \mbox{$i\in\{1,\dots,n-1\}$}, the set $X_{i+1}$ is defined and $|X_{i+1}|=i+1$.
	By accountability of $f$, there exists $e\in X_{i+1}$ with
	\[
	f(X_{i+1}\setminus\{e\})\geq f(X_{i+1})-\frac{f(X_{i+1})}{|X_{i+1}|}=\frac{i}{i+1}f(X_{i+1}).
	\]
	Let $X_{i}:=X_{i+1}\setminus\{e\}$.
	We define the ordering $(e_{1},\dots,e_{n})$ to be the unique ordering of $X$ such that $X_{i}=\{e_{1},\dots,e_{i}\}$ for all $i\in\{1,\dots,n\}$.
	Then, for $i\in\{1,\dots,n-1\}$, we have
	
	\begin{equation*}
		f(\{e_{1},\dots,e_{i}\}) = f(X_{i}) \geq \frac{i}{i+1}f(X_{i+1}) \geq \dots \geq \frac{i}{i+1}\cdot\dots\cdot\frac{n-1}{n}f(X_{n}) = \frac{i}{n}f(X).\qedhere
	\end{equation*}
\end{proof}

With the help of this lemma, we are ready to show the following.

\begin{proposition}\label{prop:CR_incmax_incmaxsep}
	The competitive ratios of \incmax and \incmaxsep coincide.
\end{proposition}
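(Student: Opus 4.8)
The plan is to show two inequalities: the competitive ratio of \incmaxsep is at most that of \incmax (trivial, since \incmaxsep is a subclass), and the competitive ratio of \incmax is at most that of \incmaxsep (the substantive direction). For the latter, I would take an arbitrary instance of \incmax with accountable objective $f$ on ground set $U$, together with the optimal sets $O_1,O_2,\dots$ where $O_k$ attains $\opt(k)$, and build from it a separable instance whose competitive ratio is no smaller, so that any algorithm for \incmaxsep yields, via this correspondence, an algorithm for \incmax with the same guarantee.

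First I would construct the separable instance: for each $k\in\N$, introduce a set $R_k$ of $k$ fresh elements with density $d_k := \opt(k)/k$, so that the value of $R_k$ is exactly $v_k = \opt(k)$. This gives a valid \incmaxsep instance (after possibly invoking Lemma~\ref{obs:incmaxsep_assumptions} to enforce monotonicity of densities/values, which only weakens the competitive ratio). The optimum of the separable instance at cardinality $k$ is $\max_{i} \min\{k,i\}\cdot d_i$; one checks this equals $\opt(k)$ for the original instance, using that $\opt$ is nondecreasing and that $i\cdot d_i = \opt(i)$. Next, given any $\alpha$-competitive algorithm for the separable instance, represented by a sequence $(c_1,c_2,\dots)$, I produce a solution for the original \incmax instance as follows: process the blocks $O_{c_1}, O_{c_2},\dots$ in order, and within the block $O_{c_j}$ use the accountable ordering guaranteed by Lemma~\ref{lem:accountable_iff_good_ordering_exists}, so that after adding the first $i$ elements of $O_{c_j}$ the value is at least $\tfrac{i}{c_j}\opt(c_j)$. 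Since $f$ is monotone, the value of the partial solution of the \incmax instance at any prefix is at least the value of the corresponding partial solution of the separable instance (which adds blocks $R_{c_1},R_{c_2},\dots$ with the same "fill rate" $d_{c_j}$ within each block). Hence the \incmax solution is also $\alpha$-competitive against the common optimum profile $\opt(\cdot)$, and taking infima over algorithms gives the inequality.

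Conversely — and this closes the circle — for the easy direction one just observes that a separable instance is in particular an \incmax instance whose objective is monotone and accountable (each $R_i$ is modular, and the max of accountable functions on disjoint supports is accountable; alternatively verify the accountability inequality directly), so the competitive ratio of \incmax is at most that of \incmaxsep is false in that direction — rather, the competitive ratio of \incmaxsep, being an infimum over a restricted instance family, is trivially $\le$ that of \incmax. Combining with the previous paragraph yields equality.

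\textbf{Main obstacle.} The delicate point is verifying that the constructed separable instance has optimum profile exactly $\opt(k)$ for all $k$ and that the translation of solutions does not lose value — in particular, that using the accountable ordering inside each block $O_{c_j}$ dominates, prefix by prefix, the linear growth $i\mapsto i\cdot d_{c_j}$ of the separable block, and that concatenating blocks preserves this domination via monotonicity of $f$ (the \incmax solution never does worse because earlier blocks' elements are still present, whereas the separable objective only sees one $R_i$ at a time). One must also handle the bookkeeping of Lemma~\ref{obs:incmaxsep_assumptions}'s normalization and the fact that the optimal sets $O_k$ may overlap in $U$ — this is harmless since the separable instance uses disjoint copies, and the \incmax solution is allowed to re-use or ignore elements freely because only the best contained structure counts.
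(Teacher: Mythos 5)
Your proposal is correct and follows essentially the same route as the paper: construct the separable instance with $|R_k|=k$ and $d_k=\opt(k)/k$, take a competitive solution $(c_1,c_2,\dots)$ of it, and translate it back to the \incmax instance by adding the blocks $O_{c_1},O_{c_2},\dots$ each in the accountable ordering of Lemma~\ref{lem:accountable_iff_good_ordering_exists}, using monotonicity of $f$ to match the separable solution's value prefix by prefix. The only cosmetic difference is that the paper does not need the normalization step of Lemma~\ref{obs:incmaxsep_assumptions}, since accountability already forces $\opt(i)/i$ to be nonincreasing and monotonicity forces $\opt(i)$ to be nondecreasing.
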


\begin{proof}
	As \incmaxsep is a subclass of \incmax, the competitive ratio of \incmaxsep is smaller or equal to that of \incmax.
	
	To show the other direction, let $f\colon U\rightarrow\R_{\geq0}$ be the accountable objective of some instance of \incmax.
	We will construct an instance of \incmaxsep such that every $\rho$-competitive solution of this problem instance induces a $\rho$-competitive solution for the initial instance of \incmax with objective~$f$.
	
	Recall that $\opt(i)=\max\{f(X)\mid |X|=i, X\subseteq U\}$. Let $O_i\subseteq U$ with $|O_i|=i$ such that $f(O_i)=\opt(i)$.
	
	Now, we define the instance of \incmaxsep.
	Let $n:=|U|$ and let $R_{1},\dots,R_{n}$ be disjoint sets such that, for $i\in\{1,\dots,n\}$, $|R_{i}|=i$.
	For $i\in\{1,\dots,n\}$, let $d_i:=\opt(i)/i$, i.e., we have $v_i=\opt(i)$.
	Let the objective of this problem be denoted by $f_{\text{sep}}$.
	
	Let $X=(c_{1},\dots,c_{n})$ be a $\rho$-competitive solution of the separable problem instance that we just defined.
	We define a solution~$\tilde{X}$ to the \incmax\ problem as follows.
	First, we add all elements from the set~$O_{c_1}$, then all elements from the set~$O_{c_2}$ and so on, until we added all optimal solutions $O_{c_1},\dots,O_{c_n}$.
	For $i\in\{1,\dots,n\}$, the elements of the set~$O_{c_i}$ are added in the order given by Lemma~\ref{lem:accountable_iff_good_ordering_exists}.
	Fix a size $\capacity\in\{1,\dots,n\}$.
	Let $i\in\{1,\dots,n\}$ be such that the last element added to the solution $\tilde{X}(\capacity)$ is from~$O_{c_i}$.
	Note that, for size~$\capacity$,~$\tilde{X}$ has added the optimal solution of size~$c_{i-1}$ completely and $\smash{\capacity-\sum_{j=1}^{i-1}c_j}$ elements from the optimal solution of size~$c_i$.
	By Lemma~\ref{lem:accountable_iff_good_ordering_exists} and monotonicity of~$f$, the value of the solution~$\tilde{X}(\capacity)$ is
	\begin{equation}\label{eq:lem_incmax_RC_1}
	f(\tilde{X}(\capacity))=\max\Bigl\{ \opt(c_{i-1}),\frac{\capacity-\sum_{j=1}^{i-1}c_j}{c_i}\opt(c_i) \Bigr\}.
	\end{equation}
	Similar to~$\tilde{X}$, the solution~$X(\capacity)$ of the separable problem instance contains all elements from the set~$R_{c_{i-1}}$ and $\capacity-\sum_{j=1}^{i-1}c_j$ elements from the set~$R_{c_i}$.
	Thus, the value of the solution~$X(\capacity)$ is
	\[
	f_{\text{sep}}(X(\capacity)) = \max\Bigl\{ v_{c_{i-1}},\bigl(\capacity-\sum_{j=1}^{i-1}c_j\bigr)d_{c_i} \Bigr\} = \max\Bigl\{ v_{c_{i-1}},\frac{\capacity-\sum_{j=1}^{i-1}c_j}{c_i}v_{c_i} \Bigr\}.
	\]
	Combining this with the fact that $v_{c_j}=\opt(c_j)$ for all $j\in\{1,\dots,n\}$ as well as with~\eqref{eq:lem_incmax_RC_1}, we obtain
	\[
	f(\tilde{X}(\capacity)) = f_{\textsc{RC}}(X(\capacity)) \geq \frac{1}{\rho}v(\capacity) = \frac{1}{\rho}\opt(\capacity).\qedhere
	\]
\end{proof}

\section{Continuization Results}\label{sec:continuization}

In order to find lower bounds on the competitive ratio of \incmaxsep, we transform the problem into a continuous one.

\begin{definition}
	In the \contincmax problem, we are given a \emph{density function} \mbox{$d\colon\R_{\geq0}\rightarrow(0,1]$} and a \emph{value function} $v(c):=cd(c)$.
	As for the discrete problem, we denote an incremental solution~$X$ for \contincmax by a sequence of sizes $X=(c_1,c_2,\dots)$.
	For a given size~$c\geq0$, we denote the solution of this size by $X(c)$.
	With $n\in\N$ such that $\sum_{i=1}^{n-1}c_i<c\leq \sum_{i=1}^{n}c_i$, the value of $X(c)$ is defined as
	\[
	f(X(c)):=\max\Biggl\{ \max_{i\in\{1,\dots,n-1\}}v(c_i),\Biggl(c-\sum_{i=1}^{n-1}c_i\Biggr)d(d_n) \Biggr\}.
	\]
	An incremental solution~$X$ is $\rho$-competitive if $\rho\cdot f(X(c))\geq v(c)$ for all $c>0$.
	The competitive ratio of~$X$ is defined as $\inf\{ \rho\geq1\mid X\text{ is }\rho\text{-competitive} \}$.
\end{definition}

The interpretation of the functions~$d$ and~$v$ is that the instance is partitioned into sets, one for every positive size $c \in \mathbb{R}$, each consisting of~$c$ fractional units with a value of $d(c)$ per unit, for a total value of $v(c)$ for the set.
The solution can be interpreted in the following way:
It starts by adding the set of size~$c_1$, then the set of size~$c_2$, and so on.
With $n\in\N$ such that $\sum_{i=1}^{n-1}c_i<c\leq \sum_{i=1}^{n}c_i$, the solution $X(c)$ has added all of the sets of sizes $c_1,\dots,c_{n-1}$ and $c-\sum_{i=1}^{n-1}c_i$ units of the set of size~$c_n$.
Unlike the \incmaxsep problem, the \contincmax problem includes subsets of all real sizes instead of only integer sizes and, furthermore, allows fractional elements to be added to solutions instead of only an integral number of elements.

As for the discrete version of the problem, without loss of generality, we assume that the density function~$d$ is non-increasing and the value function~$v$ is non-decreasing.
These assumptions imply that~$d$ is continuous:
If this was not the case and~$d$ was not continuous for some size~$c'$, i.e., $\lim_{c\nearrow c'}d(c)>\lim_{c\searrow c'}d(c)$, then $\lim_{c\nearrow c'}v(c)>\lim_{c\searrow c'}v(c)$ by definition of $v$, i.e., $v$ would not be increasing in $c$.
So $d$ is continuous, and, by definition of $v$, also $v$ is continuous.
Furthermore, without loss of generality, we assume that $d(0)=1$.

For a fixed size $c\geq0$, we define $p(c)=\max\{c'\geq0\mid v(c')\leq\rho v(c)\}$.
This value gives the size up to which a solution with value $v(c)$ is $\rho$-competitive.
Throughout our analysis, we assume that $p(c)$ is defined for every $c\geq 0$, i.e., that $\lim_{c\rightarrow\infty}v(c)=\infty$.
Otherwise, any algorithm can terminate when the value of its solution is at least $\frac{1}{\rho}\sup_{c\in\R_{\geq0}}v(c)$.

\begin{proposition}\label{thm:CRC_to_RC}
	The competitive ratio of \incmaxsep is greater or equal to that of \contincmax.
\end{proposition}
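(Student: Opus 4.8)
It suffices to show that every $\rho$-competitive algorithm for \incmaxsep can, for any $\epsilon>0$, be converted into a $(\rho+\epsilon)$-competitive algorithm for \contincmax: since the competitive ratios of both classes are infima over the competitive ratios of their algorithms, this yields that the competitive ratio of \contincmax is at most $\rho+\epsilon$ whenever a $\rho$-competitive \incmaxsep-algorithm exists, and letting $\rho$ tend to the competitive ratio of \incmaxsep and $\epsilon\to0$ gives the claim.

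Fix then a $\rho$-competitive algorithm $\mathcal{A}$ for \incmaxsep, a \contincmax-instance given by a density function $d$, and $\epsilon>0$. For a granularity $\delta\in(0,1]$ to be fixed later (in terms of $d$ and $\epsilon$), discretize $d$ into the \incmaxsep-instance $I_\delta$ with one set of each cardinality, $|R_i|=i$, and density $d_i:=\delta\,d(i\delta)$. As $d$ is non-increasing with $d(0)=1$, the instance $I_\delta$ satisfies the normalizations of Lemma~\ref{obs:incmaxsep_assumptions} (indeed $d_1=\delta d(\delta)\le\delta\le1$, and $v_i=id_i=v(i\delta)$ is non-decreasing), and a short calculation shows that its optimum at an integer cardinality $C$ equals $v_C=v(C\delta)$, matching the continuous optimum $\opt(c)=v(c)$ at $c=C\delta$. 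Run $\mathcal{A}$ on $I_\delta$; by Lemma~\ref{obs:incmaxsep_assumptions}, \cite[Observation~2]{BernsteinDisserGrossHimburg/20}, and a standard exchange argument that enlarges the first chosen set as far as $\rho$-competitiveness allows, we may assume $\mathcal{A}$ returns a sequence $(\hat c_1,\hat c_2,\dots)$ that is strictly increasing with strictly increasing values $v_{\hat c_1}<v_{\hat c_2}<\dots$ and with $\hat c_1=\max\{\ell:d_\ell\ge\tfrac1\rho d_1\}$ (so $d_{\hat c_1}\ge\tfrac1\rho d_1$). The \contincmax-algorithm then outputs the scaled solution $X:=(\delta\hat c_1,\delta\hat c_2,\dots)$.

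It remains to bound the competitive ratio of $X$; write $\Sigma_j:=\sum_{i\le j}\hat c_i$. Because the values $v_{\hat c_i}=v(\delta\hat c_i)$ increase, $X$ restricted to the grid reproduces the solution of $\mathcal{A}$, so $f(X(C\delta))\ge\tfrac1\rho v(C\delta)$ for every integer $C$. For an arbitrary size $c>0$ we distinguish two cases. If $c\le\delta\hat c_1$, then $f(X(c))=c\,d(\delta\hat c_1)\ge\tfrac c\rho d(\delta)$ while $\opt(c)=v(c)\le c$, so the ratio is at most $\rho/d(\delta)$. Otherwise $\delta\Sigma_i<c\le\delta\Sigma_{i+1}$ for some $i\ge1$, whence $f(X(c))=\max\{v(\delta\hat c_i),\,(c-\delta\Sigma_i)\,d(\delta\hat c_{i+1})\}$; bounding $\opt(c)=v(c)\le v(\delta C)$ for $C:=\lceil c/\delta\rceil$ and using the $\rho$-competitiveness of $\mathcal{A}$ at cardinality $C$ --- treating separately the ``plateau'' subcase in which the value of $\mathcal{A}$ at $C$ still equals $v(\delta\hat c_i)$, and the subcase in which it has already overtaken it --- one obtains in both subcases a bound on the ratio of the form $\rho\bigl(1+O(1/\hat c_i)\bigr)$, the error term accounting for the mismatch between $c$ and $\delta C$. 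Finally, as $\delta\to0$ we have $d(\delta)\to1$ by continuity of $d$ at $0$, and $\hat c_1=\max\{\ell:d(\delta\ell)\ge\tfrac1\rho d(\delta)\}\to\infty$, since $\delta\hat c_1$ converges to the fixed positive constant $\max\{x:d(x)\ge\tfrac1\rho\}$; consequently $\hat c_i\ge\hat c_1\to\infty$ for all $i$. Picking $\delta$ small enough makes each of the above ratio bounds at most $\rho+\epsilon$, so $X$ is $(\rho+\epsilon)$-competitive.

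The main obstacle is the estimate in the second case: right after $X$ completes the set of size $\delta\hat c_i$, its value stagnates at $v(\delta\hat c_i)$ while $\opt$ keeps increasing, and a crude bound over this ``plateau'' yields only a ratio close to $1+\rho$ near $c=0$. The argument is saved precisely by the normalization forcing the first chosen set --- and hence, by monotonicity of the sequence, every chosen set --- to grow without bound as $\delta\to0$; making this normalization rigorous (replacing the first set by the largest one still compatible with $\rho$-competitiveness without harming competitiveness at larger cardinalities) is the other step that needs care.
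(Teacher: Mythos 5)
Your overall strategy is the same as the paper's: discretize the \contincmax instance at granularity $\delta=1/n$ (your $d_i=\delta d(i\delta)$ is exactly the paper's $d_i=d(i/n)/n$), transfer a $\rho$-competitive \incmaxsep solution back by scaling, and argue the loss vanishes as $\delta\to0$. The place where your argument breaks down, however, is exactly the crux of the whole proof: the uniform (in $\delta$) lower bound on the first chosen size $\hat c_1$. You obtain it by asserting, via ``a standard exchange argument that enlarges the first chosen set as far as $\rho$-competitiveness allows,'' that one may take $\hat c_1=\max\{\ell: d_\ell\geq\tfrac1\rho d_1\}$. This is not a valid WLOG as stated, and no proof is offered. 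Enlarging the first set keeps the solution competitive only while that set is being added; afterwards it delays every later set by $c_1'-\hat c_1$ elements, and since the values $v_{c_1'}\leq v_{\hat c_2}\leq\dots$ the modified solution can fall below $\opt(k)/\rho$ at cardinalities where the original was tight (e.g.\ at $k=\hat c_1+\hat c_2$). This start-large-versus-stay-flexible trade-off is precisely what the paper's continuization analysis is about: the good starting value is generally \emph{not} the largest size of density $\geq d_1/\rho$ (cf.\ the lemma following Lemma~\ref{lem:fastest_ALG}, where the starting value only lies in an interval and is produced by a limit argument, and Proposition~\ref{thm:lower_bound_greedycap}, which shows no single canonical starting value can be assumed). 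Your own hedge ``as far as $\rho$-competitiveness allows'' concedes the enlargement may stop much earlier, in which case you have no lower bound on $\hat c_1$ at all, and your plateau estimate degrades to roughly $1+\rho$, as you note.

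Two further points make the gap concrete. First, without an additional pruning step the desired bound is simply false: a $\rho$-competitive solution may begin with $R_1$ (one element) and immediately continue with larger dense sets, so $\hat c_1$ need not grow as $\delta\to0$; the paper therefore first assumes WLOG that $d_{c_2}<d_1/\rho$ (dropping leading sets whose successor could itself serve as a start), an assumption absent from your argument. Second, even with that pruning, what the paper proves is a much weaker statement than yours: only that $c_1/n\geq c_{\textrm{min}}$ for an instance-dependent constant $c_{\textrm{min}}$ defined via the level $1-\epsilon'$ of $d$ (see the claim~\eqref{eq:CRC_and_RC_claim} and the surrounding choice of $\epsilon'$ and $n$), not that $\delta\hat c_1$ approaches $d^{-1}(1/\rho)$; and this weaker bound already requires the delicate contradiction argument with the case distinction $c_1=1$ versus $c_1\geq2$. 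So the step you dismiss as a standard exchange is the actual content of the proposition, and as written your proof does not establish it.
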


\begin{proof}
	Let an instance of the \contincmax problem with value function $v\colon\R_{\geq0}\rightarrow\R_{\geq0}$ and density function $d\colon\R_{\geq0}\rightarrow\R_{\geq0}$ be given, and let $\rho\geq1$, $\epsilon>0$.
	We will construct an instance of the \incmaxsep problem such that every solution for this problem instance with competitive ratio~$\rho$ implies a solution for the \contincmax with competitive ratio $\rho+\epsilon$.
	Let $\epsilon'>0$ be small enough such that
	\begin{equation}
	\epsilon'<\rho+\frac{1}{4}-\sqrt{\bigl(\rho+\frac{1}{4}\bigr)^{2}-\frac{1}{2}}\label{eq:CRC_and_RC_def_eps-prime_1}
	\end{equation}
	and
	\begin{equation}
	\rho<(1-\epsilon')\lfloor1+\rho\rfloor.\label{eq:CRC_and_RC_def_eps-prime_2}
	\end{equation}
	This is possible since $\rho<\lfloor1+\rho\rfloor$.
	Furthermore, let $c_{\textrm{min}}\geq0$ be the largest value with\linebreak \mbox{$d\bigl(c_{\textrm{min}}+\frac{\rho}{1-\epsilon'}v(c_{\textrm{min}})\bigr)=1-\epsilon'$}.
	
	We define the \incmaxsep problem as follows.
	Let $n\in\N$ be large enough such that
	\begin{equation}
	d\Bigl(\frac{1}{n}\Bigr)\geq\max\Bigl\{\frac{\rho}{\rho+\epsilon},1-\epsilon'\Bigr\}\label{eq:CRC_and_RC_def_n_1}
	\end{equation}
	and
	\begin{equation}
	\frac{nc_{\textrm{min}}-1}{nc_{\textrm{min}}}\geq\frac{\rho}{\rho+\epsilon}.\label{eq:CRC_and_RC_def_n_2}
	\end{equation}
	Let $R_1,R_2,\dots$ be disjoint sets of elements with $|R_{i}|=i$.
	For $i\in\N$, let $d_i:=v\bigl(\frac{i}{n}\bigr)/i$ be the density of set~$R_{i}$, i.e., we have
	\begin{equation}
	v_i=v\Bigl(\frac{i}{n}\Bigr)\label{eq:CRC_and_RC_relation_values}
	\end{equation}
	and
	\begin{equation}
	d_i=\frac{d\bigl(\frac{i}{n}\bigr)}{n}.\label{eq:CRC_and_RC_relation_densities}
	\end{equation}
	Let $X=(c_{1},c_{2},\dots)$ be a $\rho$-competitive solution
	for this instance of \incmaxsep.
	Without loss of generality, we can assume that $c_{i}<c_{i+1}$ for all $i\in\N$.
	Furthermore, we can assume that $\smash{d_{c_{2}}<\frac{d_1}{\rho}}$.
	If this was not the case, we could simply consider the solution represented by $(c_{2},c_{3},\dots)$, which is also $\rho$-competitive.
	
	We define the solution $X_{\textrm{c}}=\bigl(\frac{c_{1}}{n},\frac{c_{2}}{n},\dots\bigr)$ for the instance of \contincmax.
	We will show that this solution is $(\rho+\epsilon)$-competitive.
	Fix a size $\capacity\geq0$.
	If $\smash{\capacity<\frac{c_{1}}{n}}$, $X_{\textrm{c}}(\capacity)$ contains only elements from the optimal solution of size $\smash{\frac{c_{1}}{n}}$, i.e., we have
	\begin{align*}
		f(X_{\textrm{c}}(\capacity)) & = \capacity\cdot d\bigl(\frac{c_{1}}{n}\bigr) \overset{\eqref{eq:CRC_and_RC_relation_densities}}{=} \capacity n\cdot d_{c_{1}} \geq \capacity n\frac{1}{\rho}\cdot d_1 \overset{\eqref{eq:CRC_and_RC_relation_densities}}{=} \capacity\frac{1}{\rho}\cdot d\bigl(\frac{1}{n}\bigr) \\
		& \overset{\eqref{eq:CRC_and_RC_def_n_1}}{\geq} \capacity\frac{1}{\rho+\epsilon} \geq \frac{1}{\rho+\epsilon}\capacity\cdot d(\capacity) = \frac{1}{\rho+\epsilon}v(\capacity).
	\end{align*}
	
	Now suppose $\smash{\capacity\geq\frac{c_{1}}{n}}$.
	
	\emph{Claim:} We have
	\begin{equation}
	\frac{c_{1}}{n}\geq c_{\textrm{min}}.\label{eq:CRC_and_RC_claim}
	\end{equation}
	
	\allowdisplaybreaks[1]
	\emph{Proof of Claim}: For the sake of contradiction, assume that
	$\smash{\frac{c_{1}}{n}<c_{\textrm{min}}}$. This yields
	\begin{eqnarray}
	d_{\bigl\lfloor c_{1}+\rho\frac{v_{c_{1}}}{d_1}\bigr\rfloor} & \overset{\eqref{eq:CRC_and_RC_relation_values},\eqref{eq:CRC_and_RC_relation_densities}}{=} & d_{\bigl\lfloor c_{1}+\rho n\frac{v(\frac{c_{1}}{n})}{d(\frac{1}{n})}\bigr\rfloor}\nonumber \\
	& \overset{\eqref{eq:CRC_and_RC_def_n_1}}{\geq} & d_{\bigl\lfloor c_{1}+\rho n\frac{v(\frac{c_{1}}{n})}{1-\epsilon'}\bigr\rfloor}\nonumber \\
	& \overset{\eqref{eq:CRC_and_RC_relation_densities}}{=} & d\Bigl(\Bigl\lfloor c_{1}+\rho n\frac{v(\frac{c_{1}}{n})}{1-\epsilon'}\Bigr\rfloor/n\Bigr)/n\nonumber \\
	& \geq & d\Bigl(\frac{c_{1}}{n}+\frac{\rho}{1-\epsilon'}v\Bigl(\frac{c_{1}}{n}\Bigr)\Bigr)/n\nonumber \\
	& \geq & d\Bigl(c_{\textrm{min}}+\frac{\rho}{1-\epsilon'}v(c_{\textrm{min}})\Bigr)/n\nonumber \\
	& = & \frac{1-\epsilon'}{n}.\label{eq:CRC_and_RC_density_relation_eps-prime}
	\end{eqnarray}
	If $c_{1}=1$, we have 
	\begin{eqnarray*}
		\rho\cdot v_{c_{1}} & \overset{c_{1}=1}{=} & \rho d_1 \overset{\eqref{eq:CRC_and_RC_relation_densities}}{=} \frac{\rho}{n}d\bigl(\frac{1}{n}\bigr) \leq \frac{\rho}{n}\\
		& \overset{\eqref{eq:CRC_and_RC_def_eps-prime_2}}{<} & \frac{1-\epsilon'}{n}\lfloor1+\rho\rfloor\\
		& \overset{\eqref{eq:CRC_and_RC_density_relation_eps-prime}}{\leq} & \lfloor1+\rho\rfloor d_{\bigl\lfloor c_{1}+\rho\frac{v_{c_{1}}}{d_1}\bigr\rfloor}\\
		& \overset{c_{1}=1}{=} & \Bigl\lfloor c_{1}+\rho\frac{v_{c_{1}}}{d_1}\Bigr\rfloor d_{\bigl\lfloor c_{1}+\rho\frac{v_{c_{1}}}{d_1}\bigr\rfloor}\\
		& = & v_{\bigl\lfloor c_{1}+\rho\frac{v_{c_{1}}}{d_1}\bigr\rfloor}.
	\end{eqnarray*}
	Otherwise, if $c_{1}\geq2$, we have
	\begin{eqnarray*}
		\rho\cdot v_{c_{1}} & = & \rho\cdot c_{1}d_{c_{1}} \overset{\eqref{eq:CRC_and_RC_relation_densities}}{=} \rho\frac{c_{1}}{n}d\bigl(\frac{c_{1}}{n}\bigr) \leq \rho\frac{c_{1}}{n}\\
		& = & (1-\epsilon')\frac{c_{1}}{n}+(\rho-1+\epsilon')\frac{c_{1}}{n}\\
		& \overset{\eqref{eq:CRC_and_RC_def_eps-prime_1}}{<} & (1-\epsilon')\frac{c_{1}}{n}+\Bigl((1-\epsilon')\rho-\frac{1}{2}\Bigr)(1-\epsilon')\frac{c_{1}}{n}\\
		& \overset{c_{1}\geq2}{\leq} & (1-\epsilon')\frac{c_{1}}{n}+\Bigl((1-\epsilon')\rho-\frac{1}{c_{1}}\Bigr)(1-\epsilon')\frac{c_{1}}{n}\\
		& = & (c_{1}+(1-\epsilon')\rho c_{1}-1)\frac{1-\epsilon'}{n}\\
		& \leq & \lfloor c_{1}+(1-\epsilon')\rho c_{1}\rfloor\frac{1-\epsilon'}{n}\\
		& \overset{\eqref{eq:CRC_and_RC_density_relation_eps-prime},c_{1}\leq\bigl\lfloor c_{1}+\rho\frac{v(c_{1})}{d(1)}\bigr\rfloor}{\leq} & \lfloor c_{1}+\rho c_{1} d_{c_{1}}n\rfloor\frac{1-\epsilon'}{n}\\
		& = & \lfloor c_{1}+\rho v_{c_{1}}n\rfloor\frac{1-\epsilon'}{n}\\
		& \leq & \Bigl\lfloor c_{1}+\rho\frac{v_{c_{1}}}{d(\frac{1}{n})}n\Bigr\rfloor\frac{1-\epsilon'}{n}\\
		& \overset{\eqref{eq:CRC_and_RC_relation_densities}}{=} & \Bigl\lfloor c_{1}+\rho\frac{v_{c_{1}}}{d_1}\Bigr\rfloor\frac{1-\epsilon'}{n}\\
		& \overset{\eqref{eq:CRC_and_RC_density_relation_eps-prime}}{\leq} & \Bigl\lfloor c_{1}+\rho\frac{v_{c_{1}}}{d_1}\Bigr\rfloor d_{\bigl\lfloor c_{1}+\rho\frac{v_{c_{1}}}{d_1}\bigr\rfloor}\\
		& = & v_{\bigl\lfloor c_{1}+\rho\frac{v_{c_{1}}}{d_1}\bigr\rfloor}.
	\end{eqnarray*}
	In either case, we have
	\[
	\rho\cdot v_{c_{1}}<v_{\bigl\lfloor c_{1}+\rho\frac{v_{c_{1}}}{d_1}\bigr\rfloor}=v_{\bigl\lfloor c_{1}+\rho\frac{v_{c_{1}}}{v_1}\bigr\rfloor}.
	\]
	Since $\smash{d_{c_{2}}<\frac{v_1}{\rho}}$, the value of the solution $X \Bigl(\smash{\Bigl\lfloor c_{1}+\rho\frac{v_{c_{1}}}{v_1}\Bigr\rfloor}\Bigl)$ is $\smash{\max\bigl\{ v_{c_{1}},\rho\frac{v_{c_{1}}}{v_1}d_{c_{2}}\bigr\}=v_{c_{1}}}$, i.e., $X$ is not $\rho$-competitive, which is a contradiction.
	Thus, the claim holds, and we have $\smash{\frac{c_{1}}{n}\geq c_{\textrm{min}}}$.
	
	For the value of the solution $X_{\textrm{c}}$, we have
	\begin{eqnarray*}
		f_{\textrm{c}}(X_{\textrm{c}}(\capacity)) & \geq & f_{\textrm{c}}\Bigl(X_{\textrm{c}}\Bigl(\frac{\lfloor n\capacity\rfloor}{n}\Bigr)\Bigr)\\
		& \overset{\eqref{eq:CRC_and_RC_relation_values}}{=} & f(X(\lfloor n\capacity\rfloor))\\
		& \geq & \frac{1}{\rho}v_{\lfloor n\capacity\rfloor}\\
		& = & \frac{1}{\rho}\lfloor n\capacity\rfloor d_{\lfloor n\capacity\rfloor}\\
		& \overset{\eqref{eq:CRC_and_RC_relation_densities}}{=} & \frac{1}{\rho}\cdot\frac{\lfloor n\capacity\rfloor}{n}d\Bigl(\frac{\lfloor n\capacity\rfloor}{n}\Bigr)\\
		& \geq & \frac{1}{\rho}\cdot\frac{\lfloor n\capacity\rfloor}{n}d(\capacity)\\
		& \geq & \frac{1}{\rho}\cdot\frac{n\capacity-1}{n\capacity}\capacity\cdot d(\capacity)\\
		& \overset{\eqref{eq:CRC_and_RC_claim}}{\geq} & \frac{1}{\rho}\cdot\frac{nc_{\textrm{min}}-1}{nc_{\textrm{min}}}\capacity\cdot d(\capacity)\\
		& \overset{\eqref{eq:CRC_and_RC_def_n_2}}{\geq} & \frac{1}{\rho+\epsilon}\capacity\cdot d(\capacity)\\
		& = & \frac{1}{\rho+\epsilon}v(\capacity),
	\end{eqnarray*}
	i.e., the solution $X_{\textrm{c}}$ is $(\rho+\epsilon)$-competitive.
	By choosing $\epsilon>0$ arbitrarily small, the statement follows.
\end{proof}

This proposition implies that instead of devising a lower bound for the \incmaxsep problem, we can construct a lower bound for the \contincmax problem.

Note that it is not clear whether the competitive ratio of \incmaxsep and \contincmax coincide.
This is due to the fact that a solution to the \contincmax problem may add fractional elements while a solution to the \incmaxsep problem may only add an integral number of items.
There are even discrete instances where every continuization of the instance has a competitive ratio smaller than the initial instance.

\begin{observation}
	There exists an instance of \incmaxsep that has a competitive ratio that is strictly larger than that of every instance of \contincmax that monotonically interpolates the \incmaxsep instance.
\end{observation}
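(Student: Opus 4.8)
The plan is to exhibit one explicit finite separable instance, determine its deterministic competitive ratio by an exhaustive case distinction, and then show that \emph{every} monotone continuization of it has a strictly smaller competitive ratio.

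First I would fix a small instance: a few sets $R_1,\dots,R_m$ with $|R_i|=i$ and a carefully chosen density sequence $1=d_1\ge d_2\ge\dots\ge d_m$ obeying the normalizations of Lemma~\ref{obs:incmaxsep_assumptions}, and compute $\opt(k)$ for every cardinality $k\le|U|$. After Lemma~\ref{obs:incmaxsep_assumptions} and the representation lemma, every candidate solution is given by an increasing sequence of sizes drawn from the finite set $\{1,\dots,m\}$ (with an inessential tail appending the remaining sets), so there are only finitely many solutions to test; evaluating each at every cardinality pins down the exact deterministic competitive ratio $\rho_d$ of the instance, attained by a specific solution. I would tune the densities so that $\rho_d$ comes out as a clean closed‑form value, and so that no continuization can mimic the ``coarseness'' of the discrete instance (see below).

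Second, I would bound from above the competitive ratio of an arbitrary monotone continuization $d$ of this instance by some $\rho<\rho_d$. The crucial point is that a continuization is heavily constrained: it must be non‑increasing with $d(0)=1$, and $v(c)=c\,d(c)$ must be non‑decreasing, which forces $d(c)\ge v_i/c$ on each interval $[i,i+1)$ and hence forces the existence of sets of intermediate sizes $c\in(i,i+1)$ whose density strictly exceeds $d_{i+1}$ unless $d_i=d_{i+1}$. A continuous solution can complete such an intermediate‑size set at a total size strictly below $i+1$, reaching a value strictly between $v_i$ and $v_{i+1}$ at a moment when every discrete solution is still stuck at value $v_i$; this smoother catch‑up is exactly the flexibility a discrete solution lacks. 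To make this uniform over all continuizations, I would reduce to the pointwise‑minimal one — the ``most discrete‑like'' continuization, which on each interval $[i,i+1]$ consists of a flat piece at value $v_i$ followed by a constant‑density‑$d_{i+1}$ piece — and exhibit an explicit continuous solution (of \greedycap type) for it with competitive ratio strictly below $\rho_d$; since lowering the continuization pointwise only weakens the guarantee of a fixed solution, the bound then transfers to every continuization.

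The main obstacle is this last step. For ``simple'' instances (e.g.\ two sets) the pointwise‑minimal continuization exactly replicates the discrete coarseness: its flat‑then‑constant‑density shape leaves no resource of simultaneously high density and high value, and one checks that its competitive ratio equals $\rho_d$, so the strict inequality fails. The instance must therefore be engineered so that the monotonicity constraints genuinely obstruct this replication, i.e.\ so that the forced intermediate‑size sets land in a density/value regime where a continuous solution provably improves on $\rho_d$. Identifying a multi‑set instance with this property, and verifying both the value of $\rho_d$ and the strict improvement for the extremal continuization (hence for all continuizations), is where the real work lies.
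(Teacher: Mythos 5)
Your high-level plan --- fix a small explicit separable instance, pin down its deterministic ratio $\rho_d$ by checking the finitely many representable solutions, then exhibit a strictly better continuous solution for continuizations of it --- is the same as the paper's. The paper uses $N=16$ sets with $d_1=1$, $d_3=d_4=17/40$, $d_{12}=\cdots=d_{16}=16473/107200$, and the remaining densities chosen so that $v_i=v_{i-1}$ for $i\in\{2,5,\dots,11\}$; a short case analysis gives a discrete ratio $\geq 969/670>1.446$, and the continuous solution $(c_1,c_2,c_3)=\bigl(\tfrac{1}{\rho},4,12-\tfrac{1}{\rho}\bigr)$ with $\rho=57/40=1.425$ is verified via Lemma~\ref{lem:competitive_equivalence}. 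So the outline is sound, but you have deferred the explicit instance, the exact value of $\rho_d$, and the continuous solution to ``the real work,'' which is where all the substance of the proof lies.

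Two steps as described would not go through. First, the decisive device --- which your discussion of intermediate sizes in $[i,i+1]$ does not isolate --- is the use of a \emph{starting size $c_1<1$}. Since $d_1=1$ and $d$ is non-increasing with $d(0)=1$, every monotone continuization has $d(c)=1$ for all $c\le1$, so a continuous solution can begin with a full-density set of size strictly less than one, which no discrete solution (forced to take $c_1\ge 1$) can imitate; this sub-unit head start is what creates the strict gap. Second, your transfer argument points the wrong way. You claim that ``lowering the continuization pointwise only weakens the guarantee of a fixed solution,'' but the opposite holds in the relevant regime: for a fixed solution using integer sizes, its value is interpolation-independent, while the optimum $v(C)=Cd(C)$ shrinks as $d$ is lowered, so lowering $d$ \emph{improves} the competitive ratio of that solution. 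The pointwise-minimal continuization is therefore the easiest, not the hardest --- and indeed the paper's computation $p(c_1)\le\frac{\rho v(c_1)}{d_3}=\tfrac{40}{17}$ is applied with equality exactly for that minimal continuization, whereas for the pointwise-maximal one $p(c_1)=2$ and the check $d(c_2)\geq v(c_1)/(p(c_1)-c_1)$ would have to be redone. To get a guarantee valid for every continuization you would need to verify the pointwise-\emph{maximal} one and transfer downward, or carry out the check of Lemma~\ref{lem:competitive_equivalence} using only interpolation-invariant bounds.
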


\begin{proof}
	Consider the instance of \incmaxsep with $N=16$ sets and
	\begin{eqnarray*}
		d_{1} & = & 1,\\
		d_{3}=d_{4} & = & \frac{17}{40},\\
		d_{12}=d_{13}=d_{14}=d_{15}=d_{16} & = & \frac{16473}{107200}.
	\end{eqnarray*}
	For $i\in\{2,5,6,7,8,9,10,11\}$, the density is chosen such that $i\cdot d_{i}=v_{i}=v_{i-1}=(i-1)d_{i-1}$.
	We show that every incremental solution $(c_{1},c_{2},\dots)$ has a competitive ratio of at least $1.446$ for this problem instance.
	If $c_{1}\geq2$, then $d_{c_{1}}\leq\frac{1}{2}$, i.e., for size $1$, the solution has value $d_{c_{1}}\leq\frac{1}{2}$ which implies that the solution has a competitive ratio of at least $2$.
	Thus assume that $c_{1}=1$. If $c_{2}\geq5$, we can, without loss of generality, assume that $c_{2}\geq12$. Otherwise we can improve the solution by choosing $c_{2}=4$ instead.
	Then, the value of the solution for size~$4$ is\linebreak $\max\{1,3\cdot d_{c_{2}}\}=\max\{1,3\cdot\frac{16473}{107200}\}=1$, while the optimal solution has value $4d_{4}=\frac{17}{10}$, i.e., the competitive ratio of the solution is at least $1.7$.
	Without loss of generality, we can assume that $v_{c_{2}}>v_{c_{1}}$, i.e., that $c_{2}\geq3$.
	It remains to consider the case that $c_{2}\in\{3,4\}$.
	We can assume that $d_{c_{3}}<d_{c_{2}}$ because otherwise, we could improve the solution by removing $c_{2}$.
	Thus, and because $v_{c_{3}}>v_{c_{2}}$, we have $c_{3}\geq12$, i.e., $d_{c_{3}}=\frac{16473}{107200}$.
	For size $4c_{2}$, the value of the solution is $\max\{\frac{17}{40}c_{2},(4c_{2}-1-c_{2})\frac{16473}{107200}\}=\frac{17}{40}c_{2}$ and the optimal solution has value at least $4c_{2}\cdot\frac{16473}{107200}$.
	Thus, the competitive ratio is $4c_{2}\cdot\frac{16473}{107200}/(\frac{17}{40}c_{2})=\frac{969}{670}>1.446$.
	
	Now, we consider an instance of \contincmax with $d(i)=d_{i}$ for all $i\in\{1,\dots,16\}$. Let $\rho=\frac{57}{40}=1.425$.
	We show that the incremental solution $(c_{1},c_{2},c_{3})=(\frac{1}{\rho},4,12-\frac{1}{\rho})$ is $\rho$-competitive.
	Note that $c_{1}\geq\frac{1}{\rho}$.
	By Lemma~\ref{lem:competitive_equivalence}, it suffices to show that
	\begin{equation}\label{eq:sep_harder_than_cont_requirement}
	d(c_{i})\geq\frac{v(c_{i-1})}{p(c_{i-1})-\sum_{j=1}^{i-1}c_{j}}
	\end{equation}
	for $i\in\{2,3\}$.
	We have $p(c_{1})\leq\frac{\rho v(c_{1})}{d_{3}}=\frac{57}{17}v(c_{1})$ and thus
	\[
	d(c_{2})=d(4)=d_{4}=\frac{17}{40}=\frac{1}{\frac{57}{17}-1}\overset{d(c_{1})\geq d_{1}=1}{\geq}\frac{1}{\frac{57}{17}-\frac{1}{d(c_{1})}}\overset{v(c_{1})=c_{1}d(c_{1})}{=}\frac{v(c_{1})}{p(c_{1})-c_{1}}.
	\]
	We have $p(c_{2})=p(4)=\frac{\rho v(4)}{d_{12}}=\frac{\frac{57}{40}\cdot4\cdot\frac{17}{40}}{\frac{16473}{107200}}=\frac{268}{17}$ and thus
	\[
	d(c_{3})\overset{c_{3}<12}{\geq}d_{12}=\frac{16473}{107200}=\frac{4\cdot\frac{17}{40}}{\frac{268}{17}-\frac{40}{57}-4}=\frac{v(c_{2})}{p(c_{2})-c_{1}-c_{2}}.
	\]
	Therefore, \eqref{eq:sep_harder_than_cont_requirement} holds for $i\in\{2,3\}$, i.e., the solution $(c_{1},c_{2},c_{3})$ is $\rho$-competitive.
\end{proof}

Note that, even though this shows that there are instances where the continuous problem is easier than the discrete one, this does not rule out that the competitive ratios of \incmaxsep and \contincmax coincide.
This is due to the fact that the instance in the proof is not a worst-case instance.

\subsection{Optimal Continuous Online Algorithm}\label{sec:optContOnlineAlg}

In this section, we present an algorithm to solve the \contincmax problem, and analyze it.
To get an idea what the algorithm does, consider the following lemma.
It gives a characterization of a solution $(c_{1},c_{2},\dots)$ being $\rho$-competitive, depending on $(c_{1},c_{2},\dots)$, $v$ and~$d$.

\begin{lemma}\label{lem:competitive_equivalence}
	For a solution $(c_{1},c_{2},\dots)$ for an instance of the \contincmax problem, the following statements are equivalent:
	
	\begin{enumerate}
		\item[(i)] $(c_{1},c_{2},\dots)$ is $\rho$-competitive.
		
		\item[(ii)] We have $d(c_{1})\geq\frac{1}{\rho}$ and, for all $i\in\N$, $d(c_{i+1})\geq\frac{v(c_{i})}{p(c_{i})-\sum_{j=1}^{i}c_{j}}$.
		
		\item[(iii)] We have $d(c_{1})\geq\frac{1}{\rho}$, and, for all $i\in\N$, $p(c_{i})>\sum_{j=1}^{i}c_{j}$ and $\smash{d(c_{i+1})\geq\frac{v(c_{i})}{p(c_{i})-\sum_{j=1}^{i}c_{j}}}$.
	\end{enumerate}
\end{lemma}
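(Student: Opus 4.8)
\emph{Proof plan.} The plan is to peel off what $\rho$-competitiveness requires on each interval between consecutive prefix sums, and then observe that on such an interval only one ``critical'' size matters; this collapses an infinite family of inequalities to the single inequality appearing in (ii)/(iii). Write $S_0:=0$, $S_k:=\sum_{j=1}^k c_j$, and $v(c_0):=0$, and recall the standing assumption $c_1<c_2<\cdots$ (so $v(c_1)\le v(c_2)\le\cdots$), under which $f(X(c))=\max\{v(c_{n-1}),\,(c-S_{n-1})\,d(c_n)\}$ for $c\in(S_{n-1},S_n]$; a solution with $\sum_i c_i<\infty$ is never $\rho$-competitive since $v(c)\to\infty$, so we may assume the intervals $(S_{n-1},S_n]$ partition $(0,\infty)$. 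I would first collect three facts about $p$: since $\rho\ge1$ we have $p(c)\ge c$; since $v$ is continuous and unbounded, the set $\{c':v(c')\le\rho v(c)\}$ is closed and bounded, hence contains its supremum $p(c)$, and in fact $v(p(c))=\rho v(c)$ (strict inequality would, by continuity, let us enlarge $p(c)$); and, by monotonicity of $v$, for all $c,c'\ge0$
\[
v(c')\le\rho v(c)\iff c'\le p(c),
\]
while $p$ itself is non-decreasing.

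For the core step, expanding the definition shows that $X$ is $\rho$-competitive iff for every $m\ge1$ and every $c\in(S_{m-1},S_m]$ one of $\rho v(c_{m-1})\ge v(c)$ and $\rho(c-S_{m-1})\,d(c_m)\ge v(c)$ holds. By the displayed equivalence the first alternative holds exactly for $c\le p(c_{m-1})$, so the requirement at index $m$ reduces to: $v(c)\le\rho(c-S_{m-1})\,d(c_m)$ for all $c$ in the interval $(a_m,S_m]$, where $a_m:=\max\{p(c_{m-1}),S_{m-1}\}$ (and the requirement is vacuous if this interval is empty, i.e.\ $p(c_{m-1})\ge S_m$). The key observation is that on $(S_{m-1},\infty)$ the function $h_m(c):=\frac{v(c)}{c-S_{m-1}}=d(c)\cdot\frac{1}{1-S_{m-1}/c}$ is a product of two positive non-increasing functions, hence non-increasing and continuous; therefore the index-$m$ requirement is equivalent to the single inequality $\lim_{c\searrow a_m}h_m(c)\le\rho\,d(c_m)$.

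It remains to evaluate this limit case by case. For $m=1$ we have $a_1=p(c_0)=0$ and $\lim_{c\searrow0}h_1(c)=\lim_{c\searrow0}d(c)=d(0)=1$, so the requirement is precisely $d(c_1)\ge1/\rho$. For $m=i+1\ge2$: if $p(c_i)\le S_i$ then $a_m=S_i>0$ and $\lim_{c\searrow S_i}h_m(c)=+\infty$ (as $v(S_i)>0$), so the requirement fails, which forces $p(c_i)>S_i$ under $\rho$-competitiveness; if $p(c_i)>S_i$ then $a_m=p(c_i)$ and $\lim_{c\searrow a_m}h_m(c)=\frac{v(p(c_i))}{p(c_i)-S_i}=\frac{\rho v(c_i)}{p(c_i)-S_i}$ (using $v(p(c_i))=\rho v(c_i)$), so the requirement reads $d(c_{i+1})\ge\frac{v(c_i)}{p(c_i)-S_i}$, which moreover holds automatically in the vacuous case $p(c_i)\ge S_{i+1}$, since then $p(c_i)-S_i\ge c_{i+1}$ and $v(c_{i+1})\ge v(c_i)$ give $\frac{v(c_i)}{p(c_i)-S_i}\le\frac{v(c_i)}{c_{i+1}}\le d(c_{i+1})$. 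Putting these together, $X$ is $\rho$-competitive iff $d(c_1)\ge1/\rho$ and, for all $i$, $p(c_i)>S_i$ and $d(c_{i+1})\ge\frac{v(c_i)}{p(c_i)-S_i}$, which is statement (iii). Finally, (iii)$\Rightarrow$(ii) is immediate, and (ii)$\Rightarrow$(iii) because the $i$-th inequality of (ii) can hold only if $p(c_i)-\sum_{j\le i}c_j>0$ (its right-hand side being $+\infty>1\ge d(c_{i+1})$, hence false, when the denominator is non-positive).

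The main obstacle is the reduction in the second step: establishing that $h_m$ is non-increasing and continuous on $(S_{m-1},\infty)$ and correctly tracking its one-sided limit at $c=a_m$. This is exactly what removes the ``for all $c$'' quantifier and, incidentally, what produces the strict inequality $p(c_i)>\sum_{j\le i}c_j$ in (iii). The degenerate situations ($p(c_i)=S_i$, and the empty interval $p(c_i)\ge S_{i+1}$) have to be checked separately, but both are dispatched by the elementary estimates above.
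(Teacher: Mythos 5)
Your core equivalence (i)$\Leftrightarrow$(iii) is argued correctly and by a genuinely different route than the paper: instead of the paper's two contradiction arguments (picking the first size where competitiveness fails, with a case distinction on whether that size sits at the end of a block), you reduce $\rho$-competitiveness on each block $(S_{m-1},S_m]$ to a single inequality by observing that $h_m(c)=v(c)/(c-S_{m-1})=d(c)\cdot\frac{1}{1-S_{m-1}/c}$ is non-increasing, so only the one-sided limit at $\max\{p(c_{m-1}),S_{m-1}\}$ matters. This collapses both directions of (i)$\Leftrightarrow$(iii) into one computation, handles the strictness $p(c_i)>\sum_{j\le i}c_j$ via the $+\infty$ limit at $S_i$, and correctly dispatches the vacuous case $p(c_i)\ge S_{i+1}$; it is arguably cleaner than the paper's argument and relies on the same standing conventions (increasing sequence, $d$ continuous with $d(0)=1$, $v\to\infty$).

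However, your treatment of (ii)$\Rightarrow$(iii) has a genuine gap. You claim the $i$-th inequality of (ii) can only hold when $p(c_i)-\sum_{j\le i}c_j>0$, ``its right-hand side being $+\infty$'' otherwise. That is false when the denominator is strictly negative: then $\frac{v(c_i)}{p(c_i)-\sum_{j\le i}c_j}$ is a negative real number and the inequality $d(c_{i+1})\ge\frac{v(c_i)}{p(c_i)-\sum_{j\le i}c_j}$ holds vacuously, so (ii) does not force positivity index by index. The paper closes exactly this hole with an induction: $p(c_1)>c_1$ holds unconditionally (definition of $p$, continuity of $v$, $c_1>0$, $\rho>1$), and assuming $p(c_i)>\sum_{j\le i}c_j$, the now-meaningful inequality of (ii) at index $i$ yields $v\bigl(\sum_{j\le i+1}c_j\bigr)<\rho v(c_{i+1})$, hence $p(c_{i+1})>\sum_{j\le i+1}c_j$. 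You need some such propagation argument (your own monotonicity of $h_{i+1}$ supplies it: if $\sum_{j\le i+1}c_j>p(c_i)$, then $\frac{v(\sum_{j\le i+1}c_j)}{c_{i+1}}\le\lim_{c\searrow p(c_i)}h_{i+1}(c)=\frac{\rho v(c_i)}{p(c_i)-\sum_{j\le i}c_j}\le\rho d(c_{i+1})$, with a little extra care to make the inequality strict); without it, (ii)$\Rightarrow$(iii) is unproven. The rest of the write-up is sound.
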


\begin{proof}
	$(i)\Rightarrow(iii)$: Since $X:=(c_1,c_2,\dots)$ is $\rho$-competitive, for any size $\capacity\geq0$, we have\linebreak $f(X(\capacity))\geq v(\capacity)/\rho$.
	If $\smash{d(c_{1})<\frac{1}{\rho}}$ was true,~$X$ would not be $\rho$-competitive for all sizes $\capacity\geq0$ with $d(\capacity)>\rho d(c_{1})$, which exists because $d(0)=1$.
	Thus, we have $\smash{d(c_{1})\geq\text{\ensuremath{\frac{1}{\rho}}}}$.
	Now\linebreak suppose $\smash{p(c_{i})<\sum_{j=1}^{i}c_{j}}$.
	By definition of $p$ and monotonicity of $v$, we know that\linebreak $\smash{\rho v(c_{i})=v(p(c_{i}))<v\bigl(\sum_{j=1}^{i}c_{j}\bigr)}$ which means that~$X$ is not $\rho$-competitive for size $\smash{\sum_{j=1}^{i}c_{i}}$.
	This is a contradiction and thus we have $\smash{p(c_{i})\geq\sum_{j=1}^{i}c_{j}}$.
	Suppose $\smash{p(c_{i})=\sum_{j=1}^{i}c_{j}}$.
	Let $\smash{x\in(0,\frac{v(c_{i})}{d(c_{i+1})})}$.
	The value of solution~$X(p(c_{i})+x)$ is $\smash{f(X(p(c_{i})+x))=v(c_{i})}$ but, by definition of $p$ and monotonicity of $v$, we know $\smash{\rho v(c_{i})=v(p(c_{i}))<v\bigl(\bigl(\sum_{j=1}^{i}c_{j}\bigr)+x\bigr)}$, and thus we have $\smash{p(c_{i})>\sum_{j=1}^{i}c_{j}}$.
	Assume, we have $d(c_{i+1})<\frac{v(c_{i})}{p(c_{i})-\sum_{j=1}^{i}c_{j}}$.
	We established $p(c_{i})>\sum_{j=1}^{i}c_{j}$ and because\linebreak $\bigl(p(c_{i})-\sum_{j=1}^{i}c_{j}\bigr)d(c_{i+1})<v(c_{i})$, the value of solution~$X(p(c_{i}))$ is $f(X(p(c_{i})))=v(c_{i})=v(p(c_{i}))/\rho$.
	Furthermore, for the same reason there is $\epsilon>0$ with $\smash{\bigl(p(c_{i})+\epsilon-\sum_{j=1}^{i}c_{j}\bigr)d(c_{i+1})<v(c_{i})}$.
	This implies that the value of solution~$X(p(c_{i})+\epsilon)$ is
	\[
	f(X(p(c_{i})+\epsilon))=\max\{\bigl(p(c_{i})+\epsilon-\sum_{j=1}^{i}c_{j}\bigr)d(c_{i+1}),v(c_{i})\}=v(c_{i}),
	\]
	but by definition of $p$ we know that $v(p(c_{i})+\epsilon)>\rho v(c_{i})$ and thus~$X$ is not $\rho$-competitive.
	This is a contradiction and thus $(iii)$ must hold.
	
	$(iii)\Rightarrow(i)$:
	Suppose $(iii)$ holds but~$X:=(c_1,c_2,\dots)$ was not $\rho$-competitive.
	Then there exist $\capacity\geq0$ and $\epsilon>0$ such that~$X$ is $\rho$-competitive for all sizes in $[0,\capacity]$ and not $\rho$-competitive for all sizes in $(\capacity,\capacity+\epsilon]$ because the value function~$v$ and the value of the solution~$X$ are both continuously changing with the size.
	Let $i\in\N$ and \mbox{$0<x\leq c_{i}$} such that $\smash{\capacity=\bigl(\sum_{j=1}^{i-1}c_{j}\bigr)+x}$.
	If $i=1$, the value of $X(\capacity)$ is $\smash{xd(c_{1})\geq\frac{1}{\rho}x\geq\frac{1}{\rho}xd(x)=\frac{1}{\rho}v(x)}$.
	This is a contradiction to the fact that~$X$ is not \mbox{$\rho$-competitive} for size~$\capacity$ and therefore we have $i\geq2$.
	Assume that $x=c_{i}$ holds.
	Then the value of the solution is $f(X(\capacity))=v(c_{i})$.
	For all $0<x'\leq\min\{\epsilon,\frac{v(c_{i})}{d(c_{i+1})}\}$, the value of the solution is $f(X(\capacity+x'))=v(c_{i})$ and, by definition of~$\capacity$,~$\epsilon$ and~$x'$,~$X$ is not $\rho$-competitive for size $\capacity+x'$, i.e., $v(\capacity+x')>\rho v(c_{i})$.
	Since this holds for arbitrarily small $x'>0$ and $\rho$-competitiveness for size~$\capacity$ of~$X$ implies $v(\capacity)\leq\rho v(c_{i})$, we know that $p(c_{i})=\capacity=\sum_{j=1}^{i}c_{j}$, which is a contradiction to $(iii)$ and thus $x\neq c_{i}$, i.e., $x<c_{i}$.
	Let $x'\in(x,\min\{c_{i},x+\epsilon\})$ be chosen arbitrarily and let $\capacity':=\bigl(\sum_{j=1}^{i-1}c_{j}\bigr)+x'$.
	Now suppose that 
	\begin{equation}
	xd(c_{i})<v(c_{i-1}).\label{eq:assumption_value_solution}
	\end{equation}
	Then the value of the solution is $f(X(\capacity))=v(c_{i-1})$ and since~$X$ is $\rho$-competitive for size $\capacity$, we have $\capacity\leq p(c_{i-1})$.
	But since $\capacity'>p(c_{i-1})$ for any $x'>x$, i.e., for any $\capacity'>\capacity$, we have $\capacity=p(c_{i-1})$.
	By $(iii)$ and non-negativity of $v$, we have $\smash{p(c_{i-1})-\sum_{j=1}^{i-1}c_{j}\geq0}$ and thus
	\[
	x\overset{\eqref{eq:assumption_value_solution}}{<}\frac{v(c_{i-1})}{d(c_{i})}\overset{(iii)}{\leq}p(c_{i-1})-\sum_{j=1}^{i-1}c_{j}
	\]
	or, equivalently, $\capacity=\bigl(\sum_{j=1}^{i-1}c_{j}\bigr)+x<p(c_{i-1})$.
	This is a contradiction and therefore \eqref{eq:assumption_value_solution} does not hold, i.e., we have $xd(c_{i})\geq v(c_{i-1})$ and therefore the value of the solution is $f(X(\capacity))=xd(c_{i})$.
	Since~$X$ is $\rho$-competitive for size~$\capacity$, we have
	\begin{equation}
	v(\capacity)\leq\rho\cdot xd(c_{i}).\label{eq:competitiveness_size_s}
	\end{equation}
	This implies
	\begin{equation}
	d(\capacity)=\frac{v(\capacity)}{\capacity}\overset{\eqref{eq:competitiveness_size_s}}{\leq}\rho\frac{x}{\capacity}d(c_{i})\leq\rho d(c_{i}).\label{eq:inequality_desity-s_and_density-c_i}
	\end{equation}
	We can conclude
	\begin{eqnarray*}
		v(\capacity') & = & \capacity'd(\capacity')\\
		& \overset{d\textrm{ non-inc.}}{\leq} & \capacity'd(\capacity)\\
		& = & \capacity d(s)+(\capacity'-\capacity)d(\capacity)\\
		& = & v(\capacity)+(x'-x)d(\capacity)\\
		& \overset{\eqref{eq:assumption_value_solution},\eqref{eq:inequality_desity-s_and_density-c_i}}{\leq} & \rho\cdot xd(c_{i})+(x'-x)\rho d(c_{i})\\
		& = & \rho\cdot x'd(c_{i}),
	\end{eqnarray*}
	which is a contradiction to the fact that~$X$ is not $\rho$-competitive for size~$\capacity'$ and thus $(i)$ must hold.
	
	$(iii)\Rightarrow(ii)$: This follows immediately.
	
	$(ii)\Rightarrow(iii)$: Suppose $(ii)$ holds.
	We have to show that $p(c_{i})>\sum_{j=1}^{i}c_{j}$ for all $i\in\N$.
	The rest of $(iii)$ follows immediately from $(ii)$.
	We will prove that this is the case by induction on $i$.
	For $i=1$ we have $p(c_{1})>c_{1}$ by definition of $p$, continuity of $v$ and the fact that $c_{1}>0$.
	Now suppose
	\begin{equation}
	p(c_{i})>\sum_{j=1}^{i}c_{j}\label{eq:lemma_competitiveness_induction-statement}
	\end{equation}
	holds for some $i\in\N$.
	If $p(c_{i})\geq\sum_{j=1}^{i+1}c_{j}$, \eqref{eq:lemma_competitiveness_induction-statement} holds for $i+1$ because $p(c_{i+1})>p(c_{i})$.
	So suppose 
	\begin{equation}
	p(c_{i})<\sum_{j=1}^{i+1}c_{j}.\label{eq:assumption_p_less_sum-to-i+1}
	\end{equation}
	In that case, we have
	\begin{eqnarray*}
		& & v\Bigl(\sum_{j=1}^{i+1}c_{j}\Bigr)\\
		& = & \Bigl(\sum_{j=1}^{i+1}c_{j}\Bigr)\cdot d\Bigl(\sum_{j=1}^{i+1}c_{j}\Bigr)\\
		& \overset{\eqref{eq:assumption_p_less_sum-to-i+1},d\textrm{ non-inc.}}{\leq} & \Bigl(\sum_{j=1}^{i+1}c_{j}\Bigr)\cdot d(p(c_{i}))\\
		& = & p(c_{i})d(p(c_{i}))+\Bigl(\Bigl(\sum_{j=1}^{i+1}c_{j}\Bigr)-p(c_{i})\Bigr)d(p(c_{i}))\\
		& = & v(p(c_{i}))+\Bigl(\Bigl(\sum_{j=1}^{i+1}c_{j}\Bigr)-p(c_{i})\Bigr)d(p(c_{i}))\\
		& \overset{\textrm{def of }p}{=} & \rho v(c_{i})+\Bigl(\Bigl(\sum_{j=1}^{i+1}c_{j}\Bigr)-p(c_{i})\Bigr)\frac{\rho v(c_{i})}{p(c_{i})}\\
		& \overset{\eqref{eq:lemma_competitiveness_induction-statement},c_{1}>0}{<} & \rho\cdot\Bigl(p(c_{i})-\sum_{j=1}^{i}c_{j}\Bigr)\frac{v(c_{i})}{p(c_{i})-\sum_{j=1}^{i}c_{j}}+\Bigl(\Bigl(\sum_{j=1}^{i+1}c_{j}\Bigr)-p(c_{i})\Bigr)\frac{\rho v(c_{i})}{p(c_{i})-\sum_{j=1}^{i}c_{j}}\\
		& = & \rho\cdot c_{i+1}\frac{v(c_{i})}{p(c_{i})-\sum_{j=1}^{i}c_{j}}\\
		& \overset{(ii)}{\leq} & \rho\cdot c_{i+1}d(c_{i+1})\\
		& = & \rho v(c_{i+1}).
	\end{eqnarray*}
	Since $v$ is increasing and continuous, this implies that $p(c_{i+1})>\sum_{j=1}^{i+1}c_{j}$.
\end{proof}

The intuition behind the fraction
\[
	\frac{v(c_{i})}{p(c_{i})-\sum_{j=1}^{i}c_{j}}
\]
is the following:
The value of the solution $(c_{1},\dots,c_{i-1},c_{i})$ is $v(c_{i})$ and this value is \mbox{$\rho$-competitive} up to size $p(c_{i})$.
The size required for this solution is $\smash{\sum_{j=1}^{i}c_{j}}$.
Thus, in order to stay competitive, the size added next, namely $c_{i+1}$, needs to be chosen such that $\smash{\bigl(p(c_{i})-\sum_{j=1}^{i}c_{j}\bigr)d(c_{i+1})\geq v(c_{i})}$, i.e., the density $d(c_{i+1})$ is large enough such that the value of the solution of size $p(c_{i})$ is\linebreak $\bigl(p(c_{i})-\sum_{j=1}^{i}c_{j}\bigr)d(c_{i+1})$.

We use this fraction to define an algorithm for solving the \contincmax Problem.
For the algorithm, we assume that $v$ is strictly increasing and $d$ is strictly decreasing to make the choice of our algorithm unique.
Every instance of \contincmax can be transformed to satisfy this with an arbitrarily small loss by simpliy ``tilting'' constant parts of~$d$ and~$v$ by a small amount.
The algorithm $\greedycap(c_{1},\rho)$ starts by adding the optimal solution of size $c_{1}>0$ and chooses the size $c_{i+1}$ such that
\begin{equation}
d(c_{i+1})=\frac{v(c_{i})}{p(c_{i})-\sum_{j=1}^{i}c_{j}},\label{eq:GreedyCap_def}
\end{equation}
i.e., as large as possible while still satisfying the inequality in Lemma~\ref{lem:competitive_equivalence}.
An illustration of the algorithm can be found in Figure~\ref{fig:greedyCap}.

Using the definition of the algorithm in~\eqref{eq:GreedyCap_def} and Lemma~\ref{lem:competitive_equivalence}, we are able to prove the following.

\begin{figure}
	\begin{center}
		\begin{tikzpicture}[scale=1.1]
		\begin{axis}[
		width=11cm,
		height=7cm,
		axis lines=middle,
		xtick={7928,20440,33891,41820},
		ytick={193,522},
		xticklabels={$\sum_{j=1}^i c_j$,$p(c_i)$,$c_{i+1}$,$\sum_{j=1}^{i+1} c_j$},
		yticklabels={$v(c_i)$,$v(c_{i+1})$},
		scaled x ticks=false,
		domain=0:64000
		]
		
		\addplot[samples=2,domain=63999:64000,white] {650};
		\node at (axis cs:58000,30){$c$};
		
		\node at (axis cs:48500,620){\small $v(c)$};
		\node at (axis cs:49100,307){\small $\frac{1}{\rho}v(c)$};
		\node at (axis cs:54000,493){\small $\greedycap(c)$};
		
		\addplot[samples=100,domain=0:45000] {x^0.6};
		\addplot[samples=100,domain=0:45000] {x^0.6/2};
		
		\addplot[thick,samples=2,domain=1:3] {1};
		\addplot[thick,samples=2,domain=3:8] {1/(3-2)*(x-1)};
		\addplot[thick,samples=2,domain=8:22] {3};
		\addplot[thick,samples=2,domain=22:49] {3/(22-8)*(x-8)};
		\addplot[thick,samples=2,domain=49:130] {9};
		\addplot[thick,samples=2,domain=130:275] {9/(130-49)*(x-49)};
		\addplot[thick,samples=2,domain=275:718] {26};
		\addplot[thick,samples=2,domain=718:1490] {26/(718-275)*(x-275)};
		\addplot[thick,samples=2,domain=1490:3858] {71};
		\addplot[thick,samples=2,domain=3858:7928] {71/(3858-1490)*(x-1490)};
		\addplot[thick,samples=2,domain=7928:20440] {193};
		\addplot[thick,samples=2,domain=20440:41820] {193/(20440-7928)*(x-7928)};
		\addplot[thick,samples=2,domain=41820:45000] {522};
		
		\addplot[dashed,samples=2,domain=7928:20440] {193/(20440-7928)*(x-7928)};
		
		\addplot[dotted] coordinates {(7928, 0) (7928, 193)};
		\addplot[dotted] coordinates {(20440, 0) (20440, 193)};
		\addplot[dotted] coordinates {(33891, 0) (33891, 522)};
		\addplot[dotted] coordinates {(33891, 522) (41820, 522)};
		\addplot[dotted] coordinates {(41820, 0) (41820, 522)};
		\end{axis}
		\end{tikzpicture}
	\end{center}
	\caption{Illustration how $\greedycap(c_1,\rho)$ works. Between size $\smash{\sum_{j=1}^{i} c_j}$ and size $\smash{\sum_{j=1}^{i+1} c_j}$, the algorithm adds the optimal solution of size~$c_{i+1}$. This size is chosen in a way that the value of the partially added solution has value $v(c_i)$ exactly at size $p(c_i)$, i.e., when the previously added solution of size~$c_i$ loses $\rho$-competitiveness.}\label{fig:greedyCap}
\end{figure}
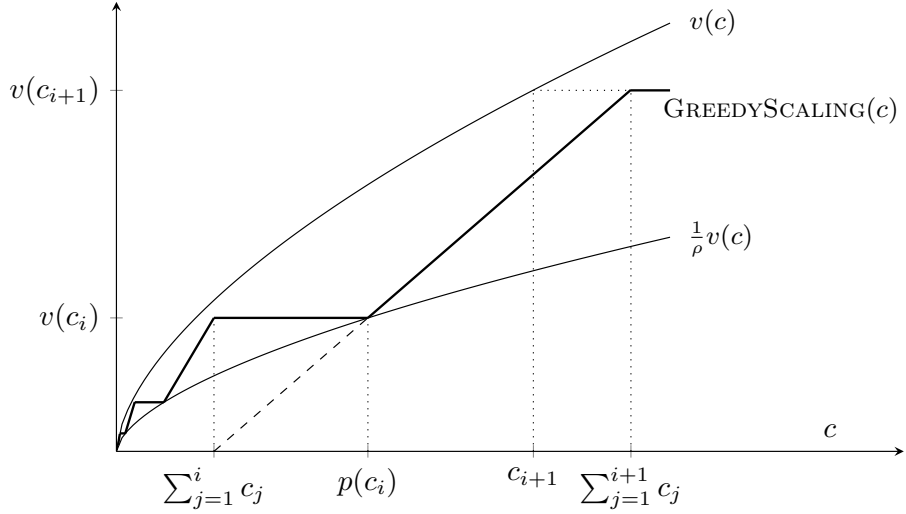

\begin{proposition}\label{prop:GreedyCap_rho-comp_equivalence}
	The algorithm $\greedycap(c_{1},\rho)$ is $\rho$-competitive if and only if it produces a solution $(c_{1},c_{2},\dots)$ with $c_{i}<c_{i+1}$ for all $i\in\N$ and $d(c_{1})\geq\frac{1}{\rho}$.
\end{proposition}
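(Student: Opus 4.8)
The plan is to derive both implications from Lemma~\ref{lem:competitive_equivalence}, which reduces $\rho$-competitiveness of a solution $(c_1,c_2,\dots)$ to the pointwise conditions $d(c_1)\geq1/\rho$, $p(c_i)>\sum_{j=1}^{i}c_j$, and $d(c_{i+1})\geq v(c_i)/\bigl(p(c_i)-\sum_{j=1}^{i}c_j\bigr)$ for all $i$. Since $\greedycap(c_1,\rho)$ picks each $c_{i+1}$ so that \eqref{eq:GreedyCap_def} holds \emph{with equality}, the recursive inequality is automatically met at every step the algorithm actually carries out. For the ``if'' direction I would therefore argue: assuming the algorithm produces a strictly increasing sequence $(c_1,c_2,\dots)$ with $d(c_1)\geq1/\rho$, the fact that each $c_{i+1}$ was produced via \eqref{eq:GreedyCap_def} means $v(c_i)/\bigl(p(c_i)-\sum_{j=1}^{i}c_j\bigr)$ is a genuine density, hence positive, which forces $p(c_i)>\sum_{j=1}^{i}c_j$; so all of condition~(iii) of Lemma~\ref{lem:competitive_equivalence} holds, and — the sequence being increasing, hence a valid solution — the algorithm's output is $\rho$-competitive.

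For the ``only if'' direction, suppose $\greedycap(c_1,\rho)$ is $\rho$-competitive, so the solution it returns is $\rho$-competitive. The condition $d(c_1)\geq1/\rho$ comes out exactly as in the proof of Lemma~\ref{lem:competitive_equivalence}: if $d(c_1)<1/\rho$, choose a small size $c\leq c_1$ with $d(c)>\rho\,d(c_1)$ (possible since $d$ is continuous with $d(0)=1$); then the solution has value $c\,d(c_1)$ at size $c$ while $v(c)=c\,d(c)>\rho\cdot c\,d(c_1)$, a contradiction. It remains to show the generated sequence is strictly increasing. I would first observe that if the recursion ever breaks off — because $p(c_i)\leq\sum_{j=1}^{i}c_j$ or because \eqref{eq:GreedyCap_def} demands a density exceeding $1$ — then the algorithm fails to return a complete incremental solution, so it cannot be $\rho$-competitive (its value remains bounded while $v$ is unbounded); hence the recursion in fact never breaks off.

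The heart of the argument is then to rule out that the recursion runs forever while producing some step with $c_{i+1}\leq c_i$. Writing $s_k=\sum_{j=1}^{k}c_j$, I would combine \eqref{eq:GreedyCap_def}, which gives $p(c_i)-s_i=v(c_i)/d(c_{i+1})$ (so $p(c_i)\,d(c_{i+1})>v(c_i)$ because $s_i>0$), with $c_{i+1}=v(c_{i+1})/d(c_{i+1})$ and $v(c_{i+1})\leq v(c_i)$ (which follows from $c_{i+1}\leq c_i$ since $v$ is increasing). This yields first $s_{i+1}=p(c_i)-\bigl(v(c_i)-v(c_{i+1})\bigr)/d(c_{i+1})\leq p(c_i)$, hence $d(s_{i+1})\geq d(p(c_i))=\rho\,v(c_i)/p(c_i)$, and then, after substituting the lower bound $s_{i+1}>p(c_i)\,v(c_{i+1})/v(c_i)$ obtained from $p(c_i)\,d(c_{i+1})>v(c_i)$, the estimate $v(s_{i+1})=s_{i+1}\,d(s_{i+1})>\rho\,v(c_{i+1})=v(p(c_{i+1}))$; since $v$ is strictly increasing this means $s_{i+1}>p(c_{i+1})$. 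But then \eqref{eq:GreedyCap_def} at step $i+2$ would call for a block of density $v(c_{i+1})/\bigl(p(c_{i+1})-s_{i+1}\bigr)<0$, i.e., the recursion breaks off after all — contradicting the previous paragraph. Hence $c_{i+1}>c_i$ for all $i$, completing the proof. I expect the chain of inequalities pinning down $s_{i+1}>p(c_{i+1})$ from $c_{i+1}\leq c_i$ to be the main obstacle, together with carefully justifying that a broken-off recursion cannot be $\rho$-competitive against an unbounded value function.
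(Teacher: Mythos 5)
Your proof is correct in substance and, for the crucial direction, takes a genuinely different route than the paper. Both treatments dispose of the ``if'' direction and of the necessity of $d(c_1)\geq\frac{1}{\rho}$ by invoking Lemma~\ref{lem:competitive_equivalence} together with the defining equality \eqref{eq:GreedyCap_def}, exactly as you do. For the remaining claim that $\rho$-competitiveness forces strict monotonicity, the paper argues that a single violation $c_{k+1}\leq c_k$ propagates: comparing \eqref{eq:GreedyCap_def} at consecutive steps it shows $d(c_{i+2})>d(c_{i+1})$, hence $c_{i+2}<c_{i+1}$, for every subsequent index, so the value of the produced solution never exceeds $v(c_k)$, contradicting competitiveness because $\lim_{c\to\infty}v(c)=\infty$. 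You instead show that a single violation is immediately fatal: from $p(c_i)d(c_{i+1})>v(c_i)$ and $s_{i+1}\leq p(c_i)$ you deduce $v(s_{i+1})>\rho v(c_{i+1})=v(p(c_{i+1}))$, i.e.\ the accumulated size already exceeds $p(c_{i+1})$, so \eqref{eq:GreedyCap_def} admits no valid next size and the recursion breaks off, after which the same bounded-value-versus-unbounded-$v$ argument applies. Your route localizes the failure to the very next step and makes explicit that a broken-off recursion cannot be competitive (a point the paper treats only implicitly), while the paper's route avoids your $s_{i+1}\leq p(c_i)$ bookkeeping but must keep the recursion formally alive while the sizes shrink forever. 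One small point to patch: your strict bound $s_{i+1}>p(c_i)\,v(c_{i+1})/v(c_i)$ comes from dividing by $v(c_i)-v(c_{i+1})$ and therefore degenerates when $c_{i+1}=c_i$; in that sub-case you only get $s_{i+1}=p(c_i)=p(c_{i+1})$, so the denominator in \eqref{eq:GreedyCap_def} at step $i+2$ is zero rather than negative. The required density is then undefined rather than negative, but no admissible $c_{i+2}$ exists either way, so your contradiction survives once you phrase the conclusion as ``the denominator is $\leq 0$, hence the recursion breaks off.''
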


\begin{proof}
	``$\Leftarrow$'': If $c_{i}<c_{i+1}$ for all $i\in\N$ and $d(c_{1})\geq\frac{1}{\rho}$, we can simply apply Lemma~\ref{lem:competitive_equivalence} and obtain that the solution is $\rho$-competitive.
	
	``$\Rightarrow$'': If $d(c_{1})<\frac{1}{\rho}$, Lemma~\ref{lem:competitive_equivalence} yields that the solution is not $\rho$-competitive.
	Now, suppose that $c_{k+1} \leq c_{k}$ for some $k\in\N$.
	If $c_i\leq0$ for some $i\in\N$, then the solution is not valid and thus not $\rho$-competitive.
	Thus, assume that $c_i>0$ for all $i\in\N$.
	We will now iteratively show that, for all $i\in\{k,k+1,\dots\}$, we have $c_{i+1}< c_i$.
	For this, suppose that $c_{i+1}\leq c_i$.
	Then
	\begin{eqnarray*}
		d(c_{i+2}) &=& \frac{v(c_{i+1})}{p(c_{i+1})-\sum_{j=1}^{i+1}c_j} \\
		&=& \frac{1}{\frac{\rho}{d(p(c_{i+1}))}-\frac{1}{v(c_{i+1})}\sum_{j=1}^{i+1}c_j} \\
		&\overset{c_{i+1}\leq c_i}{\geq}& \frac{1}{\frac{\rho}{d(p(c_i))}-\frac{1}{v(c_i)}\sum_{j=1}^{i+1}c_j} \\
		&>& \frac{1}{\frac{\rho}{d(p(c_i))}-\frac{1}{v(c_i)}\sum_{j=1}^{i}c_j} \\
		&=& \frac{v(c_{i})}{p(c_{i})-\sum_{j=1}^{i}c_j} = d(c_{i+1}).
	\end{eqnarray*}
	Because~$d$ is non-increasing, we have $c_{i+2}< c_{i+1}$.
	By an iterative argument it follows that, for all $i\in\{k,k+1,\dots\}$, we have $c_{i+1}< c_i$.
	This implies that the value of the solution $(c_1,c_2,\dots)$ is smaller or equal to $v(c_k)$ for all sizes.
	Yet, for large sizes $\capacity\in\N$, we have $v(\capacity)>\rho v(c_k)$ as $\lim_{c\rightarrow\infty}v(c)=\infty$.
\end{proof}

To show that $\greedycap(c_1,\rho)$ with the correct choice of~$c_1$ and~$\rho$ computes the best-possible solution, we need the following lemma.
It states that if it is possible to find a $\rho$-approximation algorithm for the problem, then, for any $k\in\N$, there exists an algorithm such that we cannot reduce any of the sizes up to size~$k$ in the solution produced by the algorithm without losing $\rho$-competitiveness.
The idea behind its proof is the following.
We start with some $\rho$-competitive solution and iteratively reduce the sizes until we converge to some algorithm and show that this algorithm satisfies the sought property.
To avoid running into some algorithm\linebreak represented by a sequence starting with 0's, we start our search with an algorithm with a minimal number of chosen sizes up to size~$k$.

\begin{lemma}\label{lem:fastest_ALG}
	Let $v$ be strictly increasing, let $\rho$ be an achievable approximation ratio, and let $k\in\N$.
	Then there exists a $\rho$-competitive solution $(c_{1}^{*},c_{2}^{*},\dots)$ such that there exists no other $\rho$-competitive solution $(c_{1}',c_{2}',\dots)$ with $c_{i}'\leq c_{i}^{*}$ for all $i\in\{1,\dots,n-1\}$, \mbox{$c_{i}'=c_{i}^{*}$} for all $i\in\{n,n+1,\dots\}$ and $\smash{\sum_{i=1}^{n}c_{i}'<\sum_{i=1}^{n}c_{i}^{*}}$, where $\smash{n:=\min\{\ell\in\N\mid\sum_{i=1}^{\ell}c_{i}^{*}\geq k\}}$.
	Furthermore, there is some $\capacity>0$ that is independent from~$k$ such that~$c_1^*>\capacity$.
\end{lemma}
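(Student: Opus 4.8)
The plan is to use a compactness/limit argument, descending through $\rho$-competitive solutions that are "shrunk" as much as possible on the prefix up to size $k$. First I would fix $k$ and, among all $\rho$-competitive solutions, restrict attention to those whose number of chosen sizes up to size $k$ is minimal; call this minimal number $n$. The reason for this preliminary step (as the excerpt hints) is to rule out limiting sequences that degenerate by inserting spurious tiny sizes near $0$: if we allowed the number of blocks up to size $k$ to grow, a minimizing sequence could converge to something with leading sizes tending to $0$, which is not a valid solution. Within this restricted family, every solution has the form $(c_1,\dots,c_{n-1},c_n,c_{n+1},\dots)$ with $\sum_{i=1}^{n}c_i\geq k$ and $\sum_{i=1}^{n-1}c_i<k$, and I would consider the partial order where $(c_i')\preceq(c_i^*)$ iff $c_i'\le c_i^*$ for $i<n$, $c_i'=c_i^*$ for $i\ge n$, and $\sum_{i=1}^n c_i'\le\sum_{i=1}^n c_i^*$; the goal is to produce a $\preceq$-minimal element.

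Next I would run an iterative shrinking process: starting from any $\rho$-competitive solution in the restricted family, repeatedly decrease the vector $(c_1,\dots,c_n)$ (keeping the tail $(c_{n+1},c_{n+2},\dots)$ fixed and keeping $\rho$-competitiveness, using the characterization in Lemma~\ref{lem:competitive_equivalence}) toward the infimum of $\sum_{i=1}^n c_i$ over the family. Taking a sequence of solutions whose prefix-sums converge to this infimum, the prefixes $(c_1^{(m)},\dots,c_n^{(m)})$ live in a compact box (each coordinate is bounded by the respective $c_i^*$ of the starting solution and bounded below by $0$), so by passing to a subsequence they converge to some limit $(c_1^*,\dots,c_n^*)$. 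I then append the fixed tail to obtain a candidate solution $(c_1^*,c_2^*,\dots)$. The key claims to verify are: (a) this limit is still $\rho$-competitive — this follows because the conditions in Lemma~\ref{lem:competitive_equivalence}(ii), namely $d(c_1)\ge\tfrac1\rho$ and $d(c_{i+1})\ge v(c_i)/(p(c_i)-\sum_{j\le i}c_j)$, are closed conditions in the $c_i$'s (here one uses continuity of $d$, $v$, and hence of $p$, plus the fact that the denominators stay bounded away from $0$ thanks to condition (iii), which is preserved in the limit); and (b) the limit attains the infimum of the prefix-sum and hence is $\preceq$-minimal. If some strictly smaller $\rho$-competitive solution in the family existed, its prefix-sum would be below the infimum — contradiction.

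Finally, for the uniform lower bound $c_1^*>\capacity$ independent of $k$: the first block must satisfy $d(c_1^*)\ge \tfrac1\rho$ by Lemma~\ref{lem:competitive_equivalence}, and since $d$ is (strictly) decreasing and continuous with $d(0)=1$, the set $\{c : d(c)\ge 1/\rho\}$ is an interval $[0,\capacity]$ with $\capacity>0$ depending only on $d$ and $\rho$, not on $k$; hence $c_1^*\le\capacity$ is forced — wait, I actually need $c_1^*>0$ bounded below, so more care is needed. The right statement: the minimality pressure only pushes $c_1^*$ down, and the only thing preventing $c_1^*\to 0$ is that then, by $(iii)$, all subsequent denominators $p(c_i)-\sum_{j\le i}c_j$ would have to shrink, but one shows via the recurrence that an all-small prefix cannot reach value high enough to stay $\rho$-competitive past size $k$ for $k$ large — except we want independence of $k$. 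I would instead argue directly from Proposition~\ref{prop:GreedyCap_rho-comp_equivalence}-type reasoning that $\greedycap(c_1^*,\rho)$ coincides with the minimal solution on the prefix, and that $\greedycap$ with too small a starting value produces a non-increasing (hence non-competitive) sequence; the threshold starting value below which this failure occurs depends only on $d,v,\rho$, giving the desired uniform $\capacity>0$.

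The main obstacle I anticipate is making the limit argument airtight at step (a): one must ensure the denominators $p(c_i^{(m)})-\sum_{j\le i}c_j^{(m)}$ do not degenerate to $0$ along the sequence (so that the inequalities pass cleanly to the limit), which requires invoking the strict inequality $p(c_i)>\sum_{j\le i}c_j$ from Lemma~\ref{lem:competitive_equivalence}(iii) together with a uniform bound — and separately, pinning down the $k$-independent constant $\capacity$ rigorously, since a naive bound would depend on $k$ through how far the solution must remain competitive.
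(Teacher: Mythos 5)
Your existence argument follows the paper's route: fix the tail, minimize the number of blocks in the prefix (to prevent degeneration to sizes near $0$), take a minimizing sequence for the prefix sum, extract a limit by compactness/monotonicity, and pass the conditions of Lemma~\ref{lem:competitive_equivalence} to the limit using continuity of $d$, $v$, $p$; the limit attains the infimum and is therefore minimal. This part is essentially the paper's proof (the paper uses a nested, coordinatewise decreasing sequence so the limit exists without Bolzano--Weierstrass, but that is a cosmetic difference), and your worry about denominators degenerating is handled exactly as you suggest, since strictness of $0<c_1^*<c_2^*<\dots$ is restored from minimality of the block count.

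However, there is a genuine gap in your treatment of the ``furthermore'' part, the $k$-independent bound $c_1^*>\capacity$. Your proposed mechanism --- that $\greedycap(c_1,\rho)$ with too small a starting value produces a non-increasing, hence non-competitive, sequence, with a threshold depending only on $d,v,\rho$ --- is false in general: by Theorem~\ref{thm:GreedyCap_phi+1-comp}, $\greedycap(c_1,\varphi+1)$ is $(\varphi+1)$-competitive whenever $d(c_1)\geq\frac{1}{\varphi+1}$, which includes arbitrarily small $c_1$ on any instance (since $d(0)=1$ and $d$ is continuous), so no such universal failure threshold exists; more generally, for $\rho\geq 2$ a tiny $c_1$ forces only $d(c_2)\approx\frac{1}{\rho-1}\leq 1$, which is perfectly feasible. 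The point is not that small starts are infeasible, but that they are wasteful for a solution with a \emph{minimal number of prefix blocks}: by that minimality one must have $d(c_2^*)<\frac{1}{\rho}$ (otherwise dropping $c_1^*$ yields a $\rho$-competitive solution with one fewer block, since the inequalities of Lemma~\ref{lem:competitive_equivalence}(ii) only gain slack when the prefix sum decreases), while $\rho$-competitiveness forces
\[
d(c_2^*)\;\geq\;\frac{v(c_1^*)}{p(c_1^*)-c_1^*}\;=\;\frac{1}{\frac{\rho}{d(p(c_1^*))}-\frac{1}{d(c_1^*)}}\;\geq\;\frac{1}{\frac{\rho}{1-\epsilon}-1}\;\geq\;\frac{1}{\rho}
\]
once $c_1^*$ is small enough that $d(p(c_1^*))\geq 1-\epsilon$, where $\epsilon$ is chosen with $\frac{\rho}{1-\epsilon}-1\leq\rho$. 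This contradiction yields a bound $\capacity$ depending only on $d$ and $\rho$, not on $k$. Your two earlier attempts (an upper bound from $d(c_1^*)\geq\frac1\rho$, and a recurrence argument that you yourself note depends on $k$) do not close this gap either, so as written the final claim of the lemma is not established.
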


\begin{proof}
	We fix a $\rho$-competitive solution $(c_{1},c_{2},\dots)$ with \mbox{$c_{1}<c_{2}<\dots$}.
	We define\linebreak $\smash{n:=\min\{\ell\in\N\mid\sum_{i=1}^{\ell}c_{i}\geq k\}}$.
	For $i\in\N$, let $\mathcal{S}_{i}$ be the set of all $\rho$-competitive solutions 	$(c_{1}',c_{2}',\dots)$ with $c_{n+j}=c_{i+j}'$ for all $j\in\N\cup\{0\}$.
	Let $m:=\min\{i\in\N\mid\mathcal{S}_{i}\neq\emptyset\}$.
	This value is well-defined since $(c_{1},c_{2},\dots)\in\mathcal{S}_{n}$.
	For every solution $(c_{1}',c_{2}',\dots)\in\mathcal{S}_{m}$, we have
	\begin{equation}
	0<c_{1}'<c_{2}'<\dots<c_{m}'=c_{n}\label{eq:sequence_inequalities}
	\end{equation}
	because otherwise it would be possible to skip a size, which is a contradiction to the minimality of~$m$.
	For every $(c_{1}',c_{2}',\dots)\in\mathcal{S}_{m}$, we define the set
	\[
	\mathcal{S}_{m}((c_{1}',c_{2}',\dots)):=\{(c_{1}'',c_{2}'',\dots)\in\mathcal{S}_{m}\mid\forall i\in\N:c_{i}''\leq c_{i}'\}.
	\]
	To prove the lemma, it suffices to show that there exists some solution $(c_{1}^{*},c_{2}^{*},\dots)$ with
	\begin{equation}
	\mathcal{S}_{m}((c_{1}^{*},c_{2}^{*},\dots))=\{(c_{1}^{*},c_{2}^{*},\dots)\}.\label{eq:lex_small_alg_sufficient_condition}
	\end{equation}
	It is easy to see that, for every $(c_{1}'',c_{2}'',\dots)\in\mathcal{S}_{m}((c_{1}',c_{2}',\dots))$,	we have
	\begin{equation}
	\mathcal{S}_{m}((c_{1}'',c_{2}'',\dots))\subseteq\mathcal{S}_{m}((c_{1}',c_{2}',\dots)).\label{eq:sequence-sets_downward_closed}
	\end{equation}
	For $(c_{1}',c_{2}',\dots)\in\mathcal{S}_m$, we define
	\[
	s_{m}((c_{1}',c_{2}',\dots)):=\inf\Bigl\{\sum_{i=1}^{m}c_{i}''\Bigm|(c_{1}'',c_{2}'',\dots)\in\mathcal{S}_{m}((c_{1}',c_{2}',\dots))\Bigr\}.
	\]
	Since \eqref{eq:sequence_inequalities} holds, this value is larger than $0$ and smaller than $\sum_{j=1}^{n}c_{j}$, and is therefore well-defined.
	
	We fix some solution $(c_{1}^{1},c_{2}^{1},\dots)\in\mathcal{S}_{m}$ and recursively define a sequence of solutions such that, for all $i,j\in\N$, we have $(c_{1}^{j+1},c_{2}^{j+1},\dots)\in\mathcal{S}_{m}((c_{1}^{j},c_{2}^{j},\dots))$, and such that
	\begin{equation}
	\bigl(\sum_{i=1}^{m}c_{i}^{j+1}\bigr)-s_{m}((c_{1}^{j},c_{2}^{j},\dots))\leq\Bigl(\frac{1}{2}\Bigr)^{j}.\label{eq:sequence_sum_convergence}
	\end{equation}
	This sequence is well-defined because the infimum can be approximated arbitrarily close.
	
	\emph{Claim 1:} The limit $(c_{1}^{*},c_{2}^{*},\dots):=\lim_{j\rightarrow\infty}(c_{1}^{j},c_{2}^{j},\dots)$ exists.
	
	\emph{Proof of Claim 1:} The sequence $\bigl(s_{m}((c_{1}^{j},c_{2}^{j},\dots))\bigr)_{j\in\N}$	is increasing because of \eqref{eq:sequence-sets_downward_closed}, and the sequence $\smash{\bigl(\sum_{i=1}^mc_{i}^{j}\bigr)_{j\in\N}}$ is decreasing because $(c_{1}^{j+1},c_{2}^{j+1},\dots)\in\mathcal{S}_{m}((c_{1}^{j},c_{2}^{j},\dots))$.
	Furthermore, $\smash{s_{m}((c_{1}^{j},c_{2}^{j},\dots))\leq\sum_{i=1}c_{i}^{j}}$ and we have $\smash{\lim_{j\rightarrow\infty}s_{m}((c_{1}^{j},c_{2}^{j},\dots))=\lim_{j\rightarrow\infty}\sum_{i=1}c_{i}^{j}}$ because of~\eqref{eq:sequence_sum_convergence}.
	Because $c_{i}^{j+1}\leq c_{i}^{j}$ for all $i,j\in\N$, the sequence $\bigl((c_{1}^{j},c_{2}^{j},\dots)\bigr)_{j\in\N}$ converges to some solution $(c_{1}^{*},c_{2}^{*},\dots)$ with
	\begin{equation}
	\sum_{i=1}^{m}c_{i}^{*}=\lim_{j\rightarrow\infty}\sum_{i=1}^{m}c_{i}^{j}=\lim_{j\rightarrow\infty}s_{m}((c_{1}^{j},c_{2}^{j},\dots)).\label{eq:sequence_of_sequences_limit_sum}
	\end{equation}
	
	\emph{Caim 2:} We have $(c_{1}^{*},c_{2}^{*},\dots)\in\mathcal{S}_m$.
	
	\emph{Proof of Claim 2:}
	For every $j\in\N$, we have $c_{m+\ell}^{j}=c_{n+\ell}$ for all $\ell\in\N\cup\{0\}$ and therefore also $c_{m+\ell}^{*}=c_{n+\ell}$ for all $\ell\in\N\cup\{0\}$.
	So, it remains to prove that the solution $(c_{1}^{*},c_{2}^{*},\dots)$ is $\rho$-competitive.
	For all $j\in\N$, we have $\smash{c_{1}^{j}\geq\frac{1}{\rho}}$ and thus $\smash{c_{1}^{*}=\lim_{j\rightarrow\infty}c_{1}^{j}\geq\frac{1}{\rho}}$.
	Next, we show that
	\[
	d(c_{i+1}^{*})\geq\frac{v(c_{i}^{*})}{p(c_{i}^{*})-\sum_{\ell=1}^{i}c_{\ell}^{*}}
	\]
	holds for all $i\in\N$.
	The function $p$ is continuous by continuity and strict monotonicity of~$v$.
	Continuity of $v$ and $p$ imply that $\smash{v(c_{i}')/(p(c_{i}')-\sum_{\ell=1}^ic_{\ell}')}$ is continuous in $c_{1}',\dots,c_{i}'$.
	By Lemma~\ref{lem:competitive_equivalence}, we have
	\[
	d(c_{i+1}^{j})\geq\frac{v(c_{i}^{j})}{p(c_{i}^{j})-\sum_{\ell=1}^{i}c_{\ell}^{j}}
	\]
	for all $j\in\N$ because $(c_{1}^{j},c_{2}^{j},\dots)$ is $\rho$-competitive.
	Both sides of this inequality are continuous in $c_{1}^{j},c_{2}^{j},\dots$, and we have $\lim_{j\rightarrow\infty}c_{i}^{j}=c_{i}^{*}$.
	Those two facts imply that
	\[
	d(c_{i+1}^{*})\geq\frac{v(c_{i}^{*})}{p(c_{i}^{*})-\sum_{\ell=1}^{i}c_{\ell}^{*}}
	\]
	holds.
	To prove $\rho$-competitiveness of $(c_{1}^{*},c_{2}^{*},\dots)$, by Lemma \ref{lem:competitive_equivalence}, it remains to show that\linebreak $0<c_{1}^{*}<c_{2}^{*}<\dots$ holds.
	We know that this holds for all solutions $(c_{1}^{j},c_{2}^{j},\dots)$, $j\in\N$.
	Therefore, we have \mbox{$0\leq c_{1}^{*}\leq\dots\leq c^{*}<c_{+1}^{*}<\dots$}.
	But if some of these inequalities were satisfied with equality, we would be able to leave out sizes until we are left with a solution that satisfies the strict inequalities.
	This solution would need less than $m$ sizes to accumulate a solution of size $c^{*}$, i.e., $m$ would not be minimal.
	Therefore, $0<c_{1}^{*}<c_{2}^{*}<\dots$ holds.
	
	We have established $(c_{1}^{*},c_{2}^{*},\dots)\in\mathcal{S}_{m}$ and therefore $(c_{1}^{*},c_{2}^{*},\dots)\in\mathcal{S}_{m}((c_{1}^{j},c_{2}^{j},\dots))$, which implies that $\mathcal{S}_{m}((c_{1}^{*},c_{2}^{*},\dots))\subseteq\mathcal{S}_{m}((c_{1}^{j},c_{2}^{j},\dots))$.
	Combined with the fact that we have\linebreak \mbox{$(c_{1}^{*},c_{2}^{*},\dots)\in\mathcal{A}_{m}((c_{1}^{*},c_{2}^{*},\dots))$}, this implies
	\[
	\sum_{i=1}^{m}c_{i}^{*}\geq s_{m}((c_{1}^{*},c_{2}^{*},\dots))\geq\lim_{j\rightarrow\infty}s_{m}((c_{1}^{j},c_{2}^{j},\dots))\overset{\eqref{eq:sequence_of_sequences_limit_sum}}{=}\sum_{i=1}^{m}c_{i}^{*},
	\]
	i.e., $s_{m}((c_{1}^{*},c_{2}^{*},\dots))=\sum_{i=1}^{m}c_{i}^{*}$.
	Thus, \eqref{eq:lex_small_alg_sufficient_condition} holds.
	
	It remains to show that there is some~$\capacity>0$ that is independent from~$k$ such that $c_1^*>\capacity$.
	Suppose the contrary, i.e., that $c_1^*$ is not bounded from~$0$ for varying values of~$k$.
	Let $\epsilon>0$ be small enough such that $\frac{\rho}{1-\epsilon}-1\leq\rho$.
	Furthermore, let $k\in\N$ such that the above defined value~$c_1^*$ satisfies $d(p(c_1^*))\geq 1-\epsilon$, which is possible because $d(0)=1$,~$d$ is continuous, and~$c_1^*$ is not bounded from~$0$.
	By minimality of~$m$ and Lemma~\ref{lem:competitive_equivalence}, we have
	\[
	\frac{1}{\rho} > d(c_{2}^{*}) \geq \frac{v(c_1^*)}{p(c_1^*)-c_1^*} = \frac{1}{\frac{\rho}{d(p(c_1^*))}-\frac{1}{d(c_1^*)}} \geq \frac{1}{\frac{\rho}{1-\epsilon}-1} \geq \frac{1}{\rho}
	\]
	which is a contradiction.
	Thus, $c_1^*$ is bounded from~$0$.
\end{proof}

Using this lemma, we can show that $\greedycap(c_1,\rho)$ for the correct choice of~$c_1$ and~$\rho$ can achieve every possible competitive ratio.
This immediately implies Theorem~\ref{thm:GreedyCap_best_possible}.

\begin{lemma}
	Let $v$ be strictly increasing and $d$ be strictly decreasing, and let $\rho$ an achievable approximation ratio.
	There exists a starting value $\smash{c_{1}^{*}\in\bigl[d^{-1}\bigl(\frac{\rho-1}{\rho}\bigr),d^{-1}(\frac{1}{\rho})\bigr]}$ such that the algorithm $\greedycap(c_{1}^{*},\rho)$ is $\rho$-competitive for all sizes in $\R_{\geq0}$.
\end{lemma}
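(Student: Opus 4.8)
The plan is to build the starting value as a limit of the starting values of the ``fastest'' $\rho$-competitive solutions provided by Lemma~\ref{lem:fastest_ALG}, after first showing that each such fastest solution is, on the relevant initial segment, exactly a run of $\greedycap$. So the first (and main) step is the following. Fix $k\in\N$, let $(c_1^*,c_2^*,\dots)$ be the fastest $\rho$-competitive solution from Lemma~\ref{lem:fastest_ALG} with $n=\min\{\ell\mid\sum_{i\le\ell}c_i^*\ge k\}$, and show by induction on $i$ that $\greedycap(c_1^*,\rho)$ chooses precisely $c_1^*,\dots,c_n^*$ as its first $n$ sizes. Given $c_1^*,\dots,c_i^*$ with $i\le n-1$, I would take an arbitrary $c'<c_i^*$, replace only $c_i^*$ by $c'$ in the fastest solution, and note that by the irreducibility property this new solution is not $\rho$-competitive. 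Going through the characterization of Lemma~\ref{lem:competitive_equivalence}(ii): shrinking $c_i^*$ to $c'$ keeps $d(c_1^*)\ge1/\rho$, does not affect the inequalities of index $j<i-1$, only \emph{strengthens} the one of index $j=i-1$ (because $d(c')>d(c_i^*)$), and only \emph{relaxes} every inequality of index $j>i$ (all partial sums drop). Hence the violated inequality must be the one of index $j=i$; letting $c'\nearrow c_i^*$ and using continuity of $v$, $p$ and $d$ gives $d(c_{i+1}^*)\le v(c_i^*)/(p(c_i^*)-\sum_{\ell\le i}c_\ell^*)$, while $\rho$-competitiveness of $(c_1^*,c_2^*,\dots)$ gives the reverse inequality. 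Equality together with strict monotonicity of $d$ forces $c_{i+1}^*$ to be exactly the size chosen by $\greedycap$ in~\eqref{eq:GreedyCap_def}, closing the induction; in particular $\greedycap(c_1^*,\rho)$ is $\rho$-competitive up to size $k$ (only its first $n$ sizes matter there).

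Running this for every $k$ and writing $c_1^{(k)}$ for the starting value so obtained, $\rho$-competitiveness and Lemma~\ref{lem:competitive_equivalence} give $d(c_1^{(k)})\ge1/\rho$, i.e.\ $c_1^{(k)}\le d^{-1}(1/\rho)$, and Lemma~\ref{lem:fastest_ALG} gives a $k$-independent bound $c_1^{(k)}>\capacity>0$; so $(c_1^{(k)})_k$ lies in the compact interval $[\capacity,d^{-1}(1/\rho)]$ and admits a subsequence $c_1^{(k_j)}\to c_1^*$. I would then use that the greedy construction is continuous in its starting value wherever it is defined --- $v$, $d$, $d^{-1}$ are continuous, and $p$ is continuous by continuity and strict monotonicity of $v$ --- so that for each fixed size $\capacity$ the value $f(\greedycap(c_1,\rho)(\capacity))$ depends continuously on $c_1$ (only finitely many of its sizes are relevant). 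Fixing $\capacity$ and choosing $j$ with $k_j>\capacity$, the inequality $\rho\, f(\greedycap(c_1^{(k_j)},\rho)(\capacity))\ge v(\capacity)$ passes to the limit, so $\rho\, f(\greedycap(c_1^*,\rho)(\capacity))\ge v(\capacity)$; as $\capacity$ is arbitrary, $\greedycap(c_1^*,\rho)$ is $\rho$-competitive for all sizes.

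Finally I would place $c_1^*$ in the stated interval. The upper endpoint is immediate: $d(c_1^*)=\lim_j d(c_1^{(k_j)})\ge1/\rho$ gives $c_1^*\le d^{-1}(1/\rho)$. For the lower endpoint I would use that the proof of Lemma~\ref{lem:fastest_ALG} also yields $d(c_2^{(k)})<1/\rho$ (minimality of the number of chosen sizes): together with the identity $d(c_2^{(k)})=v(c_1^{(k)})/(p(c_1^{(k)})-c_1^{(k)})$ from the first step and the relation $p(c)\,d(p(c))=\rho\,v(c)$, an elementary estimate exploiting that $d$ is non-increasing turns $d(c_2^{(k)})<1/\rho$ into $d(c_1^{(k)})\le(\rho-1)/\rho$, which transfers to the limit to give $c_1^*\ge d^{-1}((\rho-1)/\rho)$. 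Should this estimate turn out to be too weak by itself, a fallback is to instead take $c_1^*$ to be the \emph{largest} starting value with $\greedycap(\cdot,\rho)$ $\rho$-competitive --- such a largest element exists since this set is closed (by the continuity above) and bounded above by $d^{-1}(1/\rho)$ --- and to argue that $d(c_1^*)>(\rho-1)/\rho$ would permit nudging the starting value slightly upward while keeping all greedy sizes strictly increasing, contradicting maximality. This would complete the proof (and, via Proposition~\ref{prop:GreedyCap_rho-comp_equivalence}, also Theorem~\ref{thm:GreedyCap_best_possible} with $\rho=\rho^*$).

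The crux is the first step, recognizing the fastest solution as a $\greedycap$-run; it is exactly what the ``one-coordinate irreducibility'' formulation of Lemma~\ref{lem:fastest_ALG} was set up for. The second delicate point is the lower-endpoint argument in the last step, where the specific threshold $(\rho-1)/\rho$ has to be made to surface.
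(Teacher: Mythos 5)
Your approach mirrors the paper's proof closely: you take the ``fastest'' solutions from Lemma~\ref{lem:fastest_ALG} for each $k$, show they coincide with a run of $\greedycap$ on the relevant prefix (the paper does this by proving equality in the inequalities of Lemma~\ref{lem:competitive_equivalence}(ii) via the same perturbation-of-one-coordinate argument), extract a convergent subsequence of starting values via Bolzano--Weierstrass, and pass to the limit using continuity of $v$, $d$, $p$. Those two steps are correct and match the paper, including the observation that only the $j=i$ inequality can be violated when $c_i^*$ is shrunk slightly.

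The place where your proposal has a genuine gap is the lower endpoint of the interval. You claim that $d(c_2^{(k)})<1/\rho$ together with $d(c_2)=1/\bigl(\tfrac{\rho}{d(p(c_1))}-\tfrac{1}{d(c_1)}\bigr)$ yields $d(c_1^{(k)})\le(\rho-1)/\rho$, but this does not follow: $d(p(c_1))$ can be substantially smaller than $d(c_1)$, which makes $\tfrac{\rho}{d(p(c_1))}$ large and $d(c_2)$ small even while $d(c_1)$ stays close to $1$. The bound you want would hold under the stronger hypothesis $d(p(c_1))\ge d(c_1)$ (which is false for strictly decreasing $d$), not under the available $d(p(c_1))\le d(c_1)$. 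Your fallback (take the largest admissible $c_1^*$ and argue that $d(c_1^*)>(\rho-1)/\rho$ would let you push it up) is only an assertion; no mechanism is given for why the nudge would preserve $\rho$-competitiveness, and indeed increasing $c_1$ changes every downstream greedy choice. It is worth noting that the paper's own proof only invokes the qualitative bound $c_1^*\ge\capacity>0$ from Lemma~\ref{lem:fastest_ALG} and never verifies the explicit endpoint $d^{-1}\bigl((\rho-1)/\rho\bigr)$ either; since Theorem~\ref{thm:GreedyCap_best_possible} does not use the interval, the gap is inessential for what follows, but your derivation of the lower endpoint as written is not correct.
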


\begin{proof}
	By Lemma \ref{lem:fastest_ALG}, for every $k\in\N$, there exists a $\rho$-competitive solution $(c_{1}^{k},c_{2}^{k},\dots)$ such that, with $n(k):=\min\{\ell\in\N\mid\sum_{i=1}^{\ell}c_{i}\geq k\}$, there exists no other $\rho$-competitive solution $(c_{1}',c_{2}',\dots)$ with $c_{i}'\leq c_{i}^{k}$ for all $i\in\{1,\dots,n(k)-1\}$, $c_{i}'=c_{i}^{k}$ for all $i\in\{n(k),n(k)+1,\dots\}$ and $\smash{\sum_{i=1}^{n(k)}c_{i}'<\sum_{i=1}^{n(k)}c_{i}^{k}}$.
	Furthermore, there is $\capacity>0$ such that $c_{1}^{k}\geq \capacity$ for all $k\in\N$.
	Without loss of generality, we can assume that \mbox{$v(c_{i+1}^{k})\geq v(c_{i}^{k})$} for all $i\in\N$.
	Because $(c_{1}^{k},c_{2}^{k},\dots)$ is $\rho$-competitive, by Lemma \ref{lem:competitive_equivalence}, we know that, for all $i\in\N$, we have
	\begin{equation}
	d(c_{i+1}^{k})\geq\frac{v(c_{i}^{k})}{p(c_{i}^{k})-\sum_{j=1}^{i}c_{j}^{k}}.\label{eq:ALG_iterative_geq-1}
	\end{equation}
	Suppose there was $i'\in\{1,\dots,n(k)-1\}$ such that~\eqref{eq:ALG_iterative_geq-1} does not hold with equality.
	By continuity of $v$, $d$ and $p$, we can find $c_{i'}'<c_{i'}^{k}$ such that
	\begin{equation}
	d(c_{i'+1}^{k})>\frac{v(c_{i'}')}{p(c_{i'}')-c_{i'}'-\sum_{j=1}^{i'-1}c_{j}}>\frac{v(c_{i'}^{k})}{p(c_{i'}^{k})-\sum_{j=1}^{i'}c_{j}^{k}}.\label{eq:smaller_sequence_rho_competitive-1}
	\end{equation}
	The solution $(c_{1}',c_{2}',\dots):=(c_{1}^{k},\dots,c_{i^{'}-1}^{k},c_{i'}',c_{i'+1}^{k},\dots)$ satisfies
	\[
	d(c_{i+1}')\geq\frac{v(c_{i}')}{p(c_{i}')-\sum_{j=1}^{i}c_{j}'}
	\]
	for all $i\in\N$.
	For $i\in\{1,\dots,i'-1\}$, this follows immediately from \eqref{eq:ALG_iterative_geq-1}, for $i=i'$, this follows from~\eqref{eq:smaller_sequence_rho_competitive-1}, and, for $i\in\{i'+1,i'+2,\dots\}$, this is due to \eqref{eq:ALG_iterative_geq-1} and the fact that $c_{i'}'<c_{i'}^{k}$.
	We have $c_{i}'\leq c_{i}^{k}$ for all $i\in\{1,\dots,n(k)-1\}$, $c_{i}'=c_{i}^{k}$ for all $i\in\{n(k),n(k)+1,\dots\}$ and $\smash{\sum_{i=1}^{n(k)}c_{i}'<\sum_{i=1}^{n(k)}c_{i}^{k}}$, which is a contradiction to our initial choice of $(c_{1}^{k},c_{2}^{k},\dots)$.
	Thus, for all $i\in\{1,\dots,n(k)-1\}$,~\eqref{eq:ALG_iterative_geq-1} holds with equality.
	
	The sequence $\bigl(c_{1}^{k}\bigr)_{k\in\N}$ is bounded since $\smash{\capacity\leq c_{1}^{k}\leq d^{-1}(\frac{1}{\rho})}$.
	Therefore, by the Bolzano-Weierstrass theorem, it contains a converging subsequence $\smash{\bigl(c_{1}^{k_{\ell}}\bigr)_{\ell\in\N}}$ with \mbox{$k_{\ell}\in\N$} and $k_{\ell+1}>k_{\ell}$ for all $\ell\in\N$.
	We define $c_{i}^{*}:=\lim_{\ell\rightarrow\infty}c_{i}^{k_{\ell}}$ for all $i\in\N$.
	We have $v(c_{1}^{*})\geq v(d^{-1}(\capacity))>0$ and $v(c_{i+1}^{*})\geq v(c_{i}^{*})$ by continuity of $v$ and because this holds for all sequences $\smash{\bigl(c_{i}^{k}\bigr)_{i\in\N}}$.
	By Lemma \ref{lem:competitive_equivalence}, we have $\smash{d(c_{1}^{k})\geq\frac{1}{\rho}}$ and
	\[
	d(c_{i+1}^{k})\geq\frac{v(c_{i}^{k})}{p(c_{i}^{k})-\sum_{j=1}^{i}c_{j}^{k}}
	\]
	for all $i,k\in\N$.
	Continuity of $d$, $v$ and $p$ yields $\smash{d(c_{1}^{*})\geq\frac{1}{\rho^{*}}}$ and
	\[
	d(c_{i+1}^{*})\geq\frac{v(c_{i}^{*})}{p(c_{i}^{*})-\sum_{j=1}^{i}c_{j}^{*}}.
	\]
	Thus, the sequence $\bigl(c_{1}^{*},c_{2}^{*},\dots\bigr)$ represents
	a $\rho$-competitive algorithm.
	It remains to show that
	\[
	c_{i+1}^{*}=d^{-1}\bigl(\frac{v(c_{i}^{*})}{p(c_{i}^{*})-\sum_{j=1}^{i}c_{j}^{*}}\bigr).
	\]
	Note that $n(k)$ increases in $k$, i.e., for all $N\in\N$, there is $K\in\N$ such that $n(k)\geq N$ for all $k\in\N$ with $k\geq K$.
	This implies that for every $i\in\N$, there exists some $K\in\N$ such that
	\[
	d(c_{i+1}^{k})=\frac{v(c_{i}^{k})}{p(c_{i}^{k})-\sum_{j=1}^{i}c_{j}^{k}}
	\]
	for all $k\in\N$ with $k\geq K$.
	Since $c_{i}^{*}=\lim_{\ell\rightarrow\infty}c_{i}^{k_{\ell}}$, the desired equality follows, i.e., the sequence $(c_{1}^{*},c_{2}^{*},\dots)$ describes the algorithm $\greedycap(c_{1}^{*},\rho)$.
\end{proof}

This result immediately implies Theorem~\ref{thm:GreedyCap_best_possible}.

For a range of starting values~$c_1$, we are able to show the upper bound on the competitive ratio of $\greedycap(c_1,\varphi+1)$ in Theorem~\ref{thm:GreedyCap_phi+1-comp}, where $\varphi=\frac{1}{2}(1+\sqrt{5})\approx1.618$ is the golden ratio.

\GreedyCapComp*

\begin{proof}
	``$\Rightarrow$'': By Lemma~\ref{lem:competitive_equivalence}, $\smash{d(c_1)\geq\frac{1}{\varphi+1}}$ holds because the algorithm is $(\varphi+1)$-competitive.
	
	``$\Leftarrow$'': Let $(c_{1},c_{2},\dots)$ be the solution produced by $\greedycap(c_{1},\varphi+1)$.
	By Proposition~\ref{prop:GreedyCap_rho-comp_equivalence}, to show $(\varphi+1)$-competitiveness, it suffices to show that  $c_{i}\leq c_{i+1}$.
	
	\emph{Claim:} We have $c_{i+1}\geq(\varphi+1)c_{i}$ for all $i\in\N$.
	
	We have \mbox{$p(c_{i})=\max\{c\geq0\mid v(c)\leq(\varphi+1)v(c_{i})\}$}.
	This implies \mbox{$v(p(c_{i}))=(\varphi+1)v(c_{i})$} by continuity of~$v$, and thus
	\begin{equation}
	p(c_{i})=\frac{v(p(c_{i}))}{d(p(c_{i}))}=\frac{(\varphi+1)v(c_{i})}{d(p(c_{i}))}\geq\frac{(\varphi+1)v(c_{i})}{d(c_{i})}=(\varphi+1)c_{i},\label{eq:p_lower_bound}
	\end{equation}
	where the inequality holds because $v(p(c_{i}))=(\varphi+1)v(c_{i})>v(c_{i})$.
	Thus, and because~$v$ is increasing, $p(c_{i})>c_{i}$ , which, implies $d(p(c_{i}))\leq d(c_{i})$ because $d$ is monotone.
	
	\emph{Proof of claim:}
	We will prove the claim by induction.
	Let $i\in\N$ and suppose the claim holds for all $j\in\{1,\dots,i-1\}$.
	Then, $\smash{c_{j}\leq\frac{1}{(\varphi+1)^{i-j}}c_{i}}$ for all $j\in\{1,\dots,i-1\}$, and therefore
	
	\begin{equation}
	\sum_{j=1}^{i}c_{j} \leq c_{i}\sum_{j=1}^{i}(\varphi+1)^{j-i} = \frac{1-(\varphi+1)^{-i}}{1-(\varphi+1)^{-1}}c_{i} < \frac{1}{1-(\varphi+1)^{-1}}c_{i} = \varphi c_{i}.\label{eq:inequality_sum_claim}
	\end{equation}
	This yields
	\begin{align*}
		d(c_{i+1}) &= \frac{v(c_{i})}{p(c_{i})-\sum_{j=1}^{i}c_{j}} \overset{\eqref{eq:inequality_sum_claim}}{<} \frac{v(c_{i})}{p(c_{i})-\varphi c_{i}} \overset{\eqref{eq:p_lower_bound}}{\leq} \frac{v(c_{i})}{p(c_{i})-\frac{\varphi}{\varphi+1}p(c_{i})} \\
		&= (\varphi+1)\frac{v(c_{i})}{p(c_{i})} = \frac{v(p(c_{i}))}{p(c_{i})} = d(p(c_{i})),
	\end{align*}
	which implies $c_{i+1}>p(c_{i})$ because $d$ is decreasing.
	Together with~\eqref{eq:p_lower_bound}, this yields the claim.
\end{proof}

Since $\greedycap(c_1,\rho)$ with the correct starting value~$c_1$ is the best-possible algorithm for a fixed instance, we can give a lower bound of $\rho>1$ for the \contincmax problem by finding an instance that is a lower bound for $\greedycap(c_1,\rho)$ with all starting values~$c_1>0$ that satisfy $d(c_1)\leq 1/\rho$.
In the following, we show that, for every countable set of starting values, there is an instance where $\greedycap(c_1,\rho)$ cannot have a competitive ratio of better than $\varphi+1$ for any of these starting values.
In order to do this, we need the following lemma.

\begin{lemma}\label{lem:recursive_sequence_becomes_negative}
	For $\alpha,\beta,\rho,\epsilon\in\R_{\geq0}$ with $\beta>0$, consider the recursively defined sequence $(t_{n})_{n\in\N}$ with
	\[
	t_{0}=\beta,\quad\quad t_{n+1}=\frac{1}{\frac{\rho}{t_{n}(1-\epsilon)}-\bigl(\sum_{j=0}^{n}\frac{(\rho+\epsilon)^{j-n}}{t_{j}}\bigr)-\frac{\alpha}{(\rho+\epsilon)^{n}}}\quad\textrm{for all }n\in\N\cup\{0\}.
	\]
	If $1<\rho<\varphi+1$, then there exists $\epsilon'>0$ such that, for all $\epsilon\in(0,\epsilon']$, there is $\ell\in\N$ with $t_{\ell}<0$.
\end{lemma}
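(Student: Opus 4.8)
The plan is to linearize the recursion by passing to suitable partial sums, and then to read off its behaviour from the roots of the resulting characteristic polynomial, which turn out to be non-real precisely in the stated range of $\rho$ --- with $\varphi+1$ emerging exactly as the threshold at which the roots become real. So fix a small $\epsilon\ge 0$ for now, suppose the sequence is well defined with $t_n>0$ for $n=0,\dots,N$, and put $S_n:=\sum_{j=0}^n (\rho+\epsilon)^j/t_j$, so that $S_0=1/\beta>0$ and $S_n-S_{n-1}=(\rho+\epsilon)^n/t_n>0$ for $1\le n\le N$. Multiplying the defining recursion by $(\rho+\epsilon)^{n+1}$ and using $(\rho+\epsilon)^{n+1}/t_n=(\rho+\epsilon)(S_n-S_{n-1})$ turns it into the second-order \emph{linear} recursion
\[
S_{n+1}=(1+a-b)\,S_n-a\,S_{n-1}-b\alpha,\qquad a:=\tfrac{\rho(\rho+\epsilon)}{1-\epsilon},\quad b:=\rho+\epsilon,
\]
for $1\le n\le N-1$. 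The right-hand side is defined for all $n$ regardless of $(t_n)$, so I would extend $(S_n)_{n\ge 0}$ to the sequence determined solely by $\rho,\alpha,\beta,\epsilon$ together with the explicit values $S_0=1/\beta$ and $S_1$ (where $1/t_1=\tfrac{\rho}{\beta(1-\epsilon)}-\tfrac1\beta-\alpha$).

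The next step is to analyze the characteristic polynomial $x^2-(1+a-b)x+a$ of the homogeneous part. At $\epsilon=0$ it equals $x^2-(\rho^2-\rho+1)x+\rho^2$, whose discriminant factors as $(\rho^2-3\rho+1)(\rho^2+\rho+1)$. Since $\rho^2+\rho+1>0$ always, while $\rho^2-3\rho+1<0$ exactly for $\rho\in\bigl(\tfrac{3-\sqrt5}{2},\tfrac{3+\sqrt5}{2}\bigr)=(2-\varphi,\varphi+1)$, the hypothesis $1<\rho<\varphi+1$ places $\rho$ strictly inside this interval and the discriminant is strictly negative. Hence the two roots form a complex-conjugate pair $r\mathrm{e}^{\pm\mathrm{i}\theta}$ of modulus $r=\sqrt a$ (their product is $a$) and argument $\theta$ with $\cos\theta=\tfrac{1+a-b}{2\sqrt a}\in(-1,1)$, so $\theta\in(0,\pi)$. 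As $a$ and $b$ depend continuously on $\epsilon$ and at $\epsilon=0$ we have $r=\rho>1$ and a strictly negative discriminant, there is $\epsilon'>0$ such that for every $\epsilon\in[0,\epsilon']$ the discriminant remains negative, $r=r_\epsilon>1$, and $\theta=\theta_\epsilon\in(0,\pi)$. This is where the golden ratio enters: $\varphi+1=\tfrac{3+\sqrt5}{2}$ is the value of $\rho$ at which the two roots coalesce into a real double root, beyond which they are real.

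Since $b=\rho+\epsilon>0$ the constant $-\alpha$ is a particular solution, so $S_n=-\alpha+r_\epsilon^{\,n}\,C\cos(n\theta_\epsilon-\psi)$ for some phase $\psi$ and amplitude $C=\sqrt{A^2+B^2}$, where $A=S_0+\alpha=\tfrac1\beta+\alpha>0$; in particular $C>0$. Because $\theta_\epsilon\in(0,\pi)$ is not an integer multiple of $\pi$, the values $\cos(n\theta_\epsilon-\psi)$ stay $\le-\delta$ for some fixed $\delta>0$ and infinitely many $n$ (Weyl equidistribution if $\theta_\epsilon/\pi$ is irrational; otherwise $\theta_\epsilon/(2\pi)$ is rational with reduced denominator $q\ge3$, and the $q\ge3$ equally spaced points visited by $n\theta_\epsilon$ cannot all lie in a closed arc of length $\pi$). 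As $r_\epsilon>1$, this forces $S_n\le-\alpha-\delta C r_\epsilon^{\,n}$ along a subsequence, hence $S_N<0$ for some $N$. But if $t_0,\dots,t_N$ were all well defined and positive then $0<S_0<S_1<\dots<S_N$ by the first step, a contradiction. Let $\ell\le N$ be the least index (necessarily $\ell\ge1$, as $t_0=\beta>0$) at which $t_\ell$ fails to be well defined and positive; by minimality $t_0,\dots,t_{\ell-1}>0$, hence $S_0<\dots<S_{\ell-1}$, and from the recursion $t_\ell=(\rho+\epsilon)^\ell/(S_\ell-S_{\ell-1})$ with $S_\ell-S_{\ell-1}\le0$. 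The denominator $S_\ell-S_{\ell-1}$ is not exactly $0$ (a non-generic coincidence removable by an arbitrarily small perturbation of $\epsilon$, and in any case "the recursion becomes undefined" only strengthens the intended lower-bound application), so $S_\ell-S_{\ell-1}<0$ and therefore $t_\ell<0$, as claimed.

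The heart of the argument is the first two steps: spotting the telescoping substitution $S_n=\sum_{j\le n}(\rho+\epsilon)^j/t_j$ that converts the nonlinear recursion into a second-order linear one, and then identifying the interval $(2-\varphi,\varphi+1)$ of non-real characteristic roots through the factorization $(\rho^2-3\rho+1)(\rho^2+\rho+1)$ of the discriminant --- this is precisely the mechanism by which $\varphi+1$ appears "from the geometry of complex roots". Once this is in place, the explosion-with-oscillation of a linear recursion whose roots have modulus $>1$ and non-zero argument is classical; the only delicate bookkeeping is in the final step, where one must pass carefully from "$S_n$ unbounded below" back to "$t_\ell<0$" and make the choice of $\epsilon'$ uniform over all small $\epsilon$.
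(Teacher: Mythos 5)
Your proof is correct and follows essentially the same route as the paper: both linearize the recursion into a second-order constant-coefficient linear recurrence (you via the partial sums $S_n=\sum_{j\le n}(\rho+\epsilon)^j/t_j$ with the particular solution $-\alpha$, the paper via $a_n=1/t_n$ followed by differencing consecutive relations, which yields the same characteristic polynomial up to scaling the roots by $\rho+\epsilon$), observe that the discriminant at $\epsilon=0$ is negative exactly for $\rho\in(2-\varphi,\varphi+1)$, and then use the growing rotation of the complex-conjugate root pair of modulus $>1$ to force a sign change that is incompatible with all $t_n$ being positive. The one delicate point you flag — the degenerate possibility that a denominator vanishes exactly (your $S_\ell=S_{\ell-1}$) — is likewise passed over in the paper's own proof, so it does not distinguish your argument from theirs.
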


\begin{proof}
	Rearranging terms, we obtain
	\[
	t_{0}=\beta,\quad\quad\frac{(\rho+\epsilon)^{n+1}}{t_{n+1}}=\frac{\rho(\rho+\epsilon)^{n+1}}{t_{n}(1-\epsilon)}-\Bigl(\sum_{j=0}^{n}\frac{(\rho+\epsilon)^{j+1}}{t_{j}}\Bigr)-\alpha(\rho+\epsilon)\quad\textrm{for all }n\in\N\cup\{0\}.
	\]
	We substitute $a_{n}=1/t_{n}$ for all $n\in\N$ and obtain the recursively defined sequence $(a_{n})_{n\in\N}$ with $a_{0}=1/\beta$ and
	\begin{equation}
	a_{n+1}(\rho+\epsilon)^{n+1}=a_{n}\frac{\rho}{1-\epsilon}(\rho+\epsilon)^{n+1}-\Bigl(\sum_{j=0}^{n}a_{j}(\rho+\epsilon)^{j+1}\Bigr)-\alpha(\rho+\epsilon)\quad\textrm{for all }n\in\N\cup\{0\}.\label{eq:a_n+1-recursive}
	\end{equation}
	This also implies
	\begin{equation}
	a_{n}(\rho+\epsilon)^{n}=a_{n-1}\frac{\rho}{1-\epsilon}(\rho+\epsilon)^{n}-\Bigl(\sum_{j=0}^{n-1}a_{j}(\rho+\epsilon)^{j+1}\Bigr)-\alpha(\rho+\epsilon)\quad\textrm{for all }n\in\N.\label{eq:a_n-recursive}
	\end{equation}
	Subtracting \eqref{eq:a_n-recursive} from \eqref{eq:a_n+1-recursive}, we obtain
	\[
	a_{n+1}(\rho+\epsilon)^{n+1}-a_{n}(\rho+\epsilon)^{n}=a_{n}\frac{\rho}{1-\epsilon}(\rho+\epsilon)^{n+1}-a_{n-1}\frac{\rho}{1-\epsilon}(\rho+\epsilon)^{n}-a_{n}(\rho+\epsilon)^{n+1}
	\]
	for all $n\in\N$, which yields
	\[
	a_{n+1}=a_{n}\Bigl(\frac{1}{\rho+\epsilon}+\frac{\rho}{1-\epsilon}-1\Bigr)-a_{n-1}\frac{\rho}{(1-\epsilon)(\rho+\epsilon)}
	\]
	for all $n\in\N$.
	Together with the start values $a_{0}=1/\beta$ and
	\[
	a_{1}=\frac{1}{t_{1}}=\frac{\rho}{\beta(1-\epsilon)}-\frac{1}{\beta}-\alpha
	\]
	this yields a uniquely defined linear homogeneous recurrence relation with characteristic polynomial
	\[
	0=x^{2}-\Bigl(\frac{1}{\rho+\epsilon}+\frac{\rho}{1-\epsilon}-1\Bigr)x+\frac{\rho}{(1-\epsilon)(\rho+\epsilon)}.
	\]
	Let $\smash{D(\rho,\epsilon)=\bigl(\frac{1}{2(\rho+\epsilon)}+\frac{\rho}{2(1-\epsilon)}-\frac{1}{2}\bigr)^{2}-\frac{\rho}{(1-\epsilon)(\rho+\epsilon)}}$.
	The roots of the characteristic polynomial are then
	\begin{eqnarray}
	x & = & \frac{1}{2(\rho+\epsilon)}+\frac{\rho}{2(1-\epsilon)}-\frac{1}{2}-\sqrt{D(\rho,\epsilon)},\label{eq:roots_characteristic_polynomial_recurrence_relation}\\
	y & = & \frac{1}{2(\rho+\epsilon)}+\frac{\rho}{2(1-\epsilon)}-\frac{1}{2}+\sqrt{D(\rho,\epsilon)}.\nonumber 
	\end{eqnarray}
	We claim that if $\rho<\varphi+1$, then there is $\epsilon>0$ such that $D(\rho,\epsilon)<0$.
	To see this claim, consider the function
	\[
	D(\rho,0)=\Bigl(\frac{1}{2\rho}+\frac{\rho}{2}-\frac{1}{2}\Bigr)^{2}-1.
	\]
	The function $\smash{h(\rho)=\frac{1}{2\rho}+\frac{\rho}{2}-\frac{1}{2}}$ has the derivative $\smash{h'(\rho)=-\frac{1}{2\rho^{2}}+\frac{1}{2}>0}$ for $\rho>1$.
	Thus,~$h$ is strictly increasing for $\rho\in(1,\infty)$, and, hence, $D(\rho,0)$ is also strictly increasing for $\rho\in(1,\infty)$.
	Thus, $D(\rho,0)$ has at most one root $\rho_{0}\in(1,\infty)$.
	This root satisfies
	\[
	\frac{1}{2\rho_{0}}+\frac{\rho_{0}}{2}-\frac{1}{2}=1.
	\]
	Rearranging terms yields $1+\rho_{0}^{2}=3\rho_{0}$.
	The only solution $\rho_{0}>1$ to this equation is $\varphi+1$.
	We have shown that $D(\rho,0)<0$ for all $\rho<\varphi+1$.
	Since $D(\rho,\epsilon)$ is continuous in $\epsilon$, there is $\epsilon'>0$ such that also $D(\rho,\epsilon)<0$ for all $\epsilon\in(0,\epsilon']$.
	For $\rho\in(1,\varphi+1)$ and $\epsilon$ chosen that way, we have that the roots of the characteristic polynomial \eqref{eq:roots_characteristic_polynomial_recurrence_relation} are
	distinct and complex valued.
	We then obtain that the sequence $(a_{n})_{n\in\N}$ has the closed-form expression
	\begin{equation}
	a_{n}=\lambda x^{n}+\mu y^{n}\quad\textrm{for all }n\in\N\cup\{0\},\label{eq:recurrence_relation_closed_form_1}
	\end{equation}
	where the constants $\lambda,\mu\in\C$ are chosen in such a way that the equations for the starting values
	\begin{equation}
	a_{0}=\frac{1}{\beta}=\lambda+\mu\quad\quad\textrm{and}\quad\quad a_{1}=\frac{\rho}{\beta(1-\epsilon)}-\frac{1}{\beta}-\alpha=\lambda x+\mu y\label{eq:recurrence_relation_starting_values}
	\end{equation}
	are satisfied.
	Note that by \eqref{eq:roots_characteristic_polynomial_recurrence_relation}, $x$ and $y$ are complex conjugate, and hence, by \eqref{eq:recurrence_relation_starting_values}, also $\lambda$ and $\mu$ are complex conjugate.
	We can, thus, reformulate \eqref{eq:recurrence_relation_closed_form_1} as
	\begin{eqnarray}
	a_{n} & = & \lambda x^{n}+\bar{\lambda}\bar{x}^{n}\nonumber \\
	& = & \lambda x^{n}+\bar{\lambda x^{n}}\nonumber \\
	& = & 2\mathfrak{R}(\lambda x^{n})\label{eq:recurrence_relation_closed_form_2}
	\end{eqnarray}
	for all $n\in\N$, where for the second equation we used that conjugation is distributive with multiplication and for the third equation we used that for a complex number $z\in\C$ its real part can be computed as $\smash{\mathfrak{R}(z)=\frac{z+\bar{z}}{2}}$.
	In \eqref{eq:recurrence_relation_closed_form_2}, the constant $\lambda\in\C$ satisfies $\lambda+\bar{\lambda}=1/\beta$ and $\smash{\lambda x+\bar{\lambda x}=\frac{\rho}{\beta(1-\epsilon)}-\frac{1}{\beta}-\alpha}$.
	Going to polar coordinates, we obtain
	\[
	\lambda=r_{\lambda}\exp(\mathrm{i}\varphi_{\lambda})\quad\quad\textrm{and}\quad\quad x=r_{x}\exp(\mathrm{i}\varphi_{x})
	\]
	for some $r_{\lambda},r_{x}\in\R_{\geq0}$ and some $\varphi_{\lambda},\varphi_{x}\in[0,2\pi)$.
	By exchanging the roles of $x$ and $y$, it is without loss of generality to assume that $\varphi_{x}\in[0,\pi]$.
	We obtain
	\[
	a_{n}\overset{\eqref{eq:recurrence_relation_closed_form_2}}{=}2\mathfrak{R}(\lambda x^{n})=2\mathfrak{R}\bigl(r_{\lambda}r_{x}^{n}\exp(\mathrm{i}(\varphi_{\lambda}+n\varphi_{x}))\bigr)\quad\quad\textrm{for all }n\in\N.
	\]
	Let $k=\lceil\pi/\varphi_{x}\rceil$.
	We claim that $a_{0},\dots,a_{k}$ are not strictly increasing.
	To see this, note that $a_{0}=1/\beta$, and thus,
	\[
	1=\mathrm{sgn}(a_{0})=\mathrm{sgn}\bigl(2\mathfrak{R}\bigl(r_{\lambda}\exp(\mathrm{i}\varphi_{\lambda})\bigr)\bigr)=\mathrm{sgn}\bigl(2\mathfrak{R}\bigl(\exp(\mathrm{i}\varphi_{\lambda})\bigr)\bigr).
	\]
	On the other hand, we have
	\[
	-1=\mathrm{sgn}\bigl(2\mathfrak{R}\bigl(\exp(\mathrm{i}\varphi_{\lambda}+\pi)\bigr)\bigr).
	\]
	Since $\varphi_{x}\leq\pi$, this implies that either $\mathrm{sgn}(a_{k})=-1$
	or $\mathrm{sgn}(a_{k-1})=-1$ (or both).
	In any case, this implies that there is $\ell\in\N$ with $a_{\ell}<0$.
	Since $t_{n}=1/a_{n}$ for all $n\in\N$, this further implies that $t_{\ell}<0$.
\end{proof}

\begin{figure}
	\begin{center}
		\begin{tikzpicture}
		\def\xd{26}
		\def\xl{1.2}
		\def\lamd{290}
		\def\laml{.8}
		\def\maxIndex{7}
		
		\draw[->] (0,-2.5) -- (0,3.5);
		\draw[->] (-3.5,0) -- (3.5,0);
		\foreach \x in {-3,-2,-1,1,2,3} {
			\draw[] (\x,-.05) -- node[below] {\footnotesize \x} (\x,.05);
		}
		\foreach \x in {-2,2,3} {
			\draw[] (-.05,\x) -- node[left] {\footnotesize \x $i$} (.05,\x);
		}
		\draw[] (-.05,-1) -- node[left] {\footnotesize $-i$} (.05,-1);
		\draw[] (-.05,1) -- node[left] {\footnotesize $i$} (.05,1);
		
		\draw[thick,-stealth,rotate=\xd] (0,0) -- node[right=-1.5pt,at end] {$x$} (\xl,0);
		\draw[thick,-stealth,rotate=\lamd] (0,0) -- node[below=-1.5pt,at end] {$\lambda$} (\laml,0);
		\draw[thick,-stealth,rotate=\lamd+\xd] (0,0) -- node[below=-1.5pt,at end] {$\lambda x$} (\laml*\xl,0);
		\foreach \x in {2,3,...,\maxIndex}{
			\draw[thick,-stealth,rotate=\lamd+\x*\xd] (0,0) -- (\laml*\xl^\x,0);
		}
		\draw[thick,-stealth,rotate=\lamd+\maxIndex*\xd] (0,0) -- node[left,at end] {$\lambda x^\maxIndex$} (\laml*\xl^\maxIndex,0);
		
		\draw [domain=1:\lamd/180*3.14+\maxIndex*(\xd/180)*3.14,variable=\t,smooth,samples=75] plot ({\t r}: {\laml*\xl^((\t-(\lamd/180*3.14))/\xd*180/3.14)});
		\end{tikzpicture}
	\end{center}
	\caption{Multiplying~$\lambda$ repeatedly by $x\in(\C\setminus\R)$ is equivalent to a rotation around the origin that, at some point, reaches the half-plane corresponding to negative real parts.}\label{fig:complex_plane}
\end{figure}
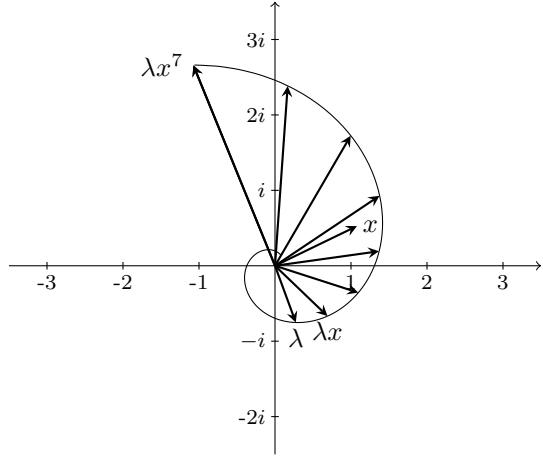

The following lemma shows that, given points $((x_{0},v_{0}),\dots,(x_{k},v_{k}))\in(\R_{>0}\times\R_{>0})^{k+1}$ with $\smash{v_{i}<v_{i+1}<\frac{x_{i+1}}{x_{i}}v_{i}}$ for all $i\in\{0,\dots,k-1\}$, we can construct an instance of \incmax\ with $v(x_{i})=v_{i}$ for all $i\in\{0,\dots,k-1\}$ simply by linearly interpolating between these points.

\begin{lemma}\label{lem:function_from_points}
	Let an instance of \incmax\ with value function $\bar{v}\colon\R_{\geq0}\rightarrow\R_{\geq0}$ and density function $\bar{d}\colon\R_{\geq0}\rightarrow\R_{\geq0}$ be given.
	Let $k\in\N$ and $((x_{0},v_{0}),\dots,(x_{k},v_{k}))\in(\R_{>0}\times\R_{>0})^{k+1}$ with $\bar{v}(x_{0})=v_{0}$ and $\smash{v_{i}<v_{i+1}<\frac{x_{i+1}}{x_{i}}v_{i}}$ for all $i\in\{0,\dots,k-1\}$.
	Then there exist an instance of \incmax\ with value function $v\colon\R_{\geq0}\rightarrow\R_{\geq0}$ and density function $d\colon\R_{\geq0}\rightarrow\R_{\geq0}$ such that $v(x)=\bar{v}(x)$ for all $x\in[0,x_{0}]$ and $v(x_{i})=v_{i}$ for all $i\in\{0,\dots,k\}$.
\end{lemma}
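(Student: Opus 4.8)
The plan is to obtain $v$ by \emph{piecewise-linear interpolation}. Set $v(x):=\bar v(x)$ for $x\in[0,x_0]$; on each interval $[x_i,x_{i+1}]$, $i\in\{0,\dots,k-1\}$, let $v$ be the affine function through $(x_i,v_i)$ and $(x_{i+1},v_{i+1})$; and on $[x_k,\infty)$ let $v$ continue affinely with a slope $s_k$ chosen in $(0,v_k/x_k)$. Since $\bar v(x_0)=v_0$ and consecutive pieces share endpoints, $v$ is continuous on $\R_{\geq0}$ and satisfies $v(x)=\bar v(x)$ on $[0,x_0]$ and $v(x_i)=v_i$ for all $i\in\{0,\dots,k\}$. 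Defining $d(c):=v(c)/c$ for $c>0$ and $d(0):=1$, it then only remains to verify that $(v,d)$ is a valid instance: $v$ non-decreasing, $d$ non-increasing with values in $(0,1]$, and $v(c)\to\infty$ as $c\to\infty$.

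The core of the argument is that the two hypotheses on the points encode exactly the two slope conditions needed. On $[x_i,x_{i+1}]$ the slope is $s_i:=(v_{i+1}-v_i)/(x_{i+1}-x_i)$. The assumption $v_i<v_{i+1}$ says precisely $s_i>0$, so $v$ is strictly increasing on each such piece; combined with $s_k>0$ on the tail and with $\bar v$ being non-decreasing on $[0,x_0]$, this makes $v$ non-decreasing on $\R_{\geq0}$ and forces $v(c)\to\infty$ (because of the positive tail slope). The assumption $v_{i+1}<\tfrac{x_{i+1}}{x_i}v_i$ rearranges to $s_i<v_i/x_i$. Writing an affine piece as $v(c)=s c+b$, one has $d(c)=s+b/c$, which is strictly decreasing exactly when $b>0$; and here $b=v_i-s_i x_i>0$ is equivalent to $s_i<v_i/x_i$ (similarly for the tail, where $b=v_k-s_k x_k>0$). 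Hence $d$ is strictly decreasing on every affine piece to the right of $x_0$, while on $[0,x_0]$ we have $d=\bar d$, which is non-increasing and continuous with $\bar d(0)=1$ since $\bar v$ is a valid instance. Continuity of $v$ gives continuity of $d$ at every breakpoint $x_i$, and at $x_0$ we have $d(x_0)=v_0/x_0=\bar d(x_0)$; since $\bar d$ is non-increasing to the left of $x_0$ and $d$ decreases to the right of $x_0$ starting from $\bar d(x_0)$, the function $d$ is non-increasing across $x_0$ as well, hence on all of $\R_{\geq0}$. Finally $d(0)=1$ and monotonicity give $d\leq 1$, and $d>0$ because $v(c)>0$ for $c>0$ (as $v\geq v_0>0$ on $[x_0,\infty)$ and $v=\bar v>0$ on $(0,x_0]$).

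The only delicate point is the bookkeeping at the junctions: one must check that the density does \emph{not} jump upwards when passing from $\bar v$ to the first interpolation segment at $x_0$, and that the tail slope $s_k$ can be picked so that both $d$ stays non-increasing and $v(c)\to\infty$; both follow from the inequality $s_i<v_i/x_i$ (respectively $s_k<v_k/x_k$) together with continuity of $\bar d$ at $x_0$. If a later application needs $v$ strictly increasing and $d$ strictly decreasing, this can be arranged with arbitrarily small loss by the same tilting argument already used to set up \greedycap.
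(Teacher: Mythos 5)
Your proposal is correct and takes essentially the same route as the paper: both define $v$ by linear interpolation through the points $(x_i,v_i)$, keep $v=\bar v$ on $[0,x_0]$, set $d(c)=v(c)/c$ with $d(0)=1$, and deduce monotonicity of $v$ and $d$ from the hypotheses $v_i<v_{i+1}$ (positive slope) and $v_{i+1}<\frac{x_{i+1}}{x_i}v_i$ (positive intercept, hence decreasing density). The only cosmetic difference is the tail beyond $x_k$: the paper extends the last interpolation segment, which is just one admissible choice of your slope $s_k\in(0,v_k/x_k)$.
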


\begin{proof}
	We define 
	\begin{equation}
	v(x):=\bar{v}(x)\quad\textrm{for all }c\in[0,x_{0}]\label{eq:lemma_v_from_0_to_v0}
	\end{equation}
	and, for all $i\in\{0,\dots,k-1\}$,
	\begin{equation}
	v(x):=v_{i}+\frac{x-x_{i}}{x_{i+1}-x_{i}}\bigl(v_{i+1}-v_{i}\bigr)\quad\textrm{for all }x\in(x_{i},x_{i+1}].\label{eq:lemma_v_from_v0_to_vk}
	\end{equation}
	Note that $x_{i+1}>x_{i}$ because $\frac{x_{i+1}}{x_{i}}v_{i}>v_{i+1}$ and $v_{i}<v_{i+1}$.
	Furthermore, for $x>x_{k}$, we define
	\begin{equation}
	v(x):=v_{k-1}+\frac{x-x_{k-1}}{x_{k}-x_{k-1}}(v_{k}-v_{k-1}).\label{eq:lemma_v_from_vk}
	\end{equation}
	We set $d(0):=1$ and $d(x):=\frac{v(x)}{x}$ for all $x>0$.
	
	We have to show that\\
	$(i)$ $v$ is strictly increasing,\\
	$(ii)$ $d$ is strictly decreasing,\\
	$(iii)$ $d(0)=1$,\\
	$(iv)$ $v(x)=xd(x)$ for all $x\in\R_{\geq0}$,\\
	$(v)$ $v(x)=\bar{v}(x)$ for all $x\in[0,x_{0}]$, and\\
	$(vi)$ $v(x_{i})=v_{i}$ for all $i\in\{0,\dots,k\}$.
	
	On the interval $[0,x_{0}]$, $(i)$ holds because of \eqref{eq:lemma_v_from_0_to_v0} and because $\bar{v}$ is strictly increasing.
	For $x>x_{0}$, we have
	\[
	v(x)\overset{\eqref{eq:lemma_v_from_v0_to_vk},\eqref{eq:lemma_v_from_vk}}{=}v_{i}+\frac{x-x_{i}}{x_{i+1}-x_{i}}\bigl(v_{i+1}-v_{i}\bigr)
	\]
	for some $i\in\{0,\dots,k-1\}$, and thus
	\[
	v'(x)\overset{\eqref{eq:lemma_v_from_v0_to_vk},\eqref{eq:lemma_v_from_vk}}{=}\frac{v_{i+1}-v_{i}}{x_{i+1}-x_{i}}\overset{v_{i+1}>v_{i},x_{i+1}>x_{i}}{>}0,
	\]
	i.e., $(i)$ holds for $x>x_{0}$.
	
	On the interval $[0,x_{0}]$, $(ii)$ holds because $\bar{d}$ is strictly increasing and because $d(x)=\bar{d}(x)$ by definition of $d$ and by \eqref{eq:lemma_v_from_0_to_v0}.
	For $x>x_{0}$, we have
	\[
	d(x)\overset{\eqref{eq:lemma_v_from_v0_to_vk},\eqref{eq:lemma_v_from_vk}}{=}\frac{v_{i}}{x}+\frac{1-\frac{x_{i}}{x}}{x_{i+1}-x_{i}}\bigl(v_{i+1}-v_{i}\bigr)
	\]
	for some $i\in\{0,\dots,k-1\}$, and thus
	\[
	d'(x)\overset{\textrm{Def. of }d,\eqref{eq:lemma_v_from_v0_to_vk},\eqref{eq:lemma_v_from_vk}}{=}-\frac{1}{x^{2}}\Bigl(v_{i}-x_{i}\frac{v_{i+1}-v_{i}}{x_{i+1}-x_{i}}\Bigr)\overset{v_{i+1}<\frac{x_{i+1}}{x_{i}}v_{i}}{<}-\frac{1}{x^{2}}\Bigl(v_{i}-\frac{x_{i+1}v_{i}-x_{i}v_{i}}{x_{i+1}-x_{i}}\Bigr)=0,
	\]
	i.e., $(ii)$ holds for $x>x_{0}$.
	
	By definition of $d$, we have $d(0)=1$, i.e., $(iii)$ holds.
	
	We have $v(0)=\bar{v}(0)=0\cdot\bar{d}(0)=0\cdot d(0)$ and $v(x)=xd(x)$ by definition of $d$, i.e., $(iv)$ holds.
	
	By \eqref{eq:lemma_v_from_0_to_v0}, $(v)$ holds.
	
	We have $v(x_{0})\overset{\eqref{eq:lemma_v_from_0_to_v0}}{=}\bar{v}(x_{0})=v_{0}$ and, for $i\in\{1,\dots,k\}$, we have
	\[
	v(x_{i})\overset{\eqref{eq:lemma_v_from_v0_to_vk}}{=}v_{i-1}+\frac{x_{i}-x_{i-1}}{x_{i}-x_{i-1}}\bigl(v_{i}-v_{i-1}\bigr)=v_{i},
	\]
	i.e., $(vi)$ holds.
\end{proof}

The following calculations are needed to show that Lemma~\ref{lem:function_from_points} can be applied to a sequence of points in a later proof.

\begin{lemma}\label{lem:points_satisfy_prerequisites_for_lemma}
	Let $1<\rho<\varphi+1$, $x_{0},v_{0},z>0$,
	\[
	t_{0}=\frac{v_{0}}{x_{0}},\quad\quad t_{n+1}=\frac{1}{\frac{\rho}{t_{n}(1-\epsilon)}-\bigl(\sum_{j=0}^{n}\frac{(\rho+\epsilon)^{j-n}}{t_{j}}\bigr)-\frac{z}{(\rho+\epsilon)^{n}v_{0}}}\quad\textrm{for all }n\in\N\cup\{0\}
	\]
	and let $\epsilon\in(0,1)$ be small enough. By Lemma~\ref{lem:recursive_sequence_becomes_negative}, there exists $\ell'\in\N$ with $t_{\ell'}<0$. Let\linebreak $\ell\in\{0,\dots,\ell'-1\}$ be the smallest index such that $\smash{\frac{1}{t_{\ell}}>\frac{1}{t_{\ell+1}}}$.
	Then,
	
	(i) $(\rho+\epsilon)^{n}v_{0}<\rho(\rho+\epsilon)^{n}v_{0}$ for all $n\in\{0,\dots,\ell\}$,
	
	(ii) $\rho(\rho+\epsilon)^{n}v_{0}<(\rho+\epsilon)^{n+1}v_{0}$ for all
	$n\in\{0,\dots,\ell-1\}$,
	
	(iii) $\rho(\rho+\epsilon)^{n}v_{0}<\frac{\frac{\rho(\rho+\epsilon)^{n}v_{0}}{(1-\epsilon)t_{n}}}{\frac{(\rho+\epsilon)^{n}v_{0}}{t_{n}}}(\rho+\epsilon)^{n}v_{0}$
	for all $n\in\{0,\dots,\ell\}$, and
	
	(iv) $(\rho+\epsilon)^{n+1}v_{0}<\frac{\frac{(\rho+\epsilon)^{n+1}v_{0}}{t_{n+1}}}{\frac{\rho(\rho+\epsilon)^{n}v_{0}}{(1-\epsilon)t_{n}}}\rho(\rho+\epsilon)^{n}v_{0}$
	for all $n\in\{0,\dots,\ell-1\}$.
\end{lemma}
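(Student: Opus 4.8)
The plan is to strip each of the four inequalities down to its essential content, after which (i)--(iii) become elementary and all of the work sits in (iv). Throughout I write $a_n:=1/t_n$; the defining recursion then rearranges, exactly as in the proof of Lemma~\ref{lem:recursive_sequence_becomes_negative}, into the second-order linear recurrence $a_{n+1}=P a_n-Q a_{n-1}$ with $P=\frac{1}{\rho+\epsilon}+\frac{\rho}{1-\epsilon}-1$ and $Q=\frac{\rho}{(1-\epsilon)(\rho+\epsilon)}$, together with $a_0=x_0/v_0>0$ and $a_1=\frac{\rho}{(1-\epsilon)t_0}-\frac1{t_0}-\frac z{v_0}$. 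Since $a_0>0$, an index $\ell$ as in the statement exists (otherwise $a_0\le a_1\le\dots\le a_{\ell'}$, contradicting $t_{\ell'}<0$), and by minimality of $\ell$ we get $a_0\le a_1\le\dots\le a_\ell$; in particular $a_0,\dots,a_\ell>0$, equivalently $t_0\ge t_1\ge\dots\ge t_\ell>0$, which is exactly what makes the fractions in (i)--(iv) legitimate for the indices involved.

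For (i), dividing by $(\rho+\epsilon)^nv_0>0$ leaves $1<\rho$; for (ii), the same division leaves $\rho<\rho+\epsilon$, i.e.\ $\epsilon>0$. For (iii), the quotient of the two nested fractions on the right telescopes, $\bigl(\tfrac{\rho(\rho+\epsilon)^nv_0}{(1-\epsilon)t_n}\bigr)\big/\bigl(\tfrac{(\rho+\epsilon)^nv_0}{t_n}\bigr)=\tfrac{\rho}{1-\epsilon}$, so (iii) is just $\rho<\tfrac{\rho}{1-\epsilon}$, i.e.\ $1-\epsilon<1$. Similarly, cancelling the common factor $(\rho+\epsilon)^{n+1}v_0$ in (iv) and simplifying the nested fractions (using $t_n,t_{n+1}>0$ for $n\le\ell-1$) turns (iv) into the statement that $t_{n+1}<(1-\epsilon)t_n$, equivalently $a_{n+1}>\tfrac{a_n}{1-\epsilon}$, for all $n\in\{0,\dots,\ell-1\}$.

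Proving this is the heart of the lemma. For $n=0$ the inequality becomes $z<\frac{x_0(\rho-2+\epsilon)}{1-\epsilon}$ after inserting $t_0=v_0/x_0$ and the formula for $a_1$; I would verify it by taking $\epsilon$ small relative to $x_0,\rho,z$, checking separately that in the complementary regime one in fact has $\ell=0$, so there is nothing to prove. For $n\ge1$, set $r_n:=a_{n+1}/a_n$; the linear recurrence gives $r_n=P-Q/r_{n-1}$, and since $\rho<\varphi+1$ (and $\epsilon$ is small) the polynomial $x^2-Px+Q$ has no real root — this is precisely the regime of Lemma~\ref{lem:recursive_sequence_becomes_negative} — so $r\mapsto P-Q/r$ lies strictly below the identity on $(0,\infty)$; hence $r_0>r_1>\dots>r_{\ell-1}$, while minimality of $\ell$ gives $r_0,\dots,r_{\ell-1}\ge1$. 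It therefore suffices to establish the \emph{single} inequality $r_{\ell-1}>\tfrac1{1-\epsilon}$.

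For this I would perturb off $\epsilon=0$: the $\epsilon=0$ sequence solves $a^{(0)}_{n+1}=\bigl(\tfrac1\rho+\rho-1\bigr)a^{(0)}_n-a^{(0)}_{n-1}$, whose characteristic roots lie on the unit circle because $\rho<\varphi+1$, and for it the required inequality degenerates to $a^{(0)}_{\ell-1}<a^{(0)}_\ell$, which holds strictly at the last ascending index; since $\ell$ is a fixed finite index and $a_n$ depends continuously on $\epsilon$, this strict inequality survives as $r_{\ell-1}>\tfrac1{1-\epsilon}$ once $\epsilon$ is small enough. I expect this final step to be the main obstacle: one has to make the threshold on $\epsilon$ uniform and, more delicately, rule out the degenerate parameter configurations for which $a^{(0)}_{\ell-1}=a^{(0)}_\ell$ (two consecutive equal terms of a unit-circle solution), which is where the precise meaning of ``$\epsilon$ small enough'' in the hypothesis does its work.
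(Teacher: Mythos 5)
Your reduction of (i)--(iii) to elementary scalar inequalities and of (iv) to the statement $t_{n+1}<(1-\epsilon)t_n$ (equivalently $r_n:=a_{n+1}/a_n>1/(1-\epsilon)$ for $n\in\{0,\dots,\ell-1\}$) is correct and matches the paper exactly, as does the passage to the second-order linear recurrence for $a_n=1/t_n$. Your observation that the ratio map $r\mapsto P-Q/r$ lies strictly below the identity on $(0,\infty)$ when the discriminant is negative, hence $r_0>r_1>\dots>r_{\ell-1}$, is also correct, and it is a genuinely different organizing idea from the one in the paper (the paper never uses this monotonicity).

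However, the step you yourself flag as ``the main obstacle'' is a genuine gap, not a technicality: reducing to the single inequality $r_{\ell-1}>1/(1-\epsilon)$ and then appealing to continuity at $\epsilon=0$ requires the \emph{strict} inequality $r_{\ell-1}(0)>1$, and the definition of $\ell(0)$ only gives $r(n,0)\geq1$ for $n<\ell$. The degenerate configuration $r(\ell-1,0)=1$ (two consecutive equal terms of the unit-circle solution at $\epsilon=0$) is not excluded by the hypotheses, and continuity alone cannot push a nonstrict ``$\geq1$'' across to a strict ``$>1/(1-\epsilon)$''. This is precisely the difficulty the paper's proof is designed to avoid. Instead of reducing to one index, the paper proves by induction the \emph{multiplicative} estimate $r(n,\epsilon)>r(n,0)/(1-\epsilon)$ for every $n\in\{0,\dots,\ell-1\}$; this inequality is strict for all $\epsilon>0$ irrespective of whether $r(n,0)=1$, and combined with $r(n,0)\geq1$ it yields $r(n,\epsilon)>1/(1-\epsilon)$ directly. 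The base case of this induction is your $n=0$ computation, and the case split $\rho<2$ (which forces $\ell=0$ for small $\epsilon$, making (iv) vacuous) versus $\rho\geq2$ is the same split you anticipate; what is missing in your sketch is the inductive step, which carries essentially all the work and is not replaceable by the continuity argument you propose.
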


\begin{proof}
	Inequalities $(i)$ and $(ii)$ hold because $\rho>1$, $\epsilon>0$, and $v_{0}>0$.
	
	Let $n\in\{0,\dots,\ell\}$.
	Inequality $(iii)$ is equivalent to $\rho<\frac{\rho}{1-\epsilon}$, which holds because $\epsilon\in(0,1)$.
	
	Let $n\in\{0,\dots,\ell-1\}$.
	Inequality $(iv)$ is equivalent to
	\begin{equation}
	\frac{1}{t_{n}}<(1-\epsilon)\frac{1}{t_{n+1}}\quad\textrm{for all }n\in\{0,\dots,\ell-1\}.\label{eq:lemma_inequality4_equivalent}
	\end{equation}
	For fixed $1<\rho<\varphi+1$, we define the ratio $\smash{r(n,\epsilon):=\frac{t_{n}}{t_{n+1}}}$ for all $n\in\N\cup\{0\}$.
	Note that $\smash{r(0,\epsilon)=\frac{\rho}{1-\epsilon}-1-\frac{z}{v_{0}}t_{0}}$ and, for $n\in\N$,
	\begin{eqnarray}
	r(n,\epsilon) & = & \frac{t_{n}}{t_{n+1}}\nonumber \\
	& = & \Bigl(\frac{\rho}{t_{n}(1-\epsilon)}-\bigl(\sum_{j=0}^{n}\frac{(\rho+\epsilon)^{j-n}}{t_{j}}\bigr)-\frac{z}{(\rho+\epsilon)^{n}v_{0}}\Bigr)t_{n}\nonumber \\
	& = & \Bigl(\frac{\rho}{t_{n}(1-\epsilon)}-\frac{1}{t_{n}}-\frac{\rho}{t_{n-1}(\rho+\epsilon)(1-\epsilon)}\nonumber\\
	&  & \quad+\frac{1}{\rho+\epsilon}\Bigl(\frac{\rho}{t_{n-1}(1-\epsilon)}-\bigl(\sum_{j=0}^{n-1}\frac{(\rho+\epsilon)^{j-n+1}}{t_{j}}\bigr)-\frac{z}{(\rho+\epsilon)^{n-1}v_{0}}\Bigr)\Bigr)t_{n}\nonumber \\
	& = & \Bigl(\frac{\rho}{t_{n}(1-\epsilon)}-\frac{1}{t_{n}}-\frac{\rho}{t_{n-1}(\rho+\epsilon)(1-\epsilon)}+\frac{1}{\rho+\epsilon}\cdot\frac{1}{t_{n}}\Bigr)t_{n}\nonumber \\
	& = & \frac{\rho}{1-\epsilon}-1-\frac{\rho}{(\rho+\epsilon)(1-\epsilon)}\cdot\frac{t_{n}}{t_{n-1}}+\frac{1}{\rho+\epsilon}\nonumber \\
	& = & \frac{\rho}{1-\epsilon}-1+\frac{1}{\rho+\epsilon}-\frac{\rho}{(\rho+\epsilon)(1-\epsilon)}\cdot\frac{1}{r(n-1,\epsilon)}.\label{eq:lemma_ratio_reformulation}
	\end{eqnarray}
	
	\emph{Claim:} For all $n\in\{0,\dots,\ell-1\}$, we have
	\begin{equation}
	r(n,\epsilon)>\frac{r(n,0)}{1-\epsilon}.\label{eq:lemma_claim_ratios_delta_and_0}
	\end{equation}
	
	\emph{Proof of claim:}
	We prove the claim by induction.
	For $n=0$, we have
	\[
	\frac{r(0,\epsilon)}{r(0,0)}=\frac{\frac{\rho}{1-\epsilon}-1-\frac{z}{v_{0}}t_{0}}{\rho-1-\frac{z}{v_{0}}t_{0}}=\frac{\rho-(1-\epsilon)\bigl(1+\frac{z}{v_{0}}t_{0}\bigr)}{\rho-\bigl(1+\frac{z}{v_{0}}t_{0}\bigr)}\cdot\frac{1}{1-\epsilon}\overset{\epsilon>0}{>}\frac{1}{1-\epsilon},
	\]
	i.e., \eqref{eq:lemma_claim_ratios_delta_and_0} holds for $n=0$. 
	
	If $\rho<2$, without loss of generality, we can assume that $\epsilon<1-\frac{\rho}{2}$ because $\epsilon$ is small enough.
	This yields
	\[
	t_{0}=r(0,\epsilon)t_{1}=\Bigl(\frac{\rho}{1-\epsilon}-1-\frac{z}{v_{0}}t_{0}\Bigr)t_{1}<\Bigl(\frac{\rho}{1-\epsilon}-1\Bigr)t_{1}\overset{\epsilon<1-\frac{\rho}{2}}{<}t_{1},
	\]
	i.e., $\ell=0$ and we are done with the proof.
	Thus, assume from now on that $\rho\geq2>\varphi$.
	Suppose~\eqref{eq:lemma_claim_ratios_delta_and_0} holds for some $n\in\{0,\dots,\ell-2\}$.
	Then
	\begin{equation}
	\frac{\rho}{\rho+\epsilon}\cdot\frac{1}{r(n,\epsilon)}\overset{\epsilon>0}{<}\frac{1}{r(n,\epsilon)}\overset{\eqref{eq:lemma_claim_ratios_delta_and_0}}{<}\frac{1-\epsilon}{r(n,0)}\overset{\epsilon>0}{<}\frac{1}{r(n,0)}\label{eq:lemma_ind_ratios_estimate_1}
	\end{equation}
	and
	\[
	\epsilon\rho(\rho+\epsilon)+(1-\epsilon)\rho=\epsilon\rho^{2}+\epsilon^{2}\rho+\rho-\epsilon\rho=(\rho+\epsilon)+(\rho^{2}-\rho-1)\epsilon+\epsilon^{2}\rho\overset{\rho\geq2}{\geq}\rho+\epsilon+\epsilon^{2}\rho>\rho+\epsilon,
	\]
	which is equivalent to
	\begin{equation}
	\epsilon+\frac{1-\epsilon}{\rho+\epsilon}>\frac{1}{\rho}.\label{eq:lemma_ind_ratios_estimate_2}
	\end{equation}
	This yields
	\begin{eqnarray*}
		\frac{r(n+1,\epsilon)}{r(n+1,0)} & \overset{\eqref{eq:lemma_ratio_reformulation}}{=} & \frac{\frac{\rho}{1-\epsilon}-1+\frac{1}{\rho+\epsilon}-\frac{\rho}{(\rho+\epsilon)(1-\epsilon)}\cdot\frac{1}{r(n,\epsilon)}}{\rho-1+\frac{1}{\rho}-\frac{1}{r(n,0)}}\\
		& = & \frac{\rho-1+\epsilon+\frac{1-\epsilon}{\rho+\epsilon}-\frac{\rho}{\rho+\epsilon}\cdot\frac{1}{r(n,\epsilon)}}{\rho-1+\frac{1}{\rho}-\frac{1}{r(n,0)}}\cdot\frac{1}{1-\epsilon}\\
		& \overset{\eqref{eq:lemma_ind_ratios_estimate_1},\eqref{eq:lemma_ind_ratios_estimate_2}}{>} & \frac{\rho-1+\frac{1}{\rho}-\frac{1}{r(n,0)}}{\rho-1+\frac{1}{\rho}-\frac{1}{r(n,0)}}\cdot\frac{1}{1-\epsilon}\\
		& = & \frac{1}{1-\epsilon},
	\end{eqnarray*}
	which proves the claim.
	
	Note that $\ell$ depends on $\epsilon$.
	Thus, we write $\ell(\epsilon)$ from now on.
	By definition of $\ell(0)$, for all $n\in\{0,\dots,\ell-1\}$, we have
	\begin{equation}
	r(n,0)\geq1\label{eq:lemma_ratio_larger_1}
	\end{equation}
	and $r(\ell,0)<1$.
	Note that, by definition, $(t_{n})_{n\in\N\cup\{0\}}$ is continuous in $\epsilon$ for all $n\in\N\cup\{0\}$.
	Thus, if $\epsilon>0$ is small enough (which we assumed), we have $r(\ell,\epsilon)<1$
	and, for all $n\in\{0,\dots,\ell-1\}$,
	\[
	r(n,\epsilon)\overset{\eqref{eq:lemma_claim_ratios_delta_and_0}}{>}\frac{r(n,0)}{1-\epsilon}\overset{\eqref{eq:lemma_ratio_larger_1}}{\geq}\frac{1}{1-\epsilon}.
	\]
	This immediately implies \eqref{eq:lemma_inequality4_equivalent} and thus completes the proof.
\end{proof}

With these preparations, we can now define an instance of \contincmax that excludes one starting value for $\greedycap(c_1,\rho)$.

\begin{theorem}\label{thm:excluding_one_starting_value}
	Let an instance of \incmax\ with value function $\bar{v}\colon\R_{\geq0}\rightarrow\R_{\geq0}$ and density function $\bar{d}\colon\R_{\geq0}\rightarrow\R_{\geq0}$ be given.
	Let $\rho\in(1,\varphi+1)$ and $0<c_{1}<\capacity$.
	Then there exists an instance of \incmax\ with value function $v\colon\R_{\geq0}\rightarrow\R_{\geq0}$ and density function $d\colon\R_{\geq0}\rightarrow\R_{\geq0}$ such that \mbox{$v(c)=\bar{v}(c)$} for all $c\in[0,\capacity]$ and $\greedycap(c_{1},\rho)$ is not $\rho$-competitive for this instance.
	Furthermore, there is some $\overline{\capacity}>0$ such that $v(c)$ can be altered for all $c\geq \overline{\capacity}$ without losing the fact that $\greedycap(c_{1},\rho)$ is not $\rho$-competitive for this instance.
\end{theorem}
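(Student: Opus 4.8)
The plan is to run $\greedycap(c_1,\rho)$ and, beyond the initial segment on which its behaviour is already pinned down by the prescribed values $\bar v$, to design the value function so that the algorithm is forced through a finite, explicitly chosen sequence of sizes whose densities evolve exactly according to the recurrence of Lemma~\ref{lem:points_satisfy_prerequisites_for_lemma}. Since $\rho<\varphi+1$, Lemma~\ref{lem:recursive_sequence_becomes_negative} (invoked via Lemma~\ref{lem:points_satisfy_prerequisites_for_lemma}) tells us that after finitely many steps the density the algorithm is forced to pick next is larger than the density of the previously picked set (or would even be negative); equivalently, the next size is not larger than the previous one, so by Proposition~\ref{prop:GreedyCap_rho-comp_equivalence} the algorithm is not $\rho$-competitive on the constructed instance.

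Concretely, I would first dispose of the trivial case $\bar d(c_1)<\tfrac1\rho$: here $\greedycap(c_1,\rho)$ already violates condition~(ii) of Lemma~\ref{lem:competitive_equivalence} at some size in $[0,c_1]\subseteq[0,\capacity]$ (one exists because $\bar d$ is continuous with $\bar d(0)=1>\rho\,\bar d(c_1)$), and this stays true for every extension of $\bar v$. So assume $\bar d(c_1)\geq\tfrac1\rho$ and run $\greedycap(c_1,\rho)$ on $(\bar v,\bar d)$; if it ever selects $c_{i+1}\leq c_i$ while its computation is still confined to the region governed by $\bar v$, or if it terminates (which, as $v\to\infty$ is assumed, can only happen by failing), we are done by extending $\bar v$ to any strictly increasing continuous function tending to infinity. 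Otherwise the algorithm yields a strictly increasing sequence $c_1<c_2<\cdots\to\infty$, and I would pick an index at which to ``take over'': read off the algorithm's state --- a last fully selected size $x_0$ (which the argument arranges to be $\geq\capacity$), its value $v_0=\bar v(x_0)$, its density $t_0=\bar d(x_0)=v_0/x_0$, and the preceding cumulative size $z$ --- and feed $(x_0,v_0,z,t_0)$ together with a small tilting parameter $\epsilon>0$ into Lemma~\ref{lem:points_satisfy_prerequisites_for_lemma}. That lemma returns an index $\ell$ together with the finite point set consisting, for $n=0,\dots,\ell$, of a point of value $(\rho+\epsilon)^n v_0$ at density $t_n$ and a point of value $\rho(\rho+\epsilon)^n v_0$ at density $(1-\epsilon)t_n$, listed in strictly increasing order of value and satisfying the hypotheses $v_i<v_{i+1}<\tfrac{x_{i+1}}{x_i}v_i$ of Lemma~\ref{lem:function_from_points}. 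Applying Lemma~\ref{lem:function_from_points} to this point set produces an instance $(v,d)$ with $v=\bar v$ on $[0,x_0]\supseteq[0,\capacity]$ that passes through all of these points. A direct computation comparing the defining rule~\eqref{eq:GreedyCap_def} of $\greedycap$ with the recurrence --- using that the value-$\rho(\rho+\epsilon)^nv_0$ point has density exactly $(1-\epsilon)t_n$, which is the density at which the $\rho$-competitiveness of the partial solution of value $(\rho+\epsilon)^nv_0$ expires --- then shows that, past the take-over step, $\greedycap(c_1,\rho)$ on $(v,d)$ picks precisely the sizes of the value-$(\rho+\epsilon)^nv_0$ points in order, with $t_n$ the density of the $n$-th of these. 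By the choice of $\ell$ (the smallest index with $1/t_\ell>1/t_{\ell+1}$), the density the algorithm would have to use for the size after the $\ell$-th one is $t_{\ell+1}$, which exceeds $t_\ell$ (or is negative); in either case that size is not larger than the $\ell$-th, so the chosen sizes are not strictly increasing and Proposition~\ref{prop:GreedyCap_rho-comp_equivalence} yields that $\greedycap(c_1,\rho)$ is not $\rho$-competitive. Finally, only values up to the last constructed point have been used, so $v$ may be altered for all $c\geq\overline{\capacity}$ with $\overline{\capacity}$ the size of that last point.

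The main obstacle is the interface between the forced part and the constructed part: one must choose the take-over step so that the forced region --- which has to contain not only the selected sizes $c_1,\dots$ up to the take-over but also all values $p(c_i)$ entering their computation, and the latter can a priori exceed the last selected size --- can be made to coincide with $[0,x_0]$ for an $x_0\geq\capacity$ at which $\bar v(x_0)=v_0$ and the algorithm's state matches the initial data $(x_0,v_0,z,t_0)$ of the recurrence. One must also verify that the strict inequalities needed to invoke Lemma~\ref{lem:function_from_points} survive the perturbation by $\epsilon$; this is precisely the content of Lemma~\ref{lem:points_satisfy_prerequisites_for_lemma}, and it is here that the slack $\varphi+1-\rho>0$ is consumed. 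The remaining verifications are routine calculus with the piecewise-linear value function delivered by Lemma~\ref{lem:function_from_points}.
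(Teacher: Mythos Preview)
Your proposal is correct and follows essentially the same approach as the paper's proof: the paper also handles the ``already fails'' case trivially, then picks the first $\greedycap$ size $\bar c_k$ exceeding $\capacity$ as the take-over point, feeds the state $(t_0,z,v_k)=(\bar d(\bar c_k),\sum_{j<k}\bar c_j,\bar v(\bar c_k))$ into the recurrence of Lemma~\ref{lem:recursive_sequence_becomes_negative}, realizes the resulting points via Lemmas~\ref{lem:function_from_points} and~\ref{lem:points_satisfy_prerequisites_for_lemma}, proves by induction that $d(c_{k+n})=t_n$, and concludes via Proposition~\ref{prop:GreedyCap_rho-comp_equivalence} once $1/t_\ell>1/t_{\ell+1}$. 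Your identification of the interface issue (that computing $c_k$ on the new instance requires $p(c_{k-1})$, which depends on $v$ beyond $c_{k-1}$) is even a bit more careful than the paper, which simply asserts $c_i=\bar c_i$ for $i\le k$ from $v=\bar v$ on $[0,\bar c_k]$ without further comment.
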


\begin{proof}
	If $\greedycap(c_{1},\rho)$ is not $\rho$-competitive for the instance given by $\bar{v}$, we can simply choose $v=\bar{v}$ and ~$\overline{\capacity}$ to be some value for which $\greedycap(c_{1},\rho)$ is not $\rho$-competitive.
	Suppose $\greedycap(c_{1},\rho)$ is $\rho$-competitive for the instance given by $\bar{v}$, and let $(\bar{c}_{1},\bar{c}_{2},\dots)$ with $\bar{c}_{1}=c_{1}$ describe $\greedycap(c_{1},\rho)$ on this instance.
	We will define the function $v$ such that $\greedycap(c_{1},\rho)$ is not $\rho$-competitive on the instance given by $v$.
	In order to do this, we will define a sequence of points and associated values such that $\greedycap(c_{1},\rho)$ is forced to choose these points and Lemma \ref{lem:recursive_sequence_becomes_negative} will show that this sequence is not valid.
	Let $k\in\N_{\geq2}$ such that $\bar{c}_{k-1}\leq \capacity<\bar{c}_{k}$, let $v_k:=\bar{v}(\bar{c}_{k})$, and let $\smash{z:=\sum_{j=1}^{k-1}\bar{c}_{j}}$.
	
	We consider the recursively defined sequence
	\[
	t_{0}=\bar{d}(\bar{c}_{k}),\quad\quad t_{n+1}=\frac{1}{\frac{\rho}{t_{n}(1-\epsilon)}-\bigl(\sum_{j=0}^{n}\frac{(\rho+\epsilon)^{j-n}}{t_{j}}\bigr)-\frac{z}{(\rho+\epsilon)^{n}v_k}}\quad\textrm{for all }n\in\N\cup\{0\}.
	\]
	Since $\rho<\varphi+1$, by Lemma \ref{lem:recursive_sequence_becomes_negative}, there exists $\epsilon'>0$ such that, for all $\epsilon\in(0,\epsilon']$, there is $\ell'\in\N$ with $t_{\ell'}<0$.
	Let $\epsilon\in(0,\epsilon']$ be small enough.
	Since $t_{\ell'}<0$, we have
	\[
	\frac{\rho}{t_{\ell'-1}(1-\epsilon)}-\bigl(\sum_{j=0}^{\ell'-1}\frac{(\rho+\epsilon)^{j-\ell'+1}}{t_{j}}\bigr)-\frac{z}{(\rho+\epsilon)^{\ell'-1}v_k}<0,
	\]
	i.e., we can define $\ell\in\{0,\dots,\ell'-1\}$ to be the smallest index such that
	\begin{equation}
	\frac{1}{t_{\ell}}>\frac{\rho}{t_{\ell}(1-\epsilon)}-\bigl(\sum_{j=0}^{\ell}\frac{(\rho+\epsilon)^{j-\ell}}{t_{j}}\bigr)-\frac{z}{(\rho+\epsilon)^{\ell}v_k}=\frac{1}{t_{\ell+1}}.\label{eq:def_ell_as_smallest_violation}
	\end{equation}
	
	For $n\in\{0,\dots,\ell\}$, let
	\begin{equation}
	x_{2n}:=\frac{(\rho+\epsilon)^{n}v_k}{t_{n}},\quad\quad v_{2n}:=(\rho+\epsilon)^{n}v_k\label{eq:def_points_values_of_alg}
	\end{equation}
	and
	\begin{equation}
	x_{2n+1}:=\frac{\rho(\rho+\epsilon)^{n}v_k}{(1-\epsilon)t_{n}},\quad\quad v_{2n+1}:=\rho(\rho+\epsilon)^{n}v_k.\label{eq:def_points_values_p}
	\end{equation}
	For $c\in[0,x_0]$, we let $v(c)=\bar{v}(c)$ and for $c>x_0$, we let~$v$ be the function with $v(x_n)=v_n$ for all $n\in\{0,\dots,2\ell\}$ that linearly interpolates between these points.
	By Lemmas~\ref{lem:function_from_points} and~\ref{lem:points_satisfy_prerequisites_for_lemma}, this is a valid value function for an instance of \incmax.
	It remains to show that $\greedycap(c_{1},\rho)$ is not $\rho$-competitive for the instance given by $v$.
	
	Let $(c_{1},c_{2},\dots)$ be the solution generated by $\greedycap(c_{1},\rho)$ on the instance given by~$v$.
	Note that $c_{i}=\bar{c}_{i}$ for all $i\in\{1,\dots,k\}$ since $v(c)=\bar{v}(c)$ for all $c\in[0,\bar{c}_{k})]$, i.e., $z=\sum_{j=1}^{k-1}c_{j}$.
	
	\emph{Claim:} For all $n\in\{0,\dots,\ell+1\}$, we have
	\begin{equation}
	d(c_{k+n})=t_{n}.\label{eq:ind_hyp_d-ck+n_equals_tn}
	\end{equation}
	
	\emph{Proof of Claim:}
	We prove this by induction.
	For $n=0$, we have
	\begin{equation}
	d(c_{k})\overset{\eqref{eq:GreedyCap_def}}{=}\bar{d}(\bar{c}_{k})=t_{0},\label{eq:density_k_equals_t0}
	\end{equation}
	where we use the fact that $v(c)=\bar{v}(c)$ for all $c\in[0,\bar{c}_{k}]$.
	
	Suppose, for some $n\in\{0,\dots,\ell\}$, the claim holds for all $i\in\{0,\dots,n\}$.
	By \eqref{eq:def_points_values_of_alg} and because $v(c)=cd(d)$ for all $c\in\R_{\geq0}$, for all $j\in\{0,\dots,n\}$, we have
	\[
	d(x_{2j})=\frac{v_{2j}}{c_{2j}}\overset{\eqref{eq:def_points_values_of_alg}}{=}t_{j}\overset{\eqref{eq:ind_hyp_d-ck+n_equals_tn}}{=}d(c_{k+j}),
	\]
	i.e.,
	\begin{equation}
	c_{k+j}=x_{2j}\label{eq:ck+j_equals_x2j+1}
	\end{equation}
	because $d$ is strictly decreasing.
	Furthermore, we have
	\[
	v(x_{2j+1})\overset{\eqref{eq:def_points_values_p}}{=}\rho(\rho+\epsilon)^{j}v_k\overset{\eqref{eq:def_points_values_of_alg}}{=}\rho v(x_{2j})\overset{\eqref{eq:ck+j_equals_x2j+1}}{=}\rho v(c_{k+j}),
	\]
	i.e.,
	\begin{equation}
	p(c_{k+j})=x_{2j+1}.\label{eq:pk+j_equals_x2j+2}
	\end{equation}
	because $v$ is strictly increasing.
	This yields
	\begin{eqnarray*}
		d(c_{k+n+1}) & \overset{\eqref{eq:GreedyCap_def}}{=} & \frac{v(c_{k+n})}{p(c_{k+n})-\sum_{j=1}^{k+n}c_{j}}\\
		& \overset{\eqref{eq:ck+j_equals_x2j+1},\eqref{eq:pk+j_equals_x2j+2}}{=} & \frac{v(x_{2n})}{x_{2n+1}-z-\sum_{j=0}^{n}x_{2j}}\\
		& \overset{\eqref{eq:def_points_values_of_alg},\eqref{eq:def_points_values_p}}{=} & \frac{(\rho+\epsilon)^{n}v_k}{\frac{\rho(\rho+\epsilon)^{n}v_k}{(1-\epsilon)t_{n}}-z-\sum_{j=0}^{n}\frac{(\rho+\epsilon)^{j}v_k}{t_{j}}}\\
		& = & \frac{1}{\frac{\rho}{(1-\epsilon)t_{n}}-\frac{z}{(\rho+\epsilon)^{n}v_k}-\sum_{j=0}^{n}\frac{(\rho+\epsilon)^{j-n}}{t_{j}}}\\
		& = & t_{n+1},
	\end{eqnarray*}
	which proves the claim.
	
	This implies
	\[
	\frac{1}{d(c_{k+\ell+1})}\overset{\eqref{eq:ind_hyp_d-ck+n_equals_tn}}{=}\frac{1}{t_{\ell+1}}\overset{\eqref{eq:def_ell_as_smallest_violation}}{<}\frac{1}{t_{\ell}}\overset{\eqref{eq:ind_hyp_d-ck+n_equals_tn}}{=}\frac{1}{d(c_{k+\ell})},
	\]
	i.e., either $d(c_{k+\ell+1})>d(c_{k+\ell})$ and thus $c_{k+\ell+1}<c_{k+\ell}$ because $d$ is strictly decreasing, or $d(c_{k+\ell+1})<0$.
	In both cases, by Proposition \ref{prop:GreedyCap_rho-comp_equivalence}, this implies that $\greedycap(c_{1},\rho)$ is not $\rho$-competitive for the instance given by $v$.
	
	Note that the values $v(c)$ for $c>x_{2\ell+1}$ are never used throughout this proof.
	Thus, setting $\overline{\capacity}=x_{2\ell+1}+1$ is a valid choice.
\end{proof}

As mentioned before, we can apply Theorem \ref{thm:excluding_one_starting_value} repeatedly in order to exclude every countable set of starting values and obtain Proposition~\ref{thm:lower_bound_greedycap}.

\GreedyCapLB*

\subsection{General Lower Bound}\label{sec:detLB}

Now we want to employ the techniques we used to prove Lemma~\ref{lem:recursive_sequence_becomes_negative} and Proposition~\ref{thm:lower_bound_greedycap} in order to prove a lower bound on the competitive ratio of \contincmax.
Let $\rho^{*}$ be the unique real root $\rho\geq1$ of the polynomial $-4\rho^{6}+24\rho^{4}-\rho^{3}-30\rho^{2}+31\rho-4$.
As before, we need to show that a recursively defined sequence becomes negative at some point.

\begin{restatable}{lemma}{becomesNegII}\label{lem:recursive_sequence_becomes_negative_2}
	For $\rho\in\R_{\geq0}$ and $\epsilon>0$, consider the recursively defined sequence $(t_{n})_{n\in\N}$ with
	\[
	t_{0}=1,\quad\quad t_{1}=\frac{1-\epsilon}{\rho},\quad\quad t_{n}=\frac{1-\epsilon}{\frac{\rho}{t_{n-1}}-\frac{1}{t_{n-2}}-\frac{1}{\rho}\bigl(\sum_{j=0}^{n-3}\frac{(\rho+\epsilon)^{j+2-n}}{t_{j}}\bigr)}\quad\textrm{for all }n\in\N_{\geq2}.
	\]
	If $1<\rho<\rho^{*}$, then there exists $\epsilon'>0$ such that, for all $\epsilon\in[0,\epsilon']$, there is $\ell\in\N$ with $t_{\ell}<0$.
\end{restatable}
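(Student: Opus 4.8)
The plan is to reproduce the scheme of the proof of Lemma~\ref{lem:recursive_sequence_becomes_negative}: linearize the recurrence, pass to the characteristic polynomial, show its roots are non-real for $1<\rho<\rho^{*}$, and deduce that the resulting oscillation forces a sign change. The one genuinely new feature is that the linearized recurrence now has order three, so besides a complex-conjugate pair there is a real characteristic root whose influence has to be shown negligible.

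To linearize, I would set $a_{n}:=1/t_{n}$ and then $c_{n}:=(\rho+\epsilon)^{n}a_{n}$, which turns the defining relation into
\[
(1-\epsilon)c_{n}=\rho(\rho+\epsilon)c_{n-1}-(\rho+\epsilon)^{2}c_{n-2}-\frac{(\rho+\epsilon)^{2}}{\rho}\sum_{j=0}^{n-3}c_{j}\qquad\text{for }n\ge 2,
\]
and subtracting this identity at $n$ from the one at $n+1$ eliminates the cumulative sum, leaving a linear homogeneous recurrence of order three. At $\epsilon=0$ its characteristic polynomial is $P(x)=x^{3}-(1+\rho^{2})x^{2}+2\rho^{2}x-\rho(\rho-1)$, and the initial data read off from $t_{0},t_{1}$ are $c_{0}=1$, $c_{1}=\rho^{2}$, $c_{2}=\rho^{4}-\rho^{2}$. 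Since $(\rho+\epsilon)^{n}>0$ we have $c_{\ell}<0\iff t_{\ell}<0$, so it suffices to make some $c_{\ell}$ negative.

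The next step is the analysis of $P$. A direct (if lengthy) computation of $\operatorname{disc}(P)$ as a polynomial in~$\rho$ shows that it is negative exactly when $1<\rho<\rho^{*}$, the sextic $-4\rho^{6}+24\rho^{4}-\rho^{3}-30\rho^{2}+31\rho-4$ being the sign-determining factor whose relevant root is $\rho^{*}$; hence $P$ has a unique real root $\mu$ and a complex-conjugate pair $x,\bar{x}$. By Vieta, $\mu\,|x|^{2}=\rho(\rho-1)>0$, so $\mu>0$. The crucial extra step — absent in the order-two case — is to show $\mu<|x|$: evaluating $P$ at $s:=(\rho(\rho-1))^{1/3}$ gives $P(s)=s\bigl(2\rho^{2}-(1+\rho^{2})s\bigr)$, which is positive on $(1,\rho^{*})$ because the elementary inequality $(\rho-1)(1+\rho^{2})^{3}<8\rho^{5}$ holds there; since $P$ has only one real root and $P(x)\to+\infty$, positivity of $P(s)$ forces $s>\mu$, that is $\mu^{3}<\rho(\rho-1)=\mu|x|^{2}$, i.e.\ $\mu<|x|$. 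With this in hand I would conclude exactly as in Lemma~\ref{lem:recursive_sequence_becomes_negative}: write $c_{n}=\gamma\mu^{n}+2\,\mathfrak{R}(\lambda x^{n})$ with $\gamma\in\R$ and $\lambda\in\C$ determined by $c_{0},c_{1},c_{2}$, where $\lambda\neq 0$ because $(c_{n})$ is not geometric ($c_{2}=\rho^{4}-\rho^{2}\neq\rho^{4}=c_{1}^{2}/c_{0}$); in polar coordinates $x=r_{x}\exp(\mathrm{i}\varphi_{x})$ with $\varphi_{x}\in(0,\pi)$ (strict, as $x\notin\R$) and $\lambda=r_{\lambda}\exp(\mathrm{i}\varphi_{\lambda})$ with $r_{\lambda}>0$, so $c_{n}=\gamma\mu^{n}+2r_{\lambda}r_{x}^{n}\cos(\varphi_{\lambda}+n\varphi_{x})$. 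Because $\varphi_{x}\in(0,\pi)$, the residues $\varphi_{\lambda}+n\varphi_{x}\bmod 2\pi$ either equidistribute or form at least three equally spaced points of the circle, and in either case cannot all avoid $\{\cos<0\}$; hence there are infinitely many $n$ with $\cos(\varphi_{\lambda}+n\varphi_{x})\le-\delta$ for a fixed $\delta>0$. For such $n$,
\[
c_{n}\le|\gamma|\mu^{n}-2r_{\lambda}\delta\, r_{x}^{n}=r_{x}^{n}\Bigl(|\gamma|(\mu/r_{x})^{n}-2r_{\lambda}\delta\Bigr),
\]
and since $0\le\mu/r_{x}<1$ this is negative once $n$ is large enough along this subsequence, giving $c_{\ell}<0$ and thus $t_{\ell}<0$. (If some earlier $c_{j}$ vanishes, then $t_{j}=\infty$ and the recurrence makes $t_{j+1}<0$ directly, so the conclusion holds anyway.) Finally, $\epsilon=0$ and small $\epsilon>0$ are handled together: the conditions $\operatorname{disc}(P_{\epsilon})<0$, $\mu_{\epsilon}>0$, $\mu_{\epsilon}<|x_{\epsilon}|$ and $\lambda_{\epsilon}\neq 0$ all hold strictly at $\epsilon=0$, hence on some interval $[0,\epsilon']$, and for each $\epsilon$ in this interval the argument above yields an index $\ell=\ell(\epsilon)$ with $t_{\ell}<0$.

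I expect the main obstacle to be precisely the handling of the real characteristic root~$\mu$: one must push through the messier cubic discriminant computation that pins down~$\rho^{*}$, and then certify that the $\gamma\mu^{n}$ contribution cannot cancel the oscillation of the complex part — which is the content of the inequalities $0<\mu<|x|$ and of the elementary estimate $(\rho-1)(1+\rho^{2})^{3}<8\rho^{5}$ behind them. The rest is a routine adaptation of the argument already carried out for the order-two recurrence.
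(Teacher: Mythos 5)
Your proposal follows the same high-level scheme as the paper's proof: pass to $a_n=1/t_n$, subtract consecutive instances to kill the cumulative sum and obtain an order-three linear homogeneous recurrence, show the characteristic polynomial has a negative discriminant precisely for $1<\rho<\rho^*$ (whence one real and two complex-conjugate roots), and conclude that the spiraling complex part eventually drives the sequence negative. Your normalization $c_n=(\rho+\epsilon)^n a_n$ is a harmless rescaling of the paper's roots by $(\rho+\epsilon)$, so the two characteristic polynomials differ only by that change of variable.

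Where you genuinely add value is in the root-magnitude comparison. The paper simply asserts $|r_1|<1<|r_2|$ without proof; all the argument actually needs is that the real root is strictly smaller in modulus than the complex pair, and you supply a clean proof of exactly that: Vieta gives $\mu|x|^2=\rho(\rho-1)>0$, hence $\mu>0$, and evaluating $P$ at $s=(\rho(\rho-1))^{1/3}$ collapses to $P(s)=s(2\rho^2-(1+\rho^2)s)>0$ on $(1,\rho^*)$, which forces $\mu<s$ and hence $\mu<|x|$. You also take care of two loose ends the paper leaves implicit — that $\lambda\neq 0$ (non-geometricity of $(c_n)$ from the initial data), and the degenerate case $c_j=0$ — and your dichotomy (equidistribution vs.\ a finite equally-spaced orbit with spacing $<\pi$) is a sound way to get $\cos(\varphi_\lambda+n\varphi_x)\le-\delta$ along an infinite subsequence. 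The continuity step for small $\epsilon>0$ is handled the same way in both proofs. In short: same route as the paper, but your version of the "complex roots dominate the real root" step is more careful and self-contained, and is the one part of the argument the paper does not actually justify.
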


\begin{proof}
	Let $1<\rho<\rho^{*}$. Rearranging terms, we obtain
	\[
	\frac{1-\epsilon}{t_{n}}=\frac{\rho}{t_{n-1}}-\frac{1}{t_{n-2}}-\frac{1}{\rho}\bigl(\sum_{j=0}^{n-3}\frac{(\rho+\epsilon)^{j+2-n}}{t_{j}}\bigr)\quad\textrm{for all }n\in\N_{\geq2}.
	\]
	We substitute $a_{n}=1/t_{n}$ for all $n\in\N\cup\{0\}$ and obtain the recursively defined sequence $(a_{n})_{n\in\N}$ with $a_{0}=1/\beta$ and
	\begin{equation}
	(1-\epsilon)a_{n}=\rho a_{n-1}-a_{n-2}-\frac{1}{\rho}\bigl(\sum_{j=0}^{n-3}(\rho+\epsilon)^{j+2-n}a_{j}\bigr)\quad\textrm{for all }n\in\N_{\geq2}.\label{eq:a_n-recursive_2}
	\end{equation}
	This also implies
	\begin{equation}
	(1-\epsilon)(\rho+\epsilon)a_{n+1}=\rho(\rho+\epsilon)a_{n}-(\rho+\epsilon)a_{n-1}-\frac{1}{\rho}\bigl(\sum_{j=0}^{n-2}(\rho+\epsilon)^{j+2-n}a_{j}\bigr)\quad\textrm{for all }n\in\N.\label{eq:a_n+1-recursive_2}
	\end{equation}
	Subtracting \eqref{eq:a_n-recursive_2} from \eqref{eq:a_n+1-recursive_2}, we obtain
	\[
	(1-\epsilon)(\rho+\epsilon)a_{n+1}-(1-\epsilon)a_{n}=\Bigl(\rho(\rho+\epsilon)a_{n}-(\rho+\epsilon)a_{n-1}-\frac{1}{\rho}a_{n-2}\Bigr)-\Bigl(\rho a_{n-1}-a_{n-2}\Bigr)
	\]
	for all $n\in\N_{\geq2}$, which yields
	\[
	(1-\epsilon)(\rho+\epsilon)a_{n+1}=(\rho^{2}+1+\rho\epsilon-\epsilon)a_{n}-(2\rho+\epsilon)a_{n-1}+\Bigl(1-\frac{1}{\rho}\Bigr)a_{n-2}
	\]
	for all $n\in\N_{\geq2}$.
	Together with the start values
	\begin{equation}
	a_{0}=1,\quad\quad a_{1}=\frac{\rho}{1-\epsilon},\quad\textrm{and}\quad a_{2}=\frac{\rho a_{1}-a_{0}}{1-\epsilon}=\frac{\rho^{2}-1+\epsilon}{(1-\epsilon)^{2}},\label{eq:recurrence_relation_starting_values_2}
	\end{equation}
	this yields a uniquely defined linear homogeneous recurrence relation with characteristic polynomial
	\begin{equation}
	0=x^{3}-\frac{\rho^{2}+1+\rho\epsilon-\epsilon}{(1-\epsilon)(\rho+\epsilon)}x^{2}+\frac{2\rho+\epsilon}{(1-\epsilon)(\rho+\epsilon)}x-\frac{1-\frac{1}{\rho}}{(1-\epsilon)(\rho+\epsilon)}.\label{eq:recurrence_relation_char_poly}
	\end{equation}
	Using
	\begin{align*}
		a & =-\frac{\rho^{2}+1+\rho\epsilon-\epsilon}{(1-\epsilon)(\rho+\epsilon)},\\
		b & =\frac{2\rho+\epsilon}{(1-\epsilon)(\rho+\epsilon)},\\
		c & =-\frac{1-\frac{1}{\rho}}{(1-\epsilon)(\rho+\epsilon)},
	\end{align*}
	the discriminant of this polynomial is
	\[
	D(\rho,\epsilon)=\Bigl(\frac{a^{3}}{27}-\frac{ab}{6}+\frac{c}{2}\Bigr)^{2}+\Bigl(\frac{b}{3}-\frac{a^{2}}{9}\Bigr)^{3}.
	\]
	In particular, we have
	\[
	D(\rho,0)=\frac{-4\rho^{6}+24\rho^{4}-\rho^{3}-30\rho^{2}+31\rho-4}{108\rho^{5}}>0
	\]
	because $1<\rho<\rho^{*}$.
	Note that $a,b,c$ are all continuous in $\epsilon$ and, thus, so is $D(\rho,\epsilon)$.
	Therefore, there is $\epsilon'>0$ such that, for all $\epsilon\in[0,\epsilon']$, we have $D(\rho,\epsilon)>0$.
	The fact that for the discriminant of the polynomial we have $D(\rho,\epsilon)>0$ implies that \eqref{eq:recurrence_relation_char_poly} has one real root $r_{1}$ and two complex conjugate roots $r_{2}$ and $r_{3}=\overline{r_{2}}$.
	We want to express the recurrence relation in terms of the roots, i.e., we want to find $\lambda_{1},\lambda_{2},\lambda_{3}\in\C$ such that
	\begin{equation}
	a_{n}=\lambda_{1}r_{1}^{n}+\lambda_{2}r_{2}^{n}+\lambda_{3}r_{3}^{n}.\label{eq:recurrence_relation_ansatz}
	\end{equation}
	We will now show that $\lambda_{3}=\overline{\lambda_{2}}$.
	Using the starting values \eqref{eq:recurrence_relation_starting_values_2} together with \eqref{eq:recurrence_relation_ansatz}, we obtain
	\begin{align*}
		1 & =\lambda_{1}+\lambda_{2}+\lambda_{3},\\
		\frac{\rho}{1-\epsilon} & =\lambda_{1}r_{1}+\lambda_{2}r_{2}+\lambda_{3}r_{3},\\
		\frac{\rho^{2}-1+\epsilon}{(1-\epsilon)^{2}} & =\lambda_{1}r_{1}^{2}+\lambda_{2}r_{2}^{2}+\lambda_{3}r_{3}^{2}.
	\end{align*}
	This implies
	\begin{align}
		0 & =\mathfrak{Im}(\lambda_{1})+\mathfrak{Im}(\lambda_{2})+\mathfrak{Im}(\lambda_{3}),\label{eq:recurrence_relation_Im_equals_0_1}\\
		0 & =\mathfrak{Im}(\lambda_{1}r_{1})+\mathfrak{Im}(\lambda_{2}r_{2})+\mathfrak{Im}(\lambda_{3}r_{3}),\label{eq:recurrence_relation_Im_equals_0_2}\\
		0 & =\mathfrak{Im}(\lambda_{1}r_{1}^{2})+\mathfrak{Im}(\lambda_{2}r_{2}^{2})+\mathfrak{Im}(\lambda_{3}r_{3}^{2}).\label{eq:recurrence_relation_Im_equals_0_3}
	\end{align}
	We obtain
	\begin{eqnarray}
	&  & \mathfrak{Re}(r_{1}-r_{2})\mathfrak{Im}(\lambda_{2}+\lambda_{3})+\mathfrak{Im}(r_{1}-r_{2})\mathfrak{Re}(\lambda_{2}-\lambda_{3})\nonumber \\
	& = & \mathfrak{Re}(r_{1}-r_{2})\mathfrak{Im}(\lambda_{2})+\mathfrak{Im}(r_{1}-r_{2})\mathfrak{Re}(\lambda_{2})\nonumber \\
	& & +\mathfrak{Re}(r_{1}-r_{2})\mathfrak{Im}(\lambda_{3})-\mathfrak{Im}(r_{1}-r_{2})\mathfrak{Re}(\lambda_{3})\nonumber \\
	& \overset{r_{1}\in\R,r_{3}=\overline{r_{2}}}{=} & \mathfrak{Re}(r_{1}-r_{2})\mathfrak{Im}(\lambda_{2})+\mathfrak{Im}(r_{1}-r_{2})\mathfrak{Re}(\lambda_{2})\nonumber \\
	& & +\mathfrak{Re}(r_{1}-r_{3})\mathfrak{Im}(\lambda_{3})+\mathfrak{Im}(r_{1}-r_{3})\mathfrak{Re}(\lambda_{3})\nonumber \\
	& = & \mathfrak{Im}\bigl((r_{1}-r_{2})\lambda_{2}+(r_{1}-r_{3})\lambda_{3}\bigr)\nonumber \\
	& \overset{r_{1}\in\R}{=} & r_{1}\mathfrak{Im}(\lambda_{2})-\mathfrak{Im}(\lambda_{2}r_{2})+r_{1}\mathfrak{Im}(\lambda_{3})-\mathfrak{Im}(\lambda_{3}r_{3})\nonumber \\
	& \overset{\eqref{eq:recurrence_relation_Im_equals_0_1},\eqref{eq:recurrence_relation_Im_equals_0_2}}{=} & -r_{1}\mathfrak{Im}(\lambda_{1})+\mathfrak{Im}(\lambda_{1}r_{1})\nonumber \\
	& \overset{r_{1}\in\R}{=} & 0\label{eq:recurrence_relation_LGS_1}
	\end{eqnarray}
	and
	\begin{eqnarray}
	&  & \mathfrak{Re}(r_{1}^{2}-r_{2}^{2})\mathfrak{Im}(\lambda_{2}+\lambda_{3})+\mathfrak{Im}(r_{1}^{2}-r_{2}^{2})\mathfrak{Re}(\lambda_{2}-\lambda_{3})\nonumber \\
	& = & \mathfrak{Re}(r_{1}^{2}-r_{2}^{2})\mathfrak{Im}(\lambda_{2})+\mathfrak{Im}(r_{1}^{2}-r_{2}^{2})\mathfrak{Re}(\lambda_{2})\nonumber \\
	& & +\mathfrak{Re}(r_{1}^{2}-r_{2}^{2})\mathfrak{Im}(\lambda_{3})-\mathfrak{Im}(r_{1}^{2}-r_{2}^{2})\mathfrak{Re}(\lambda_{3})\nonumber \\
	& \overset{r_{1}\in\R,r_{3}=\overline{r_{2}}}{=} & \mathfrak{Re}(r_{1}^{2}-r_{2}^{2})\mathfrak{Im}(\lambda_{2})+\mathfrak{Im}(r_{1}^{2}-r_{2}^{2})\mathfrak{Re}(\lambda_{2})\nonumber \\
	& & +\mathfrak{Re}(r_{1}^{2}-r_{3}^{2})\mathfrak{Im}(\lambda_{3})+\mathfrak{Im}(r_{1}^{2}-r_{3}^{2})\mathfrak{Re}(\lambda_{3})\nonumber \\
	& = & \mathfrak{Im}\bigl((r_{1}^{2}-r_{2}^{2})\lambda_{2}+(r_{1}^{2}-r_{3}^{2})\lambda_{3}\bigr)\nonumber \\
	& \overset{r_{1}\in\R}{=} & r_{1}^{2}\mathfrak{Im}(\lambda_{2})-\mathfrak{Im}(\lambda_{2}r_{2}^{2})+r_{1}^{2}\mathfrak{Im}(\lambda_{3})-\mathfrak{Im}(\lambda_{3}r_{3}^{2})\nonumber \\
	& \overset{\eqref{eq:recurrence_relation_Im_equals_0_1},\eqref{eq:recurrence_relation_Im_equals_0_3}}{=} & -r_{1}^{2}\mathfrak{Im}(\lambda_{1})+\mathfrak{Im}(\lambda_{1}r_{1}^{2})\nonumber \\
	& \overset{r_{1}\in\R}{=} & 0.\label{eq:recurrence_relation_LGS_2}
	\end{eqnarray}
	Note that $(r_{1}^{2}-r_{2}^{2})=(r_{1}-r_{2})(r_{1}+r_{2})$ and the fact that $(r_{1}+r_{2})\notin\R$ imply that the matrix
	\[
	M:=\begin{pmatrix}\mathfrak{Re}(r_{1}-r_{2}) & \mathfrak{Im}(r_{1}-r_{2})\\
	\mathfrak{Re}(r_{1}^{2}-r_{2}^{2}) & \mathfrak{Im}(r_{1}^{2}-r_{2}^{2})
	\end{pmatrix}
	\]
	has rank $2$, i.e., the system of equations $M\cdot(\mathfrak{Im}(\lambda_{2}+\lambda_{3}),\mathfrak{Re}(\lambda_{2}-\lambda_{3}))^{T}=(0,0)^{T}$ given by \eqref{eq:recurrence_relation_LGS_1} and \eqref{eq:recurrence_relation_LGS_2} only has the solution $(0,0)^{T}$.
	This yields that $\mathfrak{Im}(\lambda_{2})=-\mathfrak{Im}(\lambda_{3})$ and $\mathfrak{Re}(\lambda_{2})=\mathfrak{Re}(\lambda_{3})$, i.e., we have $\lambda_{3}=\overline{\lambda_{2}}$, and, by \eqref{eq:recurrence_relation_Im_equals_0_1}, we have $\lambda_{1}\in\R$.
	Thus, the recurrence relation can be written as
	\[
	a_{n}=\lambda_{1}r_{1}^{n}+\lambda_{2}r_{2}^{n}+\overline{\lambda_{2}r_{2}^{n}}=\lambda_{1}r_{1}^{n}+2\mathfrak{Re}(\lambda_{2}r_{2}^{n}).
	\]
	We have $|r_{1}|<1$ and $|r_{2}|>1$. There exists $\ell\in\N$ such that $\mathfrak{Re}(\lambda_{2}r_{2}^{\ell})<-\frac{\lambda_{1}}{2}$ because $r_{2}\notin\R$ and $|r_{2}|>1$.
	Thus, $a_{\ell}=\lambda_{1}r_{1}^{\ell}+2\mathfrak{Re}(\lambda_{2}r_{2}^{\ell})<\lambda_{1}r_{1}^{\ell}-\lambda_{1}<0$ where the last inequality follows from the fact that $|r_{1}|<1$.
\end{proof}

With this lemma, we are ready to construct our lower bound on the competitive ratio of \contincmax and thus, via Propositions~\ref{prop:CR_incmax_incmaxsep} and~\ref{thm:CRC_to_RC}, of \incmax.
Recall that $\rho^{*}\approx2.246$ is the unique solution $\rho\geq1$ to the equation $-4\rho^{6}+24\rho^{4}-\rho^{3}-30\rho^{2}+31\rho-4$.

\detLB*

\begin{proof}
	Let $\rho<\rho^{*}$.
	By Lemma \ref{lem:recursive_sequence_becomes_negative_2}, there is $\epsilon'>0$ such that, for all $\epsilon\in(0,\epsilon']$, the recursively defined sequence $(t_{n})_{n\in\N}$ with
	\[
	t_{0}=1,\quad\quad t_{1}=\frac{1-\epsilon}{\rho},\quad\quad t_{n}=\frac{1-\epsilon}{\frac{\rho}{t_{n-1}}-\frac{1}{t_{n-2}}-\frac{1}{\rho}\bigl(\sum_{j=0}^{n-3}\frac{(\rho+\epsilon)^{j+2-n}}{t_{j}}\bigr)}\quad\textrm{for all }n\in\N_{\geq2}
	\]
	becomes negative at some point.
	Thus, for $\epsilon>0$, we can define $\ell(\epsilon)\in\N$ to be the smallest value such that $\smash{\frac{1}{t_{\ell(\epsilon)}}\geq\frac{1}{t_{\ell(\epsilon)+1}}}$.
	Note that this is the case when either $t_{\ell(\epsilon)+1}<0$ or $t_{\ell(\epsilon)+1}\geq t_{\ell(\epsilon)}$.
	This implies that $t_{0}>t_{1}>\dots>t_{\ell(\epsilon)}$.
	Let $v_{i}=(\rho+\epsilon)^{i}$, and let $\epsilon\in(0,\epsilon']$ be small enough such that $\ell:=\ell(\epsilon)=\ell(0)$ and such that, for all $n\in\{0,\dots,\ell-1\}$, we have
	\begin{equation}
	\frac{1}{t_{n+1}}>\frac{\epsilon}{\rho+\epsilon}\cdot\Bigl(\frac{1}{t_{n}}+\frac{1}{v_{n}}\sum_{j=0}^{n-1}\frac{v_{j}}{t_{j}}\Bigr),\label{eq:det_lower_bound_eps_small_enough_1}
	\end{equation}
	which is possible since the inequality holds for $\epsilon=0$.
	
	We consider the instance of \contincmax with the value function that satisfies
	\[
	v(0)=0,\quad v\Bigl(\frac{v_{n}}{t_{n}}\Bigr)=v_{n},\quad\textrm{and}\quad v\Bigl(\frac{v_{n}}{t_{n+1}}\Bigr)=v_{n}\quad\textrm{for all }n\in\{0,1,\dots,\ell\}
	\]
	and linearly interpolates between those points.
	This means that, for $0\leq c\leq1$, we have $d(c)=1$, for $\smash{\frac{v_{n}}{t_{n}}\leq c\leq\frac{v_{n}}{t_{n+1}}}$, we have $v(c)=v_{n}$, and, for $\smash{\frac{v_{n}}{t_{n+1}}\leq c\leq\frac{v_{n+1}}{t_{n+1}}}$, we have $d(c)=t_{n+1}$.
	
	Suppose there was a $\rho$-competitive solution $(c_{0},\dots,c_{N})$, for some $N\in\N$, for this problem instance.
	Without loss of generality, we can assume that, for all $n\in\{0,\dots,N\}$, we have
	\begin{equation}
	d(c_{n})\in\{t_{0},t_{1},\dots,t_{\ell}\}.\label{eq:det_lower_bound_cn_from_sequence}
	\end{equation}
	If this is not the case, we have $\smash{\frac{v_{i}}{t_{i}}<c_{n}<\frac{v_{i}}{t_{i+1}}}$ for some $i\in\{0,\dots,N-1\}$.
	Then, we can improve the solution by setting $c_{n}=\frac{v_{i}}{t_{i}}$ because $v(c_{n})=v_{i}=v\bigl(\frac{v_{i}}{t_{i}}\bigr)$ and $\frac{v_{i}}{t_{i}}<c_{n}$, i.e., the solution obtains the same value faster and can start adding the next size earlier.
	Furthermore, we can assume that
	\begin{equation}
	d(c_{n})>d(c_{n+1})\label{eq:det_lower_bound_dn_decreasing}
	\end{equation}
	for all $n\in\{0,\dots,N-1\}$.
	Otherwise we can improve the solution by removing the smaller of~$c_{n}$ and~$c_{n+1}$.
	This also implies that $N\leq\ell$.
	
	We will now show by induction, that, for $n\in\{0,\dots,N\}$, we have
	\begin{equation}
	d(c_{n})>t_{n+1},\label{eq:det_lower_bound_ind-hyp_1}
	\end{equation}
	and, for $n\in\{0,\dots,N-1\}$, we have
	\begin{equation}
	c_{n}\geq\frac{1}{\rho}\cdot\frac{v_{n}}{t_{n}}.\label{eq:det_lower_bound_ind-hyp_2}
	\end{equation}
	For $n=0,$ by Lemma \ref{lem:competitive_equivalence}, we have $d(c_{0})\geq\frac{1}{\rho}>\frac{1-\epsilon}{\rho}$, i.e., \eqref{eq:det_lower_bound_ind-hyp_1} holds.
	If $c_{0}<\frac{1}{\rho}\cdot\frac{v_{0}}{t_{0}}=\frac{1}{\rho}$, then the solution achieved by only adding $c_{0}$ is $\rho$-competitive up to size $p(c_{0})=\rho c_{0}$.
	By Lemma~\ref{lem:competitive_equivalence}, we have
	\[
	d(c_{1})\geq\frac{v(c_{0})}{p(c_{0})-\sum_{j=0}^{0}c_{j}}=\frac{c_{0}t_{0}}{\rho c_{0}-c_{0}}=\frac{1}{\rho-1}>\frac{1-\epsilon}{\rho}=t_{1},
	\]
	i.e., using \eqref{eq:det_lower_bound_cn_from_sequence}, we obtain $d(c_{1})=t_{0}$, which is a contradiction to \eqref{eq:det_lower_bound_dn_decreasing}.
	Thus, $c_{0}\geq\frac{1}{\rho}\cdot\frac{v_{0}}{t_{0}}$, i.e., also \eqref{eq:det_lower_bound_ind-hyp_2} holds.
	Now, assume that, for some $n\in\{0,\dots,N-1\}$, \eqref{eq:det_lower_bound_ind-hyp_1} and \eqref{eq:det_lower_bound_ind-hyp_2} hold for all $i\in\{0,\dots,n\}$.
	As $c_{n}\geq\frac{1}{\rho}\cdot\frac{v_{n}}{t_{n}}$, we have 
	\begin{equation}
	p(c_{n})=\frac{\rho v(c_{n})}{t_{n+1}}.\label{eq:det_lower_bound_val_of_p(cn)}
	\end{equation}
	Note that we have $d(c_{i})\in\{t_{0},\dots,t_{\ell}\}$ for all $i\in\{0,\dots,N\}$ and that $d(c_{i+1})<d(c_{i})$.
	We have $t_{0}>\dots>t_{\ell}$, which, together with \eqref{eq:det_lower_bound_ind-hyp_1} for $i\in\{0,\dots,n\}$, implies that $d(c_{i})=t_{i}$ for all $i\in\{0,\dots,n\}$.
	Thus, $c_{n}\leq\frac{v_{n}}{t_{n}}$ and therefore
	\begin{equation}
	v(c_{n})\leq v\bigl(\frac{v_{n}}{t_{n}}\bigr)=v_{n}.\label{eq:det_lower_bound_v(c_n)_leq_v_n}
	\end{equation}
	By Lemma \ref{lem:competitive_equivalence}, $c_{n+1}$ has to satisfy
	\begin{eqnarray*}
		d(c_{n+1}) & \geq & \frac{v(c_{n})}{p(c_{n})-\sum_{j=0}^{n}c_{j}}\overset{\eqref{eq:det_lower_bound_val_of_p(cn)}}{=}\frac{v(c_{n})}{\frac{\rho v(c_{n})}{t_{n+1}}-\sum_{j=0}^{n}c_{j}}\\
		& \overset{\eqref{eq:det_lower_bound_ind-hyp_2}}{\geq} & \frac{v(c_{n})}{\frac{\rho v(c_{n})}{t_{n+1}}-c_{n}-\frac{1}{\rho}\sum_{j=0}^{n-1}\frac{v_{j}}{t_{j}}}\\
		& = & \frac{1}{\frac{\rho}{t_{n+1}}-\frac{1}{t_{n}}-\frac{1}{\rho v(c_{n})}\sum_{j=0}^{n-1}\frac{(\rho+\epsilon)^{j}}{t_{j}}}\\
		& \overset{\eqref{eq:det_lower_bound_v(c_n)_leq_v_n}}{\geq} & \frac{1}{\frac{\rho}{t_{n+1}}-\frac{1}{t_{n}}-\frac{1}{\rho v_{n}}\bigl(\sum_{j=0}^{n-1}\frac{(\rho+\epsilon)^{j}}{t_{j}}\bigr)}\\
		& = & \frac{1}{\frac{\rho}{t_{n+1}}-\frac{1}{t_{n}}-\frac{1}{\rho}\bigl(\sum_{j=0}^{n-1}\frac{(\rho+\epsilon)^{j-n}}{t_{j}}\bigr)}\\
		& = & \frac{1}{1-\epsilon}t_{n+2}>t_{n+2}.
	\end{eqnarray*}
	As $t_{0}>\dots>t_{\ell}$, we have $d(c_{n+1})\notin\{t_{n+2},\dots,t_{\ell}\}$, which implies that $d(c_{n+1})=t_{n+1}$.
	Thus,~\eqref{eq:det_lower_bound_ind-hyp_1} holds for $n+1$.
	Now, assume that, for some $n\in\{0,\dots,N-2\}$, \eqref{eq:det_lower_bound_ind-hyp_1} and \eqref{eq:det_lower_bound_ind-hyp_2} hold for all $i\in\{0,\dots,n\}$ and suppose that \eqref{eq:det_lower_bound_ind-hyp_2} does not hold for $n+1$, i.e., we have $c_{n+1}<\frac{1}{\rho}\cdot\frac{v_{n+1}}{t_{n+1}}$.
	Then, $v(c_{n+1})=c_{n+1}t_{n+1}<\frac{1}{\rho}v_{n+1}$, i.e., the solution $(c_{0},\dots,c_{n+1})$ is $\rho$-competitive up to size $\smash{p(c_{n+1})=\frac{\rho v(c_{n+1})}{t_{n+1}}=\rho c_{n+1}}$.
	By Lemma \ref{lem:competitive_equivalence}, the next size in the solution has to satisfy
	\begin{eqnarray*}
		d(c_{n+2}) & \geq & \frac{v(c_{n+1})}{p(c_{n+1})-\sum_{j=0}^{n+1}c_{j}}=\frac{v(c_{n+1})}{\rho c_{n+1}-\sum_{j=0}^{n+1}c_{j}}\\
		& \overset{\eqref{eq:det_lower_bound_ind-hyp_2}}{\geq} & \frac{v(c_{n+1})}{\rho c_{n+1}-c_{n+1}-c_{n}-\frac{1}{\rho}\sum_{j=0}^{n-1}\frac{v_{j}}{t_{j}}}\\
		& = & \frac{1}{\frac{\rho}{t_{n+1}}-\frac{1}{t_{n+1}}-\frac{c_{n}}{v(c_{n+1})}-\frac{1}{\rho v(c_{n+1})}\sum_{j=0}^{n-1}\frac{v_{j}}{t_{j}}}\\
		& > & \frac{1}{\frac{\rho}{t_{n+1}}-\frac{1}{t_{n+1}}-\frac{\rho c_{n}}{v_{n+1}}-\frac{1}{v_{n+1}}\sum_{j=0}^{n-1}\frac{v_{j}}{t_{j}}}\\
		& \geq & \frac{1}{\frac{\rho}{t_{n+1}}-\frac{1}{t_{n+1}}-\frac{\rho v_{n}}{v_{n+1}t_{n+1}}-\frac{1}{v_{n+1}}\sum_{j=0}^{n-1}\frac{v_{j}}{t_{j}}}\\
		& = & \frac{1}{\frac{\rho}{t_{n+1}}-\frac{1}{t_{n+1}}-\frac{\rho}{\rho+\epsilon}\cdot\frac{1}{t_{n+1}}-\frac{\rho}{\rho+\epsilon}\frac{1}{\rho v_{n}}\sum_{j=0}^{n-1}\frac{v_{j}}{t_{j}}}\\
		& \geq & \frac{1}{\frac{\rho}{t_{n+1}}-\frac{1}{t_{n+1}}-\frac{\rho}{\rho+\epsilon}\cdot\frac{1}{t_{n}}-\frac{\rho}{\rho+\epsilon}\frac{1}{\rho v_{n}}\sum_{j=0}^{n-1}\frac{v_{j}}{t_{j}}}\\
		& \overset{\eqref{eq:det_lower_bound_eps_small_enough_1}}{\geq} & \frac{1}{\frac{\rho}{t_{n+1}}-\frac{1}{t_{n}}-\frac{1}{\rho v_{n}}\sum_{j=0}^{n-1}\frac{v_{j}}{t_{j}}}\\
		& = & \frac{1}{\frac{\rho}{t_{n+1}}-\frac{1}{t_{n}}-\frac{1}{\rho}\sum_{j=0}^{n-1}\frac{(\rho+\epsilon)^{j-n}}{t_{j}}}\\
		& = & \frac{1}{1-\epsilon}t_{n+2}>t_{n+2}.
	\end{eqnarray*}
	As $t_{0}>\dots>t_{\ell}$, we have $d(c_{n+2})\notin\{t_{n+2},\dots,t_{\ell}\}$.
	But, for all $i\in\{0,\dots,n+1\}$, we also have $d(c_{i})=t_{i}$ and thus $d(c_{n+2})\notin\{t_{0},\dots,t_{n+1}\}$.
	This is a contradiction to \eqref{eq:det_lower_bound_cn_from_sequence} and therefore \eqref{eq:det_lower_bound_ind-hyp_2} holds for $n+1$.
	
	We have established that \eqref{eq:det_lower_bound_ind-hyp_1} holds for all $n\in\{0,\dots,N\}$.
	Together with the fact that \mbox{$d(c_{n})\in\{t_{0},\dots,t_{\ell}\}$} for \mbox{$n\in\{0,\dots,N\}$}, $d(c_{0})>\dots>d(c_{N})$ and $t_{0}>\dots>t_{\ell}$, we obtain $N\geq\ell-1$ and $d(c_{n})=t_{n}$ for all \mbox{$n\in\{0,\dots,\ell-1\}$}.
	If $N=\ell-1$, the solution obtains a value of $v(c_{\ell-1})$ for size $\smash{\frac{v_{\ell}}{t_{\ell}}}$.
	Yet, the optimum solution for this size has value $\smash{v(\frac{v_{\ell}}{t_{\ell}})=v_{\ell}=(\rho+\epsilon)v_{\ell-1}\geq(\rho+\epsilon)v(c_{\ell-1})}$.
	Thus, $(c_{0},\dots,c_{N})$ would not be $\rho$-competitive.
	Therefore, we have $N=\ell$.
	By a similar argument, we find that $\smash{c_{\ell}\geq\frac{1}{\rho}\cdot\frac{v_{\ell}}{t_{\ell}}}$.
	By \eqref{eq:det_lower_bound_ind-hyp_1}, we know that $d(c_{\ell})>t_{\ell+1}$.
	If $t_{\ell+1}\geq t_{\ell}$, we know that $d(c_{\ell})\neq t_{\ell}$.
	But we also have $d(c_{\ell})\notin\{t_{0},\dots,t_{\ell-1}\}$ as $d(c_{\ell})<d(c_{\ell-1})=t_{\ell-1}$ and $t_{0}<\dots<t_{\ell-1}$.
	This is a contradiction to the assumption that $d(c_{\ell})\in\{t_{0},\dots,t_{\ell}\}$.
	Therefore, $t_{\ell+1}<t_{\ell}$.
	By definition of $\ell$, we have $\smash{\frac{1}{t_{\ell+1}}<\frac{1}{t_{\ell}}}$, which implies that
	\[
	0>t_{\ell+1}=\frac{1-\epsilon}{\frac{\rho}{t_{\ell}}-\frac{1}{t_{\ell-1}}-\frac{1}{\rho}\sum_{j=0}^{\ell-2}\frac{(\rho+\epsilon)^{j+1-\ell}}{t_{j}}}>\frac{1}{\frac{\rho}{t_{\ell}}-\frac{1}{t_{\ell-1}}-\frac{1}{\rho}\sum_{j=0}^{\ell-2}\frac{(\rho+\epsilon)^{j+1-\ell}}{t_{j}}}.
	\]
	This is equivalent to
	\begin{equation}
	\frac{\rho}{t_{\ell}}-\frac{1}{t_{\ell-1}}-\frac{1}{\rho}\sum_{j=0}^{\ell-2}\frac{(\rho+\epsilon)^{j+1-\ell}}{t_{j}}<0.\label{eq:det_lower_bound_tl+1_smaller_0}
	\end{equation}
	We have $p(c_{\ell-1})=\frac{\rho v(c_{\ell-1})}{t_{\ell}}$ and thus
	\begin{eqnarray*}
		\frac{1}{v_{\ell-1}}\Bigl(p(c_{\ell-1})-\sum_{j=0}^{\ell-1}c_{j}\Bigr) & = & \frac{1}{v_{\ell-1}}\Bigl(\frac{\rho v(c_{\ell-1})}{t_{\ell}}-c_{\ell-1}-\sum_{j=0}^{\ell-2}c_{j}\Bigr)\\
		& \overset{\eqref{eq:det_lower_bound_ind-hyp_2}}{\leq} & \frac{1}{v_{\ell-1}}\Bigl(\frac{\rho v(c_{\ell-1})}{t_{\ell}}-c_{\ell-1}-\frac{1}{\rho}\sum_{j=0}^{\ell-2}\frac{v_{j}}{t_{j}}\Bigr)\\
		& = & \frac{c_{\ell-1}}{v_{\ell-1}}\Bigl(\rho\frac{t_{\ell-1}}{t_{\ell}}-1\Bigr)-\frac{1}{\rho}\sum_{j=0}^{\ell-2}\frac{(\rho+\epsilon)^{j+1-\ell}}{t_{j}}.
	\end{eqnarray*}
	As $c_{\ell-1}\leq\frac{v_{\ell-1}}{t_{\ell-1}}$, we obtain
	\begin{eqnarray*}
		\frac{1}{v_{\ell-1}}\Bigl(p(c_{\ell-1})-\sum_{j=0}^{\ell-1}c_{j}\Bigr) & \leq & \frac{1}{t_{\ell-1}}\Bigl(\rho\frac{t_{\ell-1}}{t_{\ell}}-1\Bigr)-\frac{1}{\rho}\sum_{j=0}^{\ell-2}\frac{(\rho+\epsilon)^{j+1-\ell}}{t_{j}}\\
		& = & \frac{\rho}{t_{\ell}}-\frac{1}{t_{\ell-1}}-\frac{1}{\rho}\sum_{j=0}^{\ell-2}\frac{(\rho+\epsilon)^{j+1-\ell}}{t_{j}}\\
		& \overset{\eqref{eq:det_lower_bound_tl+1_smaller_0}}{<} & 0.
	\end{eqnarray*}
	As $v_{\ell-1}=(\rho+\epsilon)^{\ell-1}>0$, we find that $p(c_{\ell-1})-\sum_{j=0}^{j-1}c_{j}<0$, which is a contradiction to Lemma~\ref{lem:competitive_equivalence}~(iii) and the fact that $(c_{0},\dots,c_{N})$ is $\rho$-competitive.
	Thus, a $\rho$-competitive solution cannot exist.
\end{proof}

\begin{figure}
	\begin{center}
		\begin{tikzpicture}[scale=1.1]
		\begin{axis}[
		axis lines=middle,
		xtick={0},
		ytick={1,2.1,4.41,9.261,19.448,40.841},
		xticklabels={},
		yticklabels={,$\rho+\epsilon$,$(\rho+\epsilon)^2$,$(\rho+\epsilon)^3$,$(\rho+\epsilon)^4$,$(\rho+\epsilon)^5$},
		scaled x ticks=false,
		domain=0:320
		]
		
		\def\rhoOne{2.1}			
		\def\rhoTwo{4.41}			
		\def\rhoThree{9.261}		
		\def\rhoFour{19.448}		
		\def\rhoFive{40.841}		
		
		\node at (axis cs:315,2){$c$};
		\addplot[samples=2,domain=319:320,white] {\rhoFive+4};
		
		\addplot[samples=2,domain=0:1] {x};
		\addplot[samples=2,domain=1:2.1] {1};
		\addplot[samples=2,domain=2.1:4.4] {\rhoOne/4.4*x};
		\addplot[samples=2,domain=4.4:7.2] {\rhoOne};
		\addplot[samples=2,domain=7.2:15] {\rhoTwo/15*x};
		\addplot[samples=2,domain=15:21.3] {\rhoTwo};
		\addplot[samples=2,domain=21.3:44.8] {\rhoThree/44.8*x};
		\addplot[samples=2,domain=44.8:57] {\rhoThree};
		\addplot[samples=2,domain=57:119.8] {\rhoFour/119.8*x};
		\addplot[samples=2,domain=119.8:137] {\rhoFour};
		\addplot[samples=2,domain=137:287.7] {\rhoFive/287.7*x};
		
		\addplot[dotted] coordinates {(287.7, 0) (287.7, \rhoFive)};
		\addplot[dotted] coordinates {(137, 0) (137, \rhoFour)};
		\addplot[dotted] coordinates {(119.8, 0) (119.8, \rhoFour)};
		\addplot[dotted] coordinates {(57, 0) (57, \rhoThree)};
		\addplot[dotted] coordinates {(44.8, 0) (44.8, \rhoThree)};
		\addplot[dotted] coordinates {(21.3, 0) (21.3, \rhoTwo)};
		\end{axis}
		
		\draw [decorate,decoration={brace,amplitude=5pt}] (.95,-.03) -- node[below=5pt] {\footnotesize $d(c)=t_3$} (.45,-.03);
		\draw [decorate,decoration={brace,amplitude=5pt}] (2.57,-.03) -- node[below=5pt] {\footnotesize $d(c)=t_4$} (1.22,-.03);
		\draw [decorate,decoration={brace,amplitude=5pt}] (6.17,-.03) -- node[below=5pt] {\footnotesize $d(c)=t_5$} (2.93,-.03);
		\end{tikzpicture}
	\end{center}
	\caption{Lower bound construction for $\rho=2.1$.}\label{fig:detLB}
\end{figure}
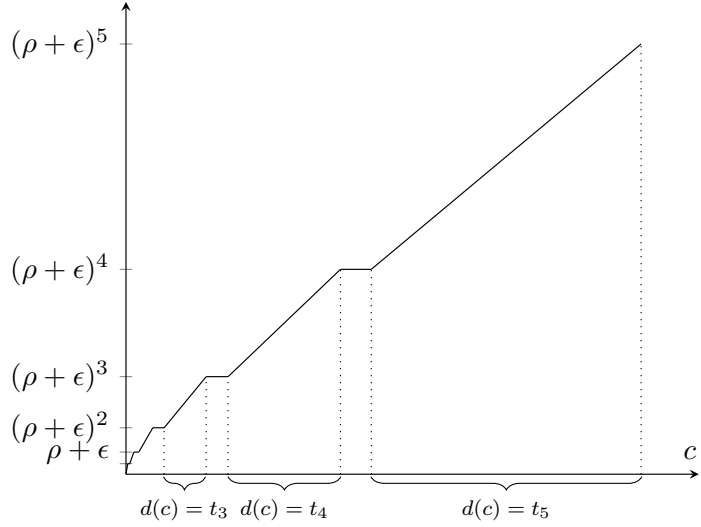

\section{Randomized Incremental Maximization}\label{sec:randIncMax}

We turn to analyzing randomized algorithms to solve the (discrete) \incmaxsep problem.
In contrast to deterministic algorithms, we do not compare the value obtained by the algorithm to an optimum solution, but rather the expected value obtained by the algorithm.
This enables us to find an algorithm with randomized competitive ratio smaller than the lower bound of $2.24$ on the competitive ratio of deterministic algorithms in Theorem~\ref{thm:det_lower_bound}.

\subsection{Randomized Algorithm}\label{sec:randAlg}

Scaling algorithms, i.e., algorithms where the size $c_i$ is chosen such that $c_i=\delta c_{i-1}$ with an appropriate scaling factor $\delta>1$, have been proven to perform well for the deterministic version of the problem.
The best known algorithm is, in fact, a scaling algorithm~\cite{BernsteinDisserGrossHimburg/20}.
In the analysis, it turns out that, on average, a scaling algorithm performs better than the actual competitive ratio, which is only tight for few sizes.
By randomizing the initial size~$c_0$, we manage to average out the worst-case sizes in the analysis.

We describe the randomized algorithm $\randalg$ for \incmaxsep.
For this, we consider the function
\begin{eqnarray*}
	g(x) & = & \frac{1-\sqrt{\bigl(\frac{x^{3}-1}{x-1}x^{z}-1\bigr)^{2}+4x^{5+2z}}}{2\log(x)x^{3+z}} -(1-\delta)\frac{1-x^{-3}}{x-1}+z -\frac{1-x^{-3}}{2(x-1)\log(x)} \\
	&  & -\Bigl(\frac{1-x^{-3}}{x-1}-\frac{1}{x^{3+z}}\Bigr)\Bigl(\log_{x}\Bigl(\sqrt{\bigl(\frac{x^{3}-1}{x-1}x^{z}-1\bigr)^{2}+4x^{5+2z}}-\frac{x^{3}-1}{x-1}x^{z}+1\Bigr)\\
	&  & -\log_{x}(2)-3\Bigr)-\frac{2x^{2+z}}{\Bigl(\sqrt{\bigl(\frac{x^{3}-1}{x-1}x^{z}-1\bigr)^{2}+4x^{5+2z}}-\frac{x^{3}-1}{x-1}x^{z}+1\Bigr)\log(x)} \\
	&  & +\frac{2}{\log(x)}-\Bigl(1+\frac{1}{x^{3+z}}\Bigr)\Bigl(\log_{x}(x^{3+z}+1)+\log_{x}(x-1)-\log_{x}(x^{4}-1)\Bigr)
\end{eqnarray*}
with $z=\log_{x}\Bigl( \frac{x^4-1}{x-1}-1 \Bigr)-3$.
Let $r\approx5.1646$ be the unique maximum of~$g$.
The algorithm \randalg starts by choosing $\epsilon\in(0,1)$ uniformly at random.
For all $i\in\N_{0}$, it calculates $\tilde{c}_{i} := r^{i+\epsilon}$ and $c_{i}:=\lfloor\tilde{c}_{i}\rfloor$ and returns the solution $(c_{0},c_{1},c_{2},\dots)$.\footnote{With this definition, the algorithm does not terminate on finite instances. To avoid this, it suffices to stop calculating the sizes~$c_{i}$ until they are larger than the number of elements in the instance.}
This approach is similar to a randomized algorithm to solve the \textsc{CowPath} problem in~\cite{KaoReifTate/93}, which also calculates such a sequence with a different choice of $r\in\R$ in order to explore a star graph.

We define
\begin{equation*}
\tilde{t}_{i}:=\sum_{j=0}^{i}\tilde{c}_{j}=r^{\epsilon}\frac{r^{i+1}-1}{r-1}
\quad\quad\quad \textrm{and} \quad\quad\quad
t_{i}:=\sum_{j=0}^{i}c_{j}.
\end{equation*}
For better readability, we let $\tilde{c}_{-1}=c_{-1}=\tilde{t}_{-1}=t_{-1}=0$.
Note that, for all $i\in\N_{0}$, we have
\begin{equation}
t_{i-1}\leq\tilde{t}_{i-1}=r^{\epsilon}\frac{r^{i}-1}{r-1}\overset{r>2}{\leq}r^{i+\epsilon}-r^{\epsilon}\leq r^{i+\epsilon}-1=\tilde{c}_{i}-1\leq c_{i}.\label{eq:t_i-1_leq_c_i}
\end{equation}
For every size $c\in\N_{0}$, we denote the solution created by the algorithm $\randalg$ by $X_{\alg}(c)$.
Note that the optimum solution of size $c\in\N_{0}$ is given by the set~$R_c$ because $v_1\leq v_2\leq \dots$ and $d_1\geq d_2\geq \dots$.
Thus, the value of the optimum solution of size~$c$ is $v_c$.

In order to find an upper bound on the randomized competitive ratio of \randalg, we need the following lemma. It gives an estimate on the expected value of the solution for a fixed size~$\capacity\in\N$ of \randalg depending on the interval in which~$\capacity$ falls.

\begin{lemma}\label{lem:estimates_ratio_alg_opt}
	Let $\capacity\in\N$.
	\begin{enumerate}
		\item For $i\in\N\cup\{0\}$ with $\mathbb{P}[\capacity\in(c_{i-1},c_{i}]]>0$, we have
		\[
		\E\bigl[f(X_{\alg}(\capacity))\mid\capacity\in(c_{i-1},c_{i}]\bigr]\geq\E\Bigl[\max\Bigl\{\frac{c_{i-1}}{\capacity},\frac{\capacity-t_{i-1}}{\max\{\capacity,c_{i}\}}\Bigr\}\Bigm|\capacity\in(c_{i-1},c_{i}]\Bigr]\cdot v_\capacity.
		\]
		
		\item For $i\in\N$ with $\mathbb{P}[\capacity\in(\tilde{c}_{i},\tilde{t}_{i}-1]]>0$, we have
		\[
		\E\bigl[f(X_{\alg}(\capacity))\mid\capacity\in(\tilde{c}_{i},\tilde{t}_{i}-1]\bigr]\geq\E\Bigl[1-\frac{\tilde{t}_{i-1}}{\capacity}\Bigm|\capacity\in(\tilde{c}_{i},\tilde{t}_{i}-1]\Bigr]\cdot v_\capacity.
		\]
		
		\item For $i\in\N$ with $\mathbb{P}[\capacity\in(\tilde{t}_{i-1}-1,\tilde{c}_{i}]]>0$, we have
		\[
		\E\bigl[f(X_{\alg}(\capacity))\mid\capacity\in(\tilde{t}_{i-1}-1,\tilde{c}_{i}]\bigr]\geq\E\Bigl[\max\Bigl\{\frac{\tilde{c}_{i-1}-1}{\capacity},\frac{\capacity-\tilde{t}_{i-1}}{\tilde{c}_{i}}\Bigr\}\Bigm|\capacity\in(\tilde{t}_{i-1}-1,\tilde{c}_{i}]\Bigr]\cdot v_\capacity.
		\]
	\end{enumerate}
\end{lemma}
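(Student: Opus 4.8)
The plan is to fix the outcome of the random draw of $\epsilon$ — so that the sequences $(c_i)$, $(\tilde c_i)$, $(t_i)$, $(\tilde t_i)$ are all determined — and to bound $f(X_{\alg}(\capacity))$ directly on the relevant conditioning event, before taking the conditional expectation. The starting point is the explicit value formula: if $m\in\N_0$ denotes the unique index with $t_{m-1}<\capacity\le t_m$, then, since the values $v_{c_0}\le v_{c_1}\le\cdots$ are non-decreasing (Lemma~\ref{obs:incmaxsep_assumptions}), the solution $X_{\alg}(\capacity)$ consists of $R_{c_0}\cup\cdots\cup R_{c_{m-1}}$ together with $\capacity-t_{m-1}$ elements of $R_{c_m}$, so $f(X_{\alg}(\capacity))=\max\bigl\{v_{c_{m-1}},(\capacity-t_{m-1})\,d_{c_m}\bigr\}$. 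The only other ingredients are the monotonicity facts from Lemma~\ref{obs:incmaxsep_assumptions}: if $c_j\le\capacity$ then $v_{c_j}=c_jd_{c_j}\ge c_jd_\capacity=\tfrac{c_j}{\capacity}v_\capacity$ and $d_{c_j}\ge d_\capacity=\tfrac{1}{\capacity}v_\capacity$, while if $c_j\ge\capacity$ then $d_{c_j}=\tfrac{v_{c_j}}{c_j}\ge\tfrac{1}{c_j}v_\capacity$. Throughout, the workhorse inequality relating partial sums to later sizes is \eqref{eq:t_i-1_leq_c_i}.

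For parts (ii) and (iii) I expect the stated estimates to hold \emph{pointwise} on the conditioning event, with the conditional expectation then following trivially. For (iii), the event $\capacity\in(\tilde t_{i-1}-1,\tilde c_i]$ together with integrality of $\capacity$ forces $t_{i-1}\le\capacity\le c_i\le t_i$ (using $c_i=\lfloor\tilde c_i\rfloor$ and $c_{i-1}\ge\tilde c_{i-1}-1$), so the active index is $m=i$; then $v_{c_{i-1}}\ge\tfrac{c_{i-1}}{\capacity}v_\capacity\ge\tfrac{\tilde c_{i-1}-1}{\capacity}v_\capacity$ and $(\capacity-t_{i-1})d_{c_i}\ge(\capacity-\tilde t_{i-1})\tfrac{1}{\tilde c_i}v_\capacity$ (using $t_{i-1}\le\tilde t_{i-1}$ and $c_i\le\tilde c_i$) yield the claim. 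For (ii), the event $\capacity\in(\tilde c_i,\tilde t_i-1]$ gives $\capacity>c_i$ and, crucially, $\capacity\le\tilde t_i-1<c_i+\tilde t_{i-1}$ (since $c_i>\tilde c_i-1$); together with $t_{i-1}\le c_i$ from \eqref{eq:t_i-1_leq_c_i} this forces $m\ge i$, and then either $m=i$, whence $f(X_{\alg}(\capacity))\ge(\capacity-t_{i-1})d_{c_i}\ge(\capacity-\tilde t_{i-1})\tfrac{1}{\capacity}v_\capacity$, or $m>i$, whence $f(X_{\alg}(\capacity))\ge v_{c_i}\ge\tfrac{c_i}{\capacity}v_\capacity\ge\tfrac{\capacity-\tilde t_{i-1}}{\capacity}v_\capacity$.

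Part (i) is where I expect the actual difficulty. On the event $\capacity\in(c_{i-1},c_i]$, \eqref{eq:t_i-1_leq_c_i} gives $t_{i-1}\le c_i$ and the event gives $\capacity\le c_i$, so two sub-cases arise. If $\capacity\ge t_{i-1}$, then $t_{i-1}\le\capacity\le c_i\le t_i$, the active index is $m=i$, and the bound $f(X_{\alg}(\capacity))\ge\max\bigl\{\tfrac{c_{i-1}}{\capacity},\tfrac{\capacity-t_{i-1}}{\max\{\capacity,c_i\}}\bigr\}v_\capacity$ holds pointwise, exactly as above. If $\capacity<t_{i-1}$, however, the active set is $R_{c_{i-1}}$ rather than $R_{c_i}$, so $f(X_{\alg}(\capacity))=\max\bigl\{v_{c_{i-2}},(\capacity-t_{i-2})d_{c_{i-1}}\bigr\}$, and this can be \emph{strictly} smaller than $\tfrac{c_{i-1}}{\capacity}v_\capacity$ (which is the value of the right-hand side there, since the second term in the maximum is then negative); hence the inequality fails pointwise and can only be recovered in conditional expectation. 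The plan is to split the conditioning event along the threshold $\capacity=t_{i-1}$, write both conditional expectations explicitly as functions of $\epsilon$, and show that the deficit incurred on the sub-event $\{\capacity<t_{i-1}\}$ — which is comparatively small, since $t_{i-1}-c_{i-1}$ is a bounded fraction of $c_i-c_{i-1}$ by \eqref{eq:t_i-1_leq_c_i} — is dominated by the slack $\tfrac{\capacity-t_{i-1}}{\capacity}v_\capacity\ge\tfrac{\capacity-t_{i-1}}{\max\{\capacity,c_i\}}v_\capacity$ available on $\{\capacity\ge t_{i-1}\}$. Carrying out this averaging cleanly, presumably by exploiting the geometric spacing $\tilde c_i=r^{i+\epsilon}$ of the chosen sizes, is the main obstacle.
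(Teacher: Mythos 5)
Your handling of parts (ii) and (iii) is correct and matches the paper's: in each case the conditioning event together with integrality of $\capacity$ and $t_{i-1}\le c_i$ from \eqref{eq:t_i-1_leq_c_i} pins the active index down tightly enough that the claimed inequality holds pointwise for every realization of $\epsilon$, and the conditional expectation is an immediate consequence. (In (ii) the paper compresses your two sub-cases $m=i$ and $m>i$ into the single bound $f(X_{\alg}(\capacity))\ge\min\{(\capacity-t_{i-1})d_{c_i},v_{c_i}\}$, but it is the same argument.)

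For part (i) you have put your finger on the right difficulty, but note that the paper does not resolve it by averaging. The paper's own proof establishes the pointwise inequality on the interval $(t_{i-1},c_i]$ (its \eqref{eq:lem_estimate_alg_opt_1}) and on $(c_i,t_i]$ (its \eqref{eq:lem_estimate_alg_opt_2}), and then simply asserts the conclusion for the conditioning event $(c_{i-1},c_i]$. As you observe, for $i\ge 2$ this event also contains $(c_{i-1},t_{i-1}]$, where the active index drops to $i-1$, the value is $\max\{v_{c_{i-2}},(\capacity-t_{i-2})d_{c_{i-1}}\}$, and the same monotonicity arguments give only $f(X_{\alg}(\capacity))\ge \max\{c_{i-2},\capacity-t_{i-2}\}\cdot v_\capacity/\capacity$, which on $\capacity<t_{i-1}$ falls short of the claimed $c_{i-1}\cdot v_\capacity/\capacity$; one can build instances, with the density locally flat near $\capacity$, where $f$ itself falls short. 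So you are not reconstructing the paper's argument here, you are identifying a step that the paper skips. Your proposed fix, splitting at $\capacity=t_{i-1}$ and absorbing the deficit on the left into the slack on the right, is the natural way to repair this, but it is exactly where the substance would have to be, you control neither quantity, and it is left entirely as a plan. As written, part (i) is not proved.
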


\begin{proof}
	We prove $(i)$.
	For $\capacity\in(t_{i-1},c_{i}]$ with $i\in\N_{0}$, the algorithm $\randalg$ has finished adding the optimal solution of size $c_{i-1}$ and is adding the optimal solution of size~$c_{i}$.
	Thus, 
	\begin{eqnarray}
	f(X_{\alg}(\capacity)) & = & \max\{v_{c_{i-1}},(\capacity-t_{i-1})d_{c_{i}}\}\nonumber \\
	& = & \max\Bigl\{ c_{i-1}d_{c_{i-1}},(\capacity-t_{i-1})\frac{v_{c_{i}}}{c_{i}}\Bigr\}\nonumber \\
	& \overset{d\textrm{ dec.},v\textrm{ inc.}}{\geq} & \max\Bigl\{ c_{i-1}d_{\capacity},(\capacity-t_{i-1})\frac{v_{\capacity}}{c_{i}}\Bigr\}\nonumber \\
	& = & \max\Bigl\{\frac{c_{i-1}}{\capacity},\frac{\capacity-t_{i-1}}{c_{i}}\Bigr\}\cdot v_{\capacity}.\label{eq:lem_estimate_alg_opt_1}
	\end{eqnarray}
	Now, assume $\capacity\in(c_{i},t_{i}]$ with $i\in\N$.
	By~\eqref{eq:t_i-1_leq_c_i}, we have $t_{i-1}\leq c_{i}$, i.e., for size~$\capacity$, the algorithm $\randalg$ has finished adding the optimal solution of size~$c_{i-1}$ and is adding the optimal solution of size~$c_{i}$.
	Thus, we have
	\begin{eqnarray}
	f(X_{\alg}(\capacity)) & \geq & \max\{v_{c_{i-1}},(\capacity-t_{i-1})d_{c_{i}}\}\nonumber \\
	& = & \max\{c_{i-1}d_{c_{i-1}},(\capacity-t_{i-1})d_{c_{i}}\}\nonumber \\
	& \overset{d\textrm{ dec.}}{\geq} & \max\{c_{i-1}d_{\capacity},(\capacity-t_{i-1})d_{\capacity}\}\nonumber \\
	& = & \max\Bigl\{\frac{c_{i-1}}{\capacity},\frac{\capacity-t_{i-1}}{\capacity}\Bigr\}\cdot v_{\capacity}.\label{eq:lem_estimate_alg_opt_2}
	\end{eqnarray}
	Combining~\eqref{eq:lem_estimate_alg_opt_1} and~\eqref{eq:lem_estimate_alg_opt_2}, for $\capacity\in(c_{i-1},c_{i}]$, we obtain
	\[
	f(X_{\alg}(\capacity))\geq\max\Bigl\{\frac{c_{i-1}}{\capacity},\frac{\capacity-t_{i-1}}{\max\{\capacity,c_{i}\}}\Bigr\}\cdot v_{\capacity}.
	\]
	
	We prove $(ii)$.
	For $\capacity\in(\tilde{c}_{i},\tilde{t}_{i}-1]$ with $i\in\N$, we have $\capacity>\tilde{c}_{i}\geq c_{i}\overset{\eqref{eq:t_i-1_leq_c_i}}{\geq}t_{i-1}$, i.e., the algorithm $\randalg$ has finished adding the optimal solution of size~$c_{i-1}$ and is either adding the optimal solution of size~$c_{i}$ or has finished adding the optimal solution of size~$c_{i}$.
	Thus, we have
	\begin{eqnarray*}
		f(X_{\alg}(\capacity)) & \geq & \min\{(\capacity-t_{i-1})d_{c_{i}},v_{c_{i}}\}\\
		& \geq & \min\{\capacity-\tilde{t}_{i-1},c_{i}\}d_{c_{i}}\\
		& \overset{\capacity\leq\tilde{t}_{i}-1}{=} & (\capacity-\tilde{t}_{i-1})d_{c_{i}}\\
		& = & \Bigl(1-\frac{\tilde{t}_{i-1}}{\capacity}\Bigr)\frac{d_{c_{i}}}{d_{\capacity}}\cdot v_{\capacity}\\
		& \overset{d\textrm{ dec.}}{\geq} & \Bigl(1-\frac{\tilde{t}_{i-1}}{\capacity}\Bigr)\cdot v_{\capacity}.
	\end{eqnarray*}
	
	Lastly, we prove $(iii)$.
	For $\capacity\in(\tilde{t}_{i-1}-1,\tilde{c}_{i}]$ with $i\in\N$, we have $\capacity>\tilde{t}_{i-1}-1\geq t_{i-1}-1$.
	Since $\capacity,t_{i-1}\in\N$, this implies $\capacity\geq t_{i-1}$, i.e., the algorithm $\randalg$ has finished adding the optimal solution of size~$c_{i-1}$ and is adding size~$c_{i}$.
	Thus, we have
	\begin{eqnarray*}
		f(X_{\alg}(\capacity)) & = & \max\{v_{c_{i-1}},(\capacity-t_{i-1})d_{c_{i}}\}\\
		& = & \max\Bigl\{ c_{i-1}d_{c_{i-1}},(\capacity-t_{i-1})\frac{v_{c_{i}}}{c_{i}}\Bigr\}\\
		& \overset{d\textrm{ dec.},v\textrm{ inc.}}{\geq} & \max\Bigl\{ c_{i-1}d_{\capacity},(\capacity-t_{i-1})\frac{v_{\capacity}}{c_{i}}\Bigr\}\\
		& = & \max\Bigl\{\frac{c_{i-1}}{\capacity},\frac{\capacity-t_{i-1}}{c_{i}}\Bigr\}\cdot v_{\capacity}\\
		& \geq & \max\Bigl\{\frac{\tilde{c}_{i-1}-1}{\capacity},\frac{\capacity-\tilde{t}_{i-1}}{c_{i}}\Bigr\}\cdot v_{\capacity},
	\end{eqnarray*}
	where we used the fact that $\capacity\leq\tilde{c}_{i}$ and $\capacity\in\N$ and thus $\capacity\leq c_{i}=\lfloor\tilde{c}_{i}\rfloor$.
\end{proof}

In the analysis of the algorithm, additionally, the following estimate is needed.

\begin{lemma}\label{lem:estimate_sum-integrals_g(r)}
	Let $k\in\N$ and $\delta\in(0,1]$ such that $r^{k+\delta}\geq\sum_{i=0}^{3}r^{i}$.
	Then
	\begin{alignat*}{3}
	g(r) \leq I(k,\delta) := &\int_{\min\bigl\{1,\mu(k-1)\bigr\}}^{1}1-\frac{\tilde{t}_{k-2}}{r^{k+\delta}}\,d\epsilon
	&& +\int_{\min\{1,\nu(k-1)\}}^{\min\{1,\mu(k-1)\}}\frac{\tilde{c}_{k-1}-1}{r^{k+\delta}}\,d\epsilon\\
	& +\int_{\delta}^{\min\{1,\nu(k-1)\}}\frac{r^{k+\delta}-\tilde{t}_{k-1}}{\tilde{c}_{k}}\,d\epsilon
	&& +\int_{\max\bigl\{0,\mu(k)\bigr\}}^{\delta}1-\frac{\tilde{t}_{k-1}}{r^{k+\delta}}\,d\epsilon\\
	& +\int_{\max\{0,\nu(k)\}}^{\max\{0,\mu(k)\}}\frac{\tilde{c}_{k}-1}{r^{k+\delta}}\,d\epsilon
	&& +\int_{0}^{\max\{0,\nu(k)\}}\frac{r^{k+\delta}-\tilde{t}_{k}}{\tilde{c}_{k+1}}\,d\epsilon
	\end{alignat*}
	where
	\begin{align*}
	\mu(i) = & \log_{r}(r^{k+\delta}+1)+\log_{r}(r-1)-\log_{r}(r^{i+1}-1),\\
	\nu(i) = & \log_{r}\Bigl(\sqrt{\Bigl(r^{k+\delta}\frac{1-r^{-(i+1)}}{r-1}\!-\!1\Bigr)^{2}\!+4r^{2k+2\delta-1}}-r^{k+\delta}\frac{1-r^{-(i+1)}}{r-1}+1\Bigr)\!-\!\log_{r}(2)-i.
	\end{align*}
\end{lemma}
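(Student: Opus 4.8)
The plan is to prove the inequality by a direct evaluation of the right-hand side. Each of the six integrands in $I(k,\delta)$, viewed as a function of the integration variable~$\epsilon$, is an affine combination of $1$, $r^{\epsilon}$, and $r^{-\epsilon}$ with coefficients depending only on $k$, $\delta$, and $r$: for example $1-\tilde t_{k-2}/r^{k+\delta}=1-\tfrac{r^{k-1}-1}{(r-1)r^{k+\delta}}\,r^{\epsilon}$, $(\tilde c_{k-1}-1)/r^{k+\delta}=r^{\epsilon-1-\delta}-r^{-(k+\delta)}$, and $(r^{k+\delta}-\tilde t_{k-1})/\tilde c_{k}=r^{\delta-\epsilon}-\tfrac{1-r^{-k}}{r-1}$, with the remaining three of the same shape. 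Hence each integrand has an antiderivative of the form $a\epsilon+b\,r^{\epsilon}+c\,r^{-\epsilon}$, and all six definite integrals can be written down in closed form; the whole task then reduces to checking that this closed form is at least $g(r)$.

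Before evaluating, I would pin down the integration limits. The endpoints involve $\min\{1,\mu(k-1)\}$, $\min\{1,\nu(k-1)\}$, $\delta$, $\max\{0,\mu(k)\}$, and $\max\{0,\nu(k)\}$, and what is needed is the chain $0\le\max\{0,\nu(k)\}\le\max\{0,\mu(k)\}\le\delta\le\min\{1,\nu(k-1)\}\le\min\{1,\mu(k-1)\}\le1$, so that the six subintervals tile $[0,1]$ (with some collapsing to a point when a clamp is active). The inequalities $\nu(i)\le\mu(i)$ are immediate from the definitions of $\mu$ and~$\nu$, which are exactly the breakpoints in~$\epsilon$ at which the active regime of the estimates of Lemma~\ref{lem:estimates_ratio_alg_opt} changes (which branch of a maximum dominates, and to which of the intervals there the value $r^{k+\delta}$ belongs). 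The key remaining inequalities $\mu(k)\le\delta$ and $\delta\le\nu(k-1)$ reduce, after exponentiating base~$r$, to elementary scalar estimates such as $r^{\delta}(r^{k}-1)\ge r-1$ and $\tfrac1r+\tfrac{2}{r-1}<4$, both of which follow from the hypothesis $r^{k+\delta}\ge\sum_{i=0}^{3}r^{i}$ together with $r>2$; the same hypothesis is what ensures the remaining clampings behave as claimed and, in particular, that one never needs indices below $k+1$.

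With the limits ordered, the integrals can be evaluated: substituting $\epsilon=\mu(i)$ or $\epsilon=\nu(i)$ into the $r^{\pm\epsilon}$ terms makes them algebraic, since $\mu$ and~$\nu$ are $\log_{r}$ of explicit expressions, while the $a\epsilon$ terms reproduce exactly the $\log_{x}(\cdots)$ contributions that appear in the definition of~$g$. Collecting all terms and using $z=\log_{r}(r+r^{2}+r^{3})-3$, equivalently $r^{z+3}=\tfrac{r^{4}-1}{r-1}-1$, to rewrite the coefficients, the closed form of $I(k,\delta)$ matches the expression defining $g$ at $x=r$ up to nonnegative slack: the floor corrections $c_i\ge\tilde c_i-1$ already built into $I$, and the clamps being active for $(k,\delta)$ away from the boundary case — both only increase $I(k,\delta)$. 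Equivalently, one checks that the common closed form is monotone in the single combined parameter $k+\delta$ on the feasible range, so that its infimum, attained in the limit $k+\delta\downarrow\log_{r}\tfrac{r^{4}-1}{r-1}$ where the clamps resolve to the clean formula, is precisely $g(r)$. The main obstacle is the bookkeeping of this symbolic computation: tracking which of the six integrals survives in each clamping regime, matching the resulting terms against the corresponding groups in the formula for~$g$, and reducing the final comparison to a short list of inequalities in~$r$. No single step is deep, but the computation is long and error-prone, so it is worth organizing it around the few scalar inequalities above and the monotonicity claim in $k+\delta$.
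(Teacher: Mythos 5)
Your setup agrees with how the paper actually proves this lemma: each integrand is affine in $r^{\pm\epsilon}$, so every integral has an elementary antiderivative; the limits do satisfy $0\le\max\{0,\nu(k)\}\le\max\{0,\mu(k)\}\le\delta\le\min\{1,\nu(k-1)\}\le\min\{1,\mu(k-1)\}\le1$ under the hypothesis $r^{k+\delta}\geq\sum_{i=0}^{3}r^{i}$ (which forces $k\ge 3$); and the value $g(r)$ is indeed the closed form of the ``clean'' regime evaluated at $k=3$, $\delta=z$ with $r^{3+z}=\frac{r^4-1}{r-1}-1$. Up to this point you are reconstructing the paper's computation.

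The gap is in the step that carries the whole lemma: showing the evaluated expression is at least $g(r)$ for \emph{every} feasible $(k,\delta)$. Your reduction rests on two claims that do not hold as stated. First, $I(k,\delta)$ is not a function of the single parameter $k+\delta$: terms such as $\tilde{t}_{k-2}/r^{k+\delta}=\frac{r^{\epsilon}}{r-1}\bigl(r^{-1-\delta}-r^{-k-\delta}\bigr)$ and the limits $\mu(k-1),\nu(k-1)$ involve $k$ and $\delta$ separately, and moreover the closed form itself changes with the clamping regime, so there is no single ``common closed form'' whose monotonicity in $k+\delta$ could be checked. Second, the assertion that ``the clamps being active only increases $I$'' compares quantities that are never evaluated at the same point: which clamps are active is itself determined by $(k,\delta)$, so one cannot pass from an arbitrary feasible point to the clean formula by adding nonnegative slack. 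The paper instead splits into five regimes according to the signs of $\nu(k)$, $\nu(k-1)-1$, $\mu(k)$, $\mu(k-1)-1$, obtains a different closed form $f_1,\dots,f_5$ in each, and minimizes each over its own regime; the minimizers are different corners ($k=3$ with a specific larger $\delta$ in the first two regimes, $k=3,\delta=z$ in the tight third regime, and $k\ge4$ in the last two, where $\mu(k)\le0$ together with feasibility forces $k\ge4$), and four of the five minima are checked numerically to exceed $0.566>g(r)$ while only the third equals $g(r)$. Some such case analysis (or a genuinely justified global argument) is indispensable and is missing from your proposal. A smaller inaccuracy: the tight point satisfies $r^{3+z}=\sum_{i=0}^{3}r^{i}-1$, i.e., it lies just \emph{outside} the hypothesis $r^{k+\delta}\ge\sum_{i=0}^{3}r^{i}$, so the infimum of $I$ over the feasible set is strictly above $g(r)$ and is not attained ``in the limit $k+\delta\downarrow\log_r\frac{r^4-1}{r-1}$''; $g(r)$ is a lower bound obtained by relaxing slightly past that boundary.
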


\begin{proof}
	Depending on $\delta$ and $k$, only 3 or 4 of these integrals are non-zero.
	We distinguish between the different possibilities.
	
	Case 1 ($\nu(k)>0$):
	Then
	\begin{eqnarray*}
		& & \nu(k-1)+\log_{r}(2)+(k-1)\\
		& = & \log_{r}\Bigl(\sqrt{\Bigl(r^{k+\delta}\frac{1-r^{-k}}{r-1}-1\Bigr)^{2}+4r^{2k+2\delta-1}}-r^{k+\delta}\frac{1-r^{-k}}{r-1}+1\Bigr)\\
		& \geq & \log_{r}\Bigl(\sqrt{\Bigl(r^{k+\delta}\frac{1-r^{-(k+1)}}{r-1}-1\Bigr)^{2}+4r^{2k+2\delta-1}}-r^{k+\delta}\frac{1-r^{-(k+1)}}{r-1}+1\Bigr)\\
		& = & \nu(k)+\log_{r}(2)+k,
	\end{eqnarray*}
	i.e., we have $\nu(k-1)\geq\nu(k)+1>1$.
	For the inequality, we used the fact that, for $a>b>0$ and $x>0$, we have $\sqrt{b^{2}+x}-b>\sqrt{a^{2}+x}-a$.
	We obtain
	\begin{eqnarray*}
		&  & I(k,\delta)\\
		& = & \int_{\delta}^{1}\frac{\bar{c}-\frac{r^{k}-1}{r-1}r^{\epsilon}}{r^{k+\epsilon}}\mathrm{d}\epsilon\\
		&  & +\int_{\log_{r}(r^{k+\delta}+1)+\log_{r}(r-1)-\log_{r}(r^{k+1}-1)}^{\delta}1-\frac{\frac{r^{k}-1}{r-1}r^{\epsilon}}{\bar{c}}\mathrm{d}\epsilon\\
		&  & +\int_{\log_{r}\Bigl(\sqrt{\bigl(\frac{r^{k}-r^{-1}}{r-1}r^{\delta}-1\bigr)^{2}+4r^{2k+2\delta-1}}-\frac{r^{k}-r^{-1}}{r-1}r^{\delta}+1\Bigr)-\log_{r}(2)-k}^{\log_{r}(r^{k+\delta}+1)+\log_{r}(r-1)-\log_{r}(r^{k+1}-1)}\frac{r^{k+\epsilon}-1}{\bar{c}}\mathrm{d}\epsilon\\
		&  & +\int_{0}^{\log_{r}\Bigl(\sqrt{\bigl(\frac{r^{k}-r^{-1}}{r-1}r^{\delta}-1\bigr)^{2}+4r^{2k+2\delta-1}}-\frac{r^{k}-r^{-1}}{r-1}r^{\delta}+1\Bigr)-\log_{r}(2)-k}\frac{\bar{c}-\frac{r^{k+1}-1}{r-1}r^{\epsilon}}{r^{k+1+\epsilon}}\mathrm{d}\epsilon\\
		& = & \Bigl[-\frac{\bar{c}}{\log(r)r^{k+\epsilon}}-\frac{1-r^{-k}}{r-1}\epsilon\Bigr]_{\delta}^{1} + \Bigl[\epsilon-\frac{(r^{k}-1)r^{\epsilon}}{\log(r)(r-1)\bar{c}}\Bigr]_{\log_{r}(r^{k+\delta}+1)+\log_{r}(r-1)-\log_{r}(r^{k+1}-1)}^{\delta}\\
		&  & +\Bigl[\frac{r^{k+\epsilon}}{\log(r)\bar{c}}-\frac{\epsilon}{\bar{c}}\Bigr]_{\log_{r}\Bigl(\sqrt{\bigl(\frac{r^{k}-r^{-1}}{r-1}r^{\delta}-1\bigr)^{2}+4r^{2k+2\delta-1}}-\frac{r^{k}-r^{-1}}{r-1}r^{\delta}+1\Bigr)-\log_{r}(2)-k}^{\log_{r}(r^{k+\delta}+1)+\log_{r}(r-1)-\log_{r}(r^{k+1}-1)}\\
		&  & +\Bigl[-\frac{\bar{c}}{\log(r)r^{k+1+\epsilon}}-\frac{1-r^{-(k+1)}}{r-1}\epsilon\Bigr]_{0}^{\log_{r}\Bigl(\sqrt{\bigl(\frac{r^{k}-r^{-1}}{r-1}r^{\delta}-1\bigr)^{2}+4r^{2k+2\delta-1}}-\frac{r^{k}-r^{-1}}{r-1}r^{\delta}+1\Bigr)-\log_{r}(2)-k}\\
		& = & \delta+\frac{2+r^{-(k+\delta)}}{\log(r)}-\frac{2r^{k-1+\delta}}{\log(r)\Bigl(\sqrt{\bigl(\frac{r^{k}-r^{-1}}{r-1}r^{\delta}-1\bigr)^{2}+4r^{2k+2\delta-1}}-\frac{r^{k}-r^{-1}}{r-1}r^{\delta}+1\Bigr)}\\
		&  & -\frac{\sqrt{\bigl(\frac{r^{k}-r^{-1}}{r-1}r^{\delta}-1\bigr)^{2}+4r^{2k+2\delta-1}}-\frac{r^{k}-r^{-1}}{r-1}r^{\delta}+1}{2\log(r)r^{k+\delta}}-\Bigl(\frac{1-r^{-(k+1)}}{r-1}-\frac{1}{r^{k+\delta}}\Bigr)\\
		&  & \Bigl(\log_{r}\Bigl(\sqrt{\bigl(\frac{r^{k}-r^{-1}}{r-1}r^{\delta}-1\bigr)^{2}+4r^{2k+2\delta-1}}-\frac{r^{k}-r^{-1}}{r-1}r^{\delta}+1\Bigr)-\log_{r}(2)-k\Bigr)\\
		& & -\frac{1-r^{-k}}{r-1}\bigl(1-\delta+\frac{1}{\log(r)}\bigr)-\Bigl(1+\frac{1}{r^{k+\delta}}\Bigr)\bigl(\log_{r}(r^{k+\delta}+1)+\log_{r}(r-1)-\log_{r}(r^{k+1}-1)\bigr)\\
		& =: & f_{1}(k,\delta).
	\end{eqnarray*}
	By analyzing this expression, one can see that it is minimized for $k$ and $\delta$ small.
	
	We have $\nu(k)>0$, which only holds for $\delta>\log_{r}\bigl(\frac{r-r^{-k}}{2(r-1)}+\sqrt{\bigl(\frac{r-r^{-k}}{2(r-1)}\bigr)^{2}+r+r^{1-k}}\bigr)$.
	Thus, $f_{1}(k,\delta)$ is minimized for $k=3$ and $\smash{\delta=\log_{r}\bigl(\frac{r-r^{-3}}{2(r-1)}+\sqrt{\bigl(\frac{r-r^{-3}}{2(r-1)}\bigr)^{2}+r+r^{-2}}\bigr)}$.
	Calculating the value, we obtain
	\[
	I(k,\delta)\geq f_{1}(k,\delta)\geq f_{1}\Bigl(3,\log_{r}\Bigl(\frac{r-r^{-3}}{2(r-1)}+\sqrt{\bigl(\frac{r-r^{-3}}{2(r-1)}\bigr)^{2}+r+r^{-2}}\Bigr)\Bigr)>0.566>g(r).
	\]
	
	Case 2 ($\nu(k)\leq0$ and $\nu(k-1)>1$):
	Then
	\begin{eqnarray*}
		&  & I(k,\delta)\\
		& = & \int_{\delta}^{1}\frac{\bar{c}-\frac{r^{k}-1}{r-1}r^{\epsilon}}{r^{k+\epsilon}}\mathrm{d}\epsilon\\
		&  & +\int_{\log_{r}(r^{k+\delta}+1)+\log_{r}(r-1)-\log_{r}(r^{k+1}-1)}^{\delta}1-\frac{\frac{r^{k}-1}{r-1}r^{\epsilon}}{\bar{c}}\mathrm{d}\epsilon\\
		&  & +\int_{0}^{\log_{r}(r^{k+\delta}+1)+\log_{r}(r-1)-\log_{r}(r^{k+1}-1)}\frac{r^{k+\epsilon}-1}{\bar{c}}\mathrm{d}\epsilon\\
		& = & \Bigl[-\frac{\bar{c}}{\log(r)r^{k+\epsilon}}-\frac{1-r^{-k}}{r-1}\epsilon\Bigr]_{\delta}^{1}\\
		&  & +\Bigl[\epsilon-\frac{(r^{k}-1)r^{\epsilon}}{\log(r)(r-1)\bar{c}}\Bigr]_{\log_{r}(r^{k+\delta}+1)+\log_{r}(r-1)-\log_{r}(r^{k+1}-1)}^{\delta}\\
		&  & +\Bigl[\frac{r^{k+\epsilon}}{\log(r)\bar{c}}-\frac{\epsilon}{\bar{c}}\Bigr]_{0}^{\log_{r}(r^{k+\delta}+1)+\log_{r}(r-1)-\log_{r}(r^{k+1}-1)}\\
		& = & -\frac{r^{\delta}}{\log(r)r}-\frac{1-r^{-k}}{r-1}(1-\delta+\frac{1}{\log(r)})+\frac{2+r^{-(k+\delta)}}{\log(r)}-\frac{1}{\log(r)r^{\delta}}\\
		&  & +\delta-\bigl(1+\frac{1}{r^{k+\delta}}\bigr)\bigl(\log_{r}(r^{k+\delta}+1)+\log_{r}(r-1)-\log_{r}(r^{k+1}-1)\bigr)\\
		& =: & f_{2}(k,\delta)
	\end{eqnarray*}
	By analyzing this expression, one can see that it is minimized for $k$ small and $\delta$ large.
	
	We have $\nu(k)\leq0$ and $\nu(k-1)>1$, which only holds for $\delta\in(0,1]$ satisfying
	\[
	\delta > \log_{r}\Bigl(\frac{r^{2}-r^{2-k}}{2(r-1)}+\sqrt{\bigl(\frac{r^{2}-r^{2-k}}{2(r-1)}\bigr)^{2}+r^{3}+r^{3-k}}\Bigr)
	\]
	and
	\[
	\delta \leq \log_{r}\Bigl(\frac{r-r^{-k}}{2(r-1)}+\sqrt{\bigl(\frac{r-r^{-k}}{2(r-1)}\bigr)^{2}+r+r^{1-k}}\Bigr).
	\]
	Thus, $f_{2}(k,\delta)$ is minimized for $k=3$ and $\delta=\log_{r}\Bigl(\frac{r-r^{-3}}{2(r-1)}+\sqrt{\bigl(\frac{r-r^{-3}}{2(r-1)}\bigr)^{2}+r+r^{-2}}\Bigr)$.
	Calculating the value, we obtain
	\[
	I(k,\delta)\geq f_{2}(k,\delta)\geq f_{2}\Bigl(3,\log_{r}\Bigl(\frac{r-r^{-3}}{2(r-1)}+\sqrt{\bigl(\frac{r-r^{-3}}{2(r-1)}\bigr)^{2}+r+r^{-2}}\Bigr)\Bigr)>0.566>g(r).
	\]
	
	Case 3 ($\nu(k-1)\leq1$ and $\mu(k)>0$):
	Then
	\begin{eqnarray}
	\mu(k-1) & = & \mu(k)+\log_{r}(r^{k+1}-1)-\log_{r}(r^{k}-1)\nonumber \\
	& = & \mu(k)+1+\log_{r}\Bigl(\frac{r^{k}-r^{-1}}{r^{k}-1}\Bigr)\nonumber \\
	& \geq & \mu(k)+1\nonumber \\
	& > & 1.\label{eq:estimate_muk-i_muk}
	\end{eqnarray}
	We obtain
	\begin{eqnarray*}
		&  & I(k,\delta)\\
		& = & \int_{\log_{r}\Bigl(\sqrt{\bigl(\frac{r^{k}-1}{r-1}r^{\delta}-1\bigr)^{2}+4r^{2k+2\delta-1}}-\frac{r^{k}-1}{r-1}r^{\delta}+1\Bigr)-\log_{r}(2)-k+1}^{1}\frac{r^{k-1+\epsilon}-1}{\bar{c}}\mathrm{d}\epsilon\\
		&  & +\int_{\delta}^{\log_{r}\Bigl(\sqrt{\bigl(\frac{r^{k}-1}{r-1}r^{\delta}-1\bigr)^{2}+4r^{2k+2\delta-1}}-\frac{r^{k}-1}{r-1}r^{\delta}+1\Bigr)-\log_{r}(2)-k+1}\frac{\bar{c}-\frac{r^{k}-1}{r-1}r^{\epsilon}}{r^{k+\epsilon}}\mathrm{d}\epsilon\\
		&  & +\int_{\log_{r}(r^{k+\delta}+1)+\log_{r}(r-1)-\log_{r}(r^{k+1}-1)}^{\delta}1-\frac{\frac{r^{k}-1}{r-1}r^{\epsilon}}{\bar{c}}\mathrm{d}\epsilon\\
		&  & +\int_{0}^{\log_{r}(r^{k+\delta}+1)+\log_{r}(r-1)-\log_{r}(r^{k+1}-1)}\frac{r^{k+\epsilon}-1}{\bar{c}}\mathrm{d}\epsilon\\
		& = & \Bigl[\frac{r^{k-1+\epsilon}}{\log(r)\bar{c}}-\frac{\epsilon}{\bar{c}}\Bigr]_{\log_{r}\Bigl(\sqrt{\bigl(\frac{r^{k}-1}{r-1}r^{\delta}-1\bigr)^{2}+4r^{2k+2\delta-1}}-\frac{r^{k}-1}{r-1}r^{\delta}+1\Bigr)-\log_{r}(2)-k+1}^{1}\\
		&  & +\Bigl[-\frac{\bar{c}}{\log(r)r^{k+\epsilon}}-\frac{1-r^{-k}}{r-1}\epsilon\Bigr]_{\delta}^{\log_{r}\Bigl(\sqrt{\bigl(\frac{r^{k}-1}{r-1}r^{\delta}-1\bigr)^{2}+4r^{2k+2\delta-1}}-\frac{r^{k}-1}{r-1}r^{\delta}+1\Bigr)-\log_{r}(2)-k+1}\\
		&  & +\Bigl[\epsilon-\frac{(r^{k}-1)r^{\epsilon}}{\log(r)(r-1)\bar{c}}\Bigr]_{\log_{r}(r^{k+\delta}+1)+\log_{r}(r-1)-\log_{r}(r^{k+1}-1)}^{\delta}\\
		&  & +\Bigl[\frac{r^{k+\epsilon}}{\log(r)\bar{c}}-\frac{\epsilon}{\bar{c}}\Bigr]_{0}^{\log_{r}(r^{k+\delta}+1)+\log_{r}(r-1)-\log_{r}(r^{k+1}-1)}\\
		& = & -\frac{1-r^{-k}}{2(r-1)\log(r)}-\frac{2r^{k-1+\delta}}{\Bigl(\sqrt{\bigl(\frac{r^{k}-1}{r-1}r^{\delta}-1\bigr)^{2}+4r^{2k+2\delta-1}}-\frac{r^{k}-1}{r-1}r^{\delta}+1\Bigr)\log(r)}\\
		&  & +\frac{1-\sqrt{\bigl(\frac{r^{k}-1}{r-1}r^{\delta}-1\bigr)^{2}+4r^{2k+2\delta-1}}}{2\log(r)r^{k+\delta}}-\Bigl(\frac{1-r^{-k}}{r-1}-\frac{1}{r^{k+\delta}}\Bigr)-(1-\delta)\frac{1-r^{-k}}{r-1}+\delta\\
		& & \bigl(\log_{r}\Bigl(\sqrt{\bigl(\frac{r^{k}-1}{r-1}r^{\delta}-1\bigr)^{2}+4r^{2k+2\delta-1}}-\frac{r^{k}-1}{r-1}r^{\delta}+1\Bigr)-\log_{r}(2)-k\bigr)\\
		&  & +\frac{2}{\log(r)}-\Bigl(1+\frac{1}{r^{k+\delta}}\Bigr)\Bigl(\log_{r}(r^{k+\delta}+1)+\log_{r}(r-1)-\log_{r}(r^{k+1}-1)\Bigr)\\
		& & \\
		& =: & f_{3}(k,\delta).
	\end{eqnarray*}
	By analyzing this expression, one can see that it is minimized for $k$ and $\delta$ small.
	
	We have $\nu(k-1)\leq1$ and $\mu(k)>0$, which only holds for
	\[
	\log_{r}\Bigl(\frac{r^{k+1}-1}{r-1}-1\Bigr)-k<\delta\leq\log_{r}\Bigl(\frac{r^{2}-r^{2-k}}{2(r-1)}+\sqrt{\bigl(\frac{r^{2}-r^{2-k}}{2(r-1)}\bigr)^{2}+r^{3}+r^{3-k}}\Bigr).
	\]
	Thus, $f_{3}(k,\delta)$ is minimized for $k=3$ and $\delta=\log_{r}\bigl(\frac{r^{4}-1}{r-1}-1\bigr)-3$.
	Calculating the value, we obtain
	\[
	I(k,\delta)\geq f_{3}(k,\delta)\geq f_{3}\Bigl(3,\log_{r}\Bigl(\frac{r^{4}-1}{r-1}-1\Bigr)-3\Bigr)=g(r).
	\]
	
	Case 4 ($\mu(k)\leq0$ and $\mu(k-1)>1$):
	Then
	\begin{eqnarray*}
		&  & I(k,\delta)\\
		& = & \int_{\log_{r}\Bigl(\sqrt{\bigl(\frac{r^{k}-1}{r-1}r^{\delta}-1\bigr)^{2}+4r^{2k+2\delta-1}}-\frac{r^{k}-1}{r-1}r^{\delta}+1\Bigr)-\log_{r}(2)-k+1}^{1}\frac{r^{k-1+\epsilon}-1}{\bar{c}}\mathrm{d}\epsilon\\
		&  & +\int_{\delta}^{\log_{r}\Bigl(\sqrt{\bigl(\frac{r^{k}-1}{r-1}r^{\delta}-1\bigr)^{2}+4r^{2k+2\delta-1}}-\frac{r^{k}-1}{r-1}r^{\delta}+1\Bigr)-\log_{r}(2)-k+1}\frac{\bar{c}-\frac{r^{k}-1}{r-1}r^{\epsilon}}{r^{k+\epsilon}}\mathrm{d}\epsilon\\
		&  & +\int_{0}^{\delta}1-\frac{\frac{r^{k}-1}{r-1}r^{\epsilon}}{\bar{c}}\mathrm{d}\epsilon\\
		& = & \Bigl[\frac{r^{k-1+\epsilon}}{\log(r)\bar{c}}-\frac{\epsilon}{\bar{c}}\Bigr]_{\log_{r}\Bigl(\sqrt{\bigl(\frac{r^{k}-1}{r-1}r^{\delta}-1\bigr)^{2}+4r^{2k+2\delta-1}}-\frac{r^{k}-1}{r-1}r^{\delta}+1\Bigr)-\log_{r}(2)-k+1}^{1}\\
		&  & +\Bigl[-\frac{\bar{c}}{\log(r)r^{k+\epsilon}}-\frac{1-r^{-k}}{r-1}\epsilon\Bigr]_{\delta}^{\log_{r}\Bigl(\sqrt{\bigl(\frac{r^{k}-1}{r-1}r^{\delta}-1\bigr)^{2}+4r^{2k+2\delta-1}}-\frac{r^{k}-1}{r-1}r^{\delta}+1\Bigr)-\log_{r}(2)-k+1}\\
		&  & +\Bigl[\epsilon-\frac{(r^{k}-1)r^{\epsilon}}{\log(r)(r-1)\bar{c}}\Bigr]_{0}^{\delta}\\
		& = & \frac{1}{\log(r)r^{\delta}}-\frac{1}{r^{k+\delta}}-\frac{1}{2\log(r)r^{k+\delta}}+\frac{1}{\log(r)}+\frac{1-r^{-k}}{r-1}\delta\\
		&  & -\frac{2r^{k-1+\delta}}{\log(r)\Bigl(\sqrt{\bigl(\frac{r^{k}-1}{r-1}r^{\delta}-1\bigr)^{2}+4r^{2k+2\delta-1}}-\frac{r^{k}-1}{r-1}r^{\delta}+1\Bigr)}-\Bigl(\frac{1-r^{-k}}{r-1}-\frac{1}{r^{k+\delta}}\Bigr)\\
		&  & \Bigl(\log_{r}\Bigl(\sqrt{\bigl(\frac{r^{k}-1}{r-1}r^{\delta}-1\bigr)^{2}+4r^{2k+2\delta-1}}-\frac{r^{k}-1}{r-1}r^{\delta}+1\Bigr)-\log_{r}(2)-k+1\Bigr)\\
		& & -\frac{\sqrt{\bigl(\frac{r^{k}-1}{r-1}r^{\delta}-1\bigr)^{2}+4r^{2k+2\delta-1}}}{2\log(r)r^{k+\delta}}+\delta-\bigl(\frac{1}{2}-\frac{1}{r^{\delta}}\bigr)\frac{1-r^{-k}}{\log(r)(r-1)}\\
		& =: & f_{4}(k,\delta).
	\end{eqnarray*}
	This expression is minimized for $k$ small.
	
	Note that we have $0\geq\mu(k)=\log_{r}(r^{k+\delta}+1)+\log_{r}(r-1)-\log_{r}(r^{k+1}-1)$, which is equivalent to
	\[
	r^{k+\delta}+1\leq\frac{r^{k+1}-1}{r-1}=\sum_{i=0}^{k}r^{i}.
	\]
	Since $r^{k+\delta}\geq\sum_{i=0}^{3}r^{i}$, we have $k\geq4$, i.e., $f_{4}(k,\delta)$ is minimized for $k=4$.
	By analyzing this function for $k=4$, we obtain
	\[
	I(k,\delta)\geq f_{4}(k,\delta)\geq f_{4}(4,\delta)>0.566>g(r).
	\]
	
	Case 5 ($\mu(k-1)\leq1$):
	Then
	\begin{eqnarray*}
		&  & I(k,\delta)\\
		& = & \int_{\log_{r}(r^{k+\delta}+1)+\log_{r}(r-1)-\log_{r}(r^{k}-1)}^{1}1-\frac{\frac{r^{k-1}-1}{r-1}r^{\epsilon}}{\bar{c}}\mathrm{d}\epsilon\\
		&  & +\int_{\log_{r}\Bigl(\sqrt{\bigl(\frac{r^{k}-1}{r-1}r^{\delta}-1\bigr)^{2}+4r^{2k+2\delta-1}}-\frac{r^{k}-1}{r-1}r^{\delta}+1\Bigr)-\log_{r}(2)-k+1}^{\log_{r}(r^{k+\delta}+1)+\log_{r}(r-1)-\log_{r}(r^{k}-1)}\frac{r^{k-1+\epsilon}-1}{\bar{c}}\mathrm{d}\epsilon\\
		&  & +\int_{\delta}^{\log_{r}\Bigl(\sqrt{\bigl(\frac{r^{k}-1}{r-1}r^{\delta}-1\bigr)^{2}+4r^{2k+2\delta-1}}-\frac{r^{k}-1}{r-1}r^{\delta}+1\Bigr)-\log_{r}(2)-k+1}\frac{\bar{c}-\frac{r^{k}-1}{r-1}r^{\epsilon}}{r^{k+\epsilon}}\mathrm{d}\epsilon\\
		&  & +\int_{0}^{\delta}1-\frac{\frac{r^{k}-1}{r-1}r^{\epsilon}}{\bar{c}}\mathrm{d}\epsilon\\
		& = & \Bigl[\epsilon-\frac{(r^{k-1}-1)r^{\epsilon}}{\log(r)(r-1)\bar{c}}\Bigr]_{\log_{r}(r^{k+\delta}+1)+\log_{r}(r-1)-\log_{r}(r^{k}-1)}^{1}\\
		&  & +\Bigl[\frac{r^{k-1+\epsilon}}{\log(r)\bar{c}}-\frac{\epsilon}{\bar{c}}\Bigr]_{\log_{r}\Bigl(\sqrt{\bigl(\frac{r^{k}-1}{r-1}r^{\delta}-1\bigr)^{2}+4r^{2k+2\delta-1}}-\frac{r^{k}-1}{r-1}r^{\delta}+1\Bigr)-\log_{r}(2)-k+1}^{\log_{r}(r^{k+\delta}+1)+\log_{r}(r-1)-\log_{r}(r^{k}-1)}\\
		&  & +\Bigl[-\frac{\bar{c}}{\log(r)r^{k+\epsilon}}-\frac{1-r^{-k}}{r-1}\epsilon\Bigr]_{\delta}^{\log_{r}\Bigl(\sqrt{\bigl(\frac{r^{k}-1}{r-1}r^{\delta}-1\bigr)^{2}+4r^{2k+2\delta-1}}-\frac{r^{k}-1}{r-1}r^{\delta}+1\Bigr)-\log_{r}(2)-k+1}\\
		&  & +\Bigl[\epsilon-\frac{(r^{k}-1)r^{\epsilon}}{\log(r)(r-1)\bar{c}}\Bigr]_{0}^{\delta}\\
		& = & 1-\Bigl(1+\frac{1}{\bar{c}}\Bigr)\bigl(\log_{r}(r^{k+\delta}+1)+\log_{r}(r-1)-\log_{r}(r^{k}-1)\bigr)-\frac{\sqrt{\bigl(\frac{r^{k}-1}{r-1}r^{\delta}-1\bigr)^{2}+4r^{2k+2\delta-1}}}{2\log(r)\bar{c}}\\
		&  & +\delta+\frac{2}{\log(r)}+\frac{\log_{r}\Bigl(\sqrt{\bigl(\frac{r^{k}-1}{r-1}r^{\delta}-1\bigr)^{2}+4r^{2k+2\delta-1}}-\frac{r^{k}-1}{r-1}r^{\delta}+1\Bigr)-\log_{r}(2)-k+1}{\bar{c}}\\
		&  & -\log_{r}(2)-\frac{2\bar{c}}{\Bigl(\sqrt{\bigl(\frac{r^{k}-1}{r-1}r^{\delta}-1\bigr)^{2}+4r^{2k+2\delta-1}}-\frac{r^{k}-1}{r-1}r^{\delta}+1\Bigr)\log(r)r}+\frac{3}{2\log(r)\bar{c}}\\
		&  & -\frac{1-r^{-k}}{r-1}\bigl(\log_{r}\Bigl(\sqrt{\bigl(\frac{r^{k}-1}{r-1}r^{\delta}-1\bigr)^{2}+4r^{2k+2\delta-1}}-\frac{r^{k}-1}{r-1}r^{\delta}+1\Bigr)+\frac{1}{2\log(r)}-k+1-\delta\bigr)\\
		& =: & f_{5}(k,\delta).
	\end{eqnarray*}
	This expression is minimized for $k$ small and $\delta$ large.
	By~\eqref{eq:estimate_muk-i_muk}, we have \mbox{$\mu(k)\leq\mu(k-1)-1\leq0$}.
	Thus, similarly to case 4, we have $k\geq4$.
	
	We have $\nu(k-1)\leq1$, which only holds for $\smash{\delta\leq\log_{r}\bigl(r\frac{r^{k}-1}{r-1}-1\bigr)-k}$.
	Thus, $f_{5}(k,\delta)$ is minimized for $k=4$ and $\smash{\delta=\log_{r}\bigl(r\frac{r^{4}-1}{r-1}-1\bigr)-4}$.
	Calculating the value, we obtain
	\[
	I(k,\delta)\geq f_{5}(k,\delta)\geq f_{5}\Bigl(4,\log_{r}\Bigl(r\frac{r^{4}-1}{r-1}-1\Bigr)-4\Bigr)>0.566>g(r).\qedhere
	\]
\end{proof}

With these lemmas, we are ready to prove an upper bound of $1/g(r)<1.772$ on the randomized competitive ratio of \randalg.

\randUB*

\begin{proof}
	Let $(c_1,c_2,\dots)$ denote the solution obtained by \randalg.
	Let $\capacity\in\N$.
	For $\smash{\capacity<\sum_{i=0}^{3}r^{i}}$, we have $\capacity\leq c_{3}\leq t_{3}$.
	Let $S_{4}:=\{(\lfloor r^{x}\rfloor,\lfloor r^{1+x}\rfloor,\lfloor r^{2+x}\rfloor,\lfloor r^{3+x}\rfloor)\mid x\in(0,1)\}$ denote the set of all sequences that are possible for the sequence $(c_{0},c_{1},c_{2},c_{3})$.
	We denote $\sigma\in S_{4}$ as $\sigma=(\sigma_{0},\sigma_{1},\sigma_{2},\sigma_{3})$.
	Using $c_{-1}=0$, we have
	\begin{eqnarray*}
		&  & \frac{\E\bigl[f(X_{\alg}(\capacity))]}{v(\capacity)}\\
		& = & \frac{1}{v(\capacity)}\sum_{i=0}^{3}\E\bigl[f(X_{\alg}(\capacity))\mid\capacity\in(c_{i-1},c_{i}]\bigr]\cdot\mathbb{P}\bigl[\capacity\in(c_{i-1},c_{i}]\bigr]\\
		& \overset{\textrm{Lemma }\ref{lem:estimates_ratio_alg_opt}(i)}{\geq} & \sum_{i=0}^{3}\E\Bigl[\max\Bigl\{\frac{c_{i-1}}{\capacity},\frac{\capacity-t_{i-1}}{\max\{\capacity,c_{i}\}}\Bigr\}\Bigm|\capacity\in(c_{i-1},c_{i}]\Bigr]\cdot\mathbb{P}\bigl[\capacity\in(c_{i-1},c_{i}]\bigr]\\
		& = & \sum_{i=0}^{3}\mathbb{P}\bigl[\capacity\in(c_{i-1},c_{i}]\bigr] \sum_{\sigma\in S_{4}}\max\Bigl\{\frac{\sigma_{i-1}}{\capacity},\frac{\capacity-\sigma_{0}-\dots-\sigma_{i-1}}{\max\{\capacity,\sigma_{i}\}}\Bigr\}\\
		& & \quad\quad \cdot\frac{\mathbb{P}\bigl[\bigl((c_{0},c_{1},c_{2},c_{3})=\sigma\bigr)\cap\bigl(\capacity\in(c_{i-1},c_{i}]\bigr)\bigr]}{\mathbb{P}\bigl[\capacity\in(c_{i-1},c_{i}]\bigr]}\\
		& = & \sum_{i=0}^{3}\sum_{\sigma\in S_{4}}\max\Bigl\{\frac{\sigma_{i-1}}{\capacity},\frac{\capacity-\sigma_{0}-\dots-\sigma_{i-1}}{\max\{\capacity,\sigma_{i}\}}\Bigr\}\\
		& & \quad\quad \cdot\mathbb{P}\bigl[\bigl((c_{0},c_{1},c_{2},c_{3})=\sigma\bigr)\cap\bigl(\capacity\in(c_{i-1},c_{i}]\bigr)\bigr]\\
		& = & \sum_{i=0}^{3}\sum_{\substack{\sigma\in S_{4}\\
				\sigma_{i-1}<\capacity\leq\sigma_{i}
			}
		}\max\Bigl\{\frac{\sigma_{i-1}}{\capacity},\frac{\capacity-\sigma_{0}-\dots-\sigma_{i-1}}{\max\{\capacity,\sigma_{i}\}}\Bigr\}\cdot\mathbb{P}\bigl[(c_{0},c_{1},c_{2},c_{3})=\sigma\bigr].
	\end{eqnarray*}
	For $\sigma=(\sigma_{0},\sigma_{1},\sigma_{2},\sigma_{3})\in S_{4}$, we have
	\begin{eqnarray*}
		\mathbb{P}\bigl[(c_{0},c_{1},c_{2},c_{3})=\sigma\bigr] & = & \mathbb{P}\Bigl[\bigcap_{i\in\{0,1,2,3\}}\bigl(r^{i+\epsilon}\in[\sigma_{i},\sigma_{i}+1)\bigr)\Bigr]\\
		& = & \mathbb{P}\Bigl[\bigcap_{i\in\{0,1,2,3\}}\bigl(\epsilon\in[\log_{r}(\sigma_{i})-i,\log_{r}(\sigma_{i}+1)-i)\bigr)\Bigr]\\
		& = & \mathbb{P}\Bigl[\epsilon\in\bigl[\max_{i\in\{0,1,2,3\}}(\log_{r}(\sigma_{i})-i),\min_{i\in\{0,1,2,3\}}(\log_{r}(\sigma_{i}+1)-i)\bigr)\Bigr].
	\end{eqnarray*}
	This gives us an explicit formula to calculate the randomized competitive ratio.
	Doing this for all $\capacity\in\bigl\{1,\dots,\bigl\lfloor\sum_{i=0}^{3}r^{i}\bigr\rfloor\bigr\}$ (cf. Figure~\ref{fig:upper_bound_CR_small_k}), we obtain $\E\bigl[f(X_{\alg}(\capacity))]\geq0.569\cdot v_{\capacity}>g(r)\cdot v_{\capacity}$.
	
	\begin{figure}
		\includegraphics[width=7cm]{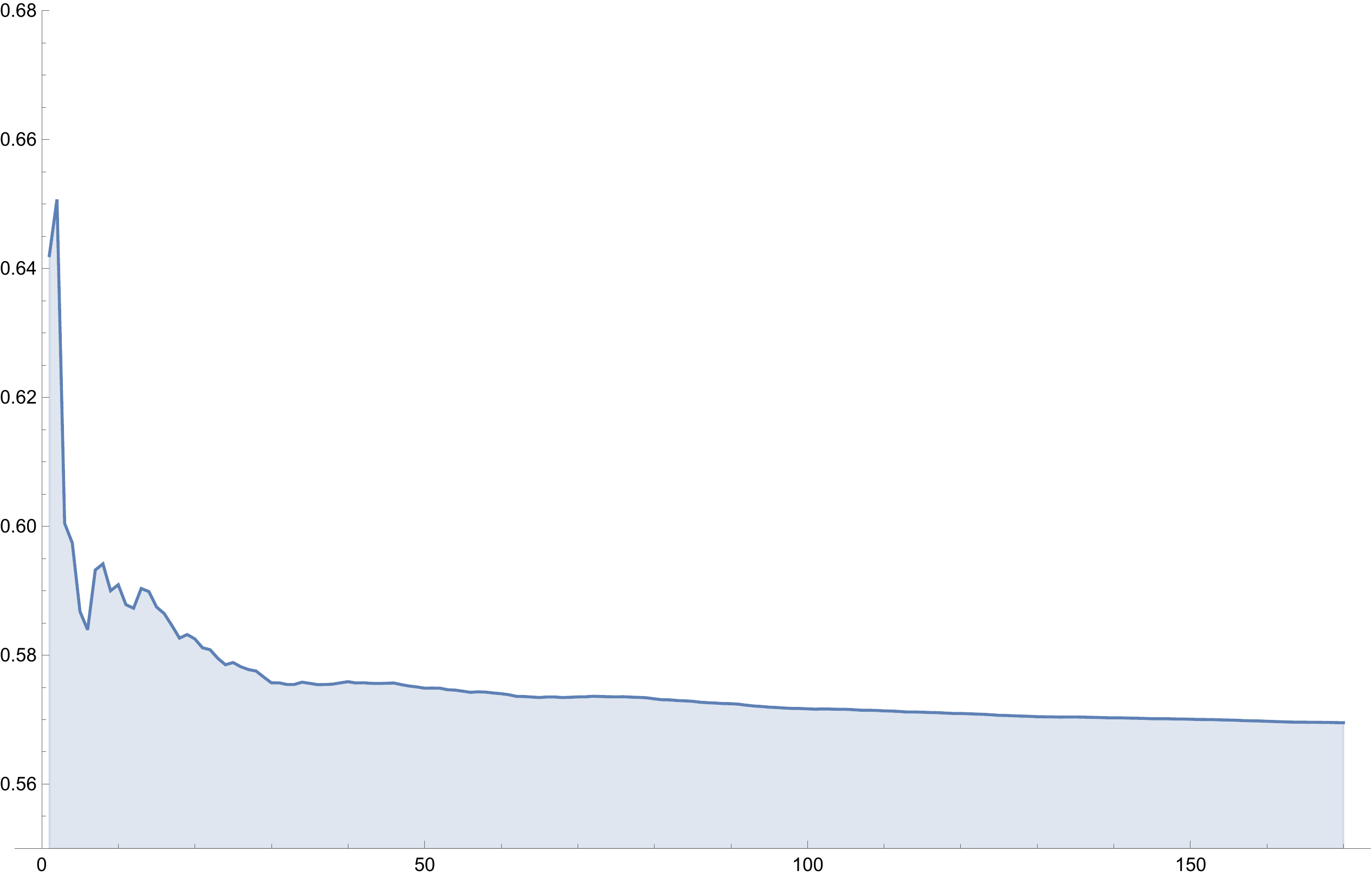}\includegraphics[width=7cm]{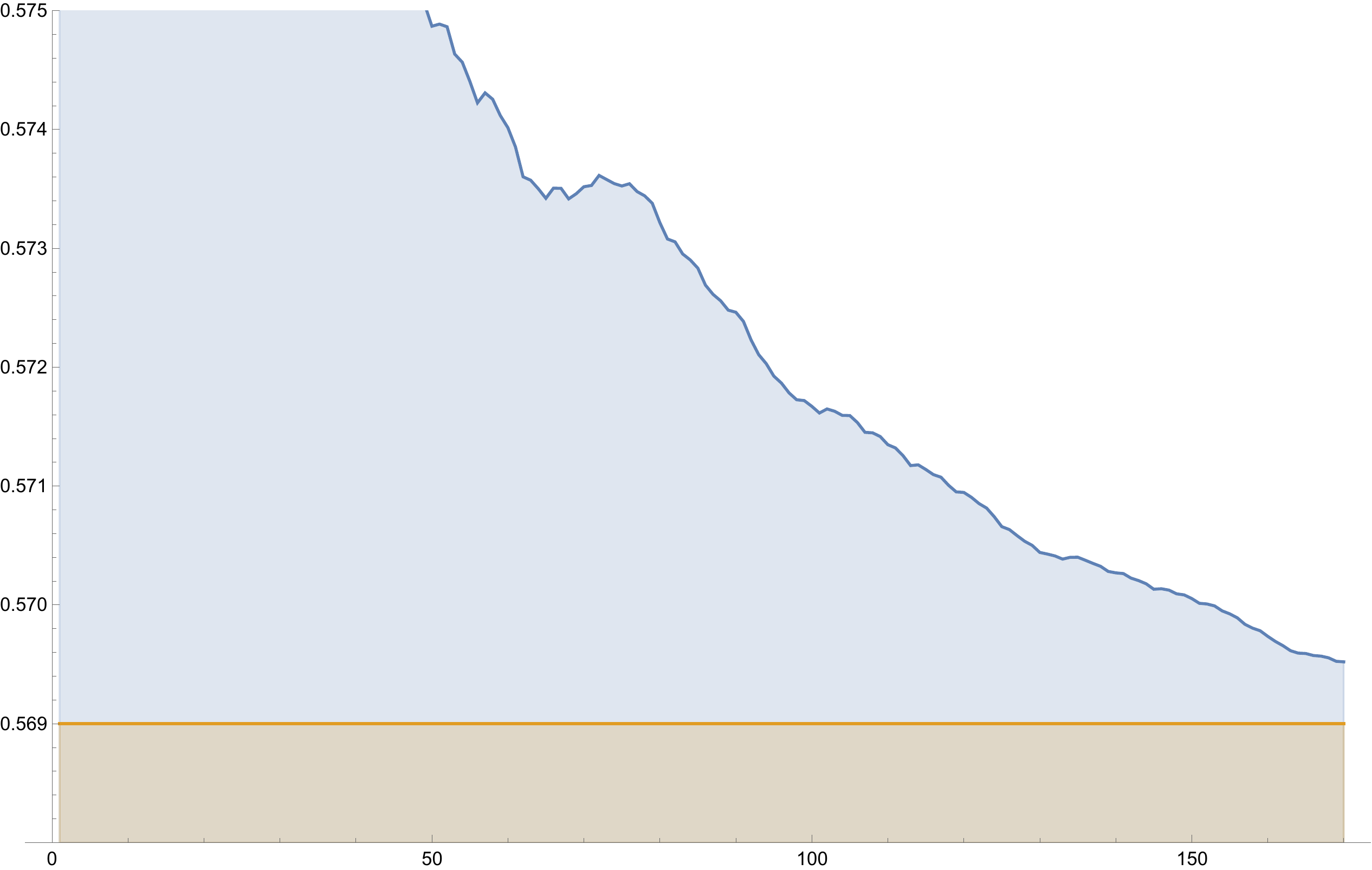}
		
		\caption{\label{fig:upper_bound_CR_small_k}Both graphics show an upper bound
			on the competitive ratio of $\protect\randalg$ for sizes less
			than $\sum_{i=0}^{3}r^{i}$. The horizontal line in the right graphic
			represents the value $0.569$.}
		
	\end{figure}
	
	Suppose $\capacity\geq\sum_{i=0}^{3}r^{i}$ and let $k\in\N$ and $\delta\in(0,1]$ be uniquely defined such that
	\begin{equation}
	\capacity=r^{k+\delta}\label{eq:def_k_delta}
	\end{equation}
	We have
	\[
	\tilde{c}_{k-1}=r^{k-1+\epsilon}\overset{\epsilon\leq1}{\leq}r^{k}\overset{\eqref{eq:def_k_delta}}{<}\capacity\overset{\eqref{eq:def_k_delta}}{\leq}r^{k+1}\overset{\epsilon\geq0}{\leq}r^{k+1+\epsilon}=\tilde{c}_{k+1}
	\]
	and, for all $i\in\N\cup\{0\}$,
	\[
	\tilde{t}_{i}-1<\tilde{c}_{i+1}<\tilde{t}_{i+1}-1
	\]
	where the first inequality follows from the fact that
	\[
	\frac{\tilde{t}_{i}-1}{\tilde{c}_{i+1}}<\frac{\tilde{t}_{i}}{\tilde{c}_{i+1}}=\frac{r^{\epsilon}}{r^{i+1+\epsilon}}\cdot\frac{r^{i+1}-1}{r-1}<\frac{1}{r-1}<1
	\]
	and the second inequality from the fact that $(\tilde{t}_{i+1}-1)-\tilde{c}_{i+1}=\tilde{t}_{i}-1\geq\tilde{c}_{0}-1=r^{\epsilon}-1>0$.
	This means, there are~$4$ intervals, in which $\capacity$ can fall:
	\[
	I_{1}=(\tilde{c}_{k-1},\tilde{t}_{k-1}-1],\quad I_{2}=(\tilde{t}_{k-1}-1,\tilde{c}_{k}],\quad I_{3}=(\tilde{c}_{k},\tilde{t}_{k}-1],\quad I_{4}=(\tilde{t}_{k}-1,\tilde{c}_{k+1}].
	\]
	
	We will calculate for which values of $\epsilon$ the value~$\capacity$ falls into which interval.
	For $i\in\N$, we have
	\begin{eqnarray}
	\capacity\leq\tilde{c}_{i} & \Leftrightarrow & r^{k+\delta}\leq r^{i+\epsilon}\nonumber \\
	& \Leftrightarrow & \epsilon\geq k-i+\delta\label{eq:boundary_ci}
	\end{eqnarray}
	and
	\begin{eqnarray}
	\capacity\leq\tilde{t}_{i}-1 & \Leftrightarrow & r^{k+\delta}\leq r^{\epsilon}\frac{r^{i+1}-1}{r-1}-1\nonumber \\
	& \Leftrightarrow & \epsilon\geq\log_{r}(r^{k+\delta}+1)+\log_{r}(r-1)-\log_{r}(r^{i+1}-1)=:\mu(i).\label{eq:boundary_ti-1}
	\end{eqnarray}
	
	The expected value of the algorithmic solution of size $\capacity$ is
	\begin{eqnarray*}
		&  & \E[f(X_{\alg}(\capacity))]\\
		& = & \sum_{i=1}^{5}\E[f(X_{\alg}(\capacity))\mid\capacity\in I_{i}]\cdot\mathbb{P}[\capacity\in I_{i}]\\
		& \overset{\textrm{Lemma }\ref{lem:estimates_ratio_alg_opt}(ii),(iii)}{\geq} & \Bigl[\E\Bigl[1-\frac{\tilde{t}_{k-2}}{\capacity}\Bigm|\capacity\in(\tilde{c}_{k-1},\tilde{t}_{k-1}-1]\Bigr]\cdot\mathbb{P}[\capacity\in I_{1}]\\
		&  & +\E\Bigl[\max\Bigl\{\frac{\tilde{c}_{k-1}-1}{\capacity},\frac{\capacity-\tilde{t}_{k-1}}{\tilde{c}_{k}}\Bigr\}\Bigm|\capacity\in(\tilde{t}_{k-1}-1,\tilde{c}_{k}]\Bigr]\cdot\mathbb{P}[\capacity\in I_{2}]\\
		&  & +\E\Bigl[1-\frac{\tilde{t}_{k-1}}{\capacity}\Bigm|\capacity\in(\tilde{c}_{k},\tilde{t}_{k}-1]\Bigr]\cdot\mathbb{P}[\capacity\in I_{3}]\\
		&  & +\E\Bigl[\max\Bigl\{\frac{\tilde{c}_{k}-1}{\capacity},\frac{\capacity-\tilde{t}_{k}}{\tilde{c}_{k+1}}\Bigr\}\Bigm|\capacity\in(\tilde{t}_{k}-1,\tilde{c}_{k+1}]\Bigr]\cdot\mathbb{P}[\capacity\in I_{4}]\Bigr]\cdot v_{\capacity}\\
		& \overset{\eqref{eq:boundary_ci},\eqref{eq:boundary_ti-1}}{=} & \Bigl[\int_{\min\bigl\{1,\mu(k-1)\bigr\}}^{1}1-\frac{\tilde{t}_{k-2}}{\capacity}\,d\epsilon\\
		&  & +\int_{\delta}^{\min\{1,\mu(k-1)\}}\max\Bigl\{\frac{\tilde{c}_{k-1}-1}{\capacity},\frac{\capacity-\tilde{t}_{k-1}}{\tilde{c}_{k}}\Bigr\}\,d\epsilon\\
		&  & +\int_{\max\bigl\{0,\mu(k)\bigr\}}^{\delta}1-\frac{\tilde{t}_{k-1}}{\capacity}\,d\epsilon\\
		&  & +\int_{0}^{\max\{0,\mu(k)\}}\max\Bigl\{\frac{\tilde{c}_{k}-1}{\capacity},\frac{\capacity-\tilde{t}_{k}}{\tilde{c}_{k+1}}\Bigr\}\,d\epsilon\Bigr]\cdot v_{\capacity}
	\end{eqnarray*}
	Furthermore, for $i\in\N$ we have
	\begin{eqnarray*}
		&  & \frac{\tilde{c}_{i}-1}{\capacity}\geq\frac{\capacity-\tilde{t}_{i}}{\tilde{c}_{i+1}}\\
		& \Leftrightarrow & r^{i+1+\epsilon}(r^{i+\epsilon}-1)\geq r^{k+\delta}\Bigl(r^{k+\delta}-r^{\epsilon}\frac{r^{i+1}-1}{r-1}\Bigr)\\
		& \Leftrightarrow & r^{2i+1}(r^{\epsilon})^{2}+\Bigl(r^{k+\delta}\frac{r^{i+1}-1}{r-1}-r^{i+1}\Bigr)r^{\epsilon}-r^{2k+2\delta}\geq0\\
		& \overset{r^{\epsilon}>0}{\Leftrightarrow} & r^{\epsilon}\geq-\frac{1}{2r^{i}}\Bigl(r^{k+\delta}\frac{1-r^{-(i+1)}}{r-1}-1\Bigr)\\
		& & \quad\quad +\sqrt{\frac{1}{4r^{2i}}\Bigl(r^{k+\delta}\frac{1-r^{-(i+1)}}{r-1}-1\Bigr)^{2}+r^{2k-2i+2\delta-1}}\\
		& \Leftrightarrow & \epsilon\geq\log_{r}\Bigl(\sqrt{\Bigl(r^{k+\delta}\frac{1-r^{-(i+1)}}{r-1}-1\Bigr)^{2}+4r^{2k+2\delta-1}}-r^{k+\delta}\frac{1-r^{-(i+1)}}{r-1}+1\Bigr)\\
		& & \quad\quad -\log_{r}(2)-i=:\nu(i),
	\end{eqnarray*}
	i.e., instead of one integral over a maximum, we can evaluate two separate integrals, which yields
	
	\begin{alignat*}{2}
		\E[f(X_{\alg}(\capacity))] \geq & \Bigl[\int_{\min\bigl\{1,\mu(k-1)\bigr\}}^{1}1-\frac{\tilde{t}_{k-2}}{\capacity}\,d\epsilon
		&& +\int_{\min\{1,\nu(k-1)\}}^{\min\{1,\mu(k-1)\}}\frac{\tilde{c}_{k-1}-1}{\capacity}\,d\epsilon\\
		& +\int_{\delta}^{\min\{1,\nu(k-1)\}}\frac{\capacity-\tilde{t}_{k-1}}{\tilde{c}_{k}}\,d\epsilon
		&& +\int_{\max\bigl\{0,\mu(k)\bigr\}}^{\delta}1-\frac{\tilde{t}_{k-1}}{\capacity}\,d\epsilon\\
		& +\int_{\max\{0,\nu(k)\}}^{\max\{0,\mu(k)\}}\frac{\tilde{c}_{k}-1}{\capacity}\,d\epsilon
		&& +\int_{0}^{\max\{0,\nu(k)\}}\frac{\capacity-\tilde{t}_{k}}{\tilde{c}_{k+1}}\,d\epsilon\Bigr]\cdot v_{\capacity}\\
		\overset{\textrm{Lemma }\ref{lem:estimate_sum-integrals_g(r)}}{\geq} & g(r)\cdot v_{\capacity} > 0.56437\cdot v_{\capacity} > \frac{1}{1.772} v_{\capacity}.
	\end{alignat*}
	This completes the proof.
\end{proof}

\subsection{Randomized Lower Bound}\label{sec:randLB}

We turn to proving the lower bound in Theorem~\ref{thm:lb} for \incmaxsep.

\randLB*

\begin{proof}
	We fix $N$ to be the number of sets $R_1,\dots,R_N$, leaving $d_1,\dots,d_N$ as parameters to determine the instance; we denote the resulting instance by $I(d_1,\dots,d_N)$.
	Note that, given a probability distribution $p_1,\dots,p_N$ over the elements $\{1,\dots,N\}$ in addition, Yao's principle~\cite{Yao1977} yields
	\[
	\inf_{\alg\in\mathcal{A}_N}\sum_{i=1}^{N}p_i\cdot\frac{i\cdot d_i}{\alg(I(d_1,\dots,d_N),i)}
	\]
	as a lower bound on the randomized competitive ratio of the problem.
	Here, $\alg(I,i)$ denotes the value of the first $i$ elements in the solution produced by $\alg$ on instance $I$, and $\mathcal{A}_N$ is the set of all deterministic algorithms on instances with $N$ sets $R_1,\dots,R_N$.
	As observed earlier, we may assume that
	\[
		\mathcal{A}_N:=\left\{\alg_{c_1,\dots,c_\ell} \;\Big|\; 1\leq c_1<\dots<c_\ell\leq N, \sum_{i=1}^\ell c_i\leq N\right\},
	\]
	where $\alg_{c_1,\dots,c_\ell}$ is the algorithm that first includes all elements of $R_{c_1}$ into the solution, then all elements of $R_{c_2}$, and so on.
	Once it has added the $c_\ell$ elements of $R_{c_\ell}$, it adds some arbitrary elements from then onwards.
	
	We can formulate the problem of maximizing the lower bound on the competitive ratio as an optimization problem:
	\begin{center}
		\begin{minipage}{.3\textwidth}
			\begin{align*}
				\max && \rho\\
				\text{s.t.} && \rho & \leq \sum_{i=1}^{N}p_i\cdot\frac{i\cdot d_i}{\alg(I(d_1,\dots,d_N),i)} && \forall\alg\in\mathcal{A}_N,\\
				&& \sum_{i=1}^{N}p_i & =  1, \\
				&& d_1,\dots,d_N & \geq  0, \\
				&& p_1,\dots,p_N & \geq  0.
			\end{align*}
		\end{minipage}
	\end{center}
	
	
	Note that the expression $\alg_{c_1,\dots,c_\ell}(I(d_1,\dots,d_N),i)$ can also be written as a function of $c_1,\dots,c_\ell$, $d_1,\dots,d_N$, and $i$ by taking the maximum over all sets from which $\alg_{c_1,\dots,c_\ell}$ selects elements:
	\[
		\alg_{c_1,\dots,c_\ell}(I(d_1,\dots,d_N),i)=\max_{1\leq j\leq\ell}\left\{\max\Bigg\{i-\sum_{1\leq j'<j}c_{j'},c_j\Bigg\}\cdot d_{c_j}\right\}.
	\]
	
	A feasible solution to the above optimization problem with $N=10$ is given by
	\begin{align*}
		&(\rho;d_1,\dots,d_{10};p_1,\dots,p_{10})\\
		&= (1.447; 1, 1/2, 1/2, 1/2, 2/5, 1/3, 1/3, 1/3, 1/3, 1/3; 0.132, 0, 0, 0.395, 0, 0, 0, 0, 0, 0.473),
	\end{align*}
	with objective value $1.447$.
\end{proof}



\bibliography{bibliography}

\begin{thebibliography}{10}

\bibitem{BernsteinDisserGrossHimburg/20}
Aaron Bernstein, Yann Disser, Martin Gro{\ss}, and Sandra Himburg.
\newblock General bounds for incremental maximization.
\newblock {\em Math. Program.}, 191(2):953--979, 2022.
\newblock \href {https://doi.org/10.1007/s10107-020-01576-0}
  {\path{doi:10.1007/s10107-020-01576-0}}.

\bibitem{BlumCCPRS94}
Avrim Blum, Prasad Chalasani, Don Coppersmith, William~R. Pulleyblank,
  Prabhakar Raghavan, and Madhu Sudan.
\newblock The minimum latency problem.
\newblock In {\em Proceedings of the Twenty-Sixth Annual {ACM} Symposium on
  Theory of Computing (STOC)}, pages 163--171. {ACM}, 1994.
\newblock \href {https://doi.org/10.1145/195058.195125}
  {\path{doi:10.1145/195058.195125}}.

\bibitem{ChrobakKNY08}
Marek Chrobak, Claire Kenyon, John Noga, and Neal~E. Young.
\newblock Incremental medians via online bidding.
\newblock {\em Algorithmica}, 50(4):455--478, 2008.
\newblock \href {https://doi.org/10.1007/s00453-007-9005-x}
  {\path{doi:10.1007/s00453-007-9005-x}}.

\bibitem{DisserKMS17}
Yann Disser, Max Klimm, Nicole Megow, and Sebastian Stiller.
\newblock Packing a knapsack of unknown capacity.
\newblock {\em {SIAM} J. Discret. Math.}, 31(3):1477--1497, 2017.
\newblock \href {https://doi.org/10.1137/16M1070049}
  {\path{doi:10.1137/16M1070049}}.

\bibitem{DisserKW21}
Yann Disser, Max Klimm, and David Weckbecker.
\newblock Fractionally subadditive maximization under an incremental knapsack
  constraint.
\newblock In {\em Proceedings of the 19th International Workshop on
  Approximation and Online Algorithms (WAOA)}, pages 206--223. Springer, 2021.
\newblock \href {https://doi.org/10.1007/978-3-030-92702-8\_13}
  {\path{doi:10.1007/978-3-030-92702-8\_13}}.

\bibitem{FujitaKM13}
Ryo Fujita, Yusuke Kobayashi, and Kazuhisa Makino.
\newblock Robust matchings and matroid intersections.
\newblock {\em {SIAM} J. Discret. Math.}, 27(3):1234--1256, 2013.
\newblock \href {https://doi.org/10.1137/100808800}
  {\path{doi:10.1137/100808800}}.

\bibitem{GoemansK98}
Michel~X. Goemans and Jon~M. Kleinberg.
\newblock An improved approximation ratio for the minimum latency problem.
\newblock {\em Math. Program.}, 82:111--124, 1998.
\newblock \href {https://doi.org/10.1007/BF01585867}
  {\path{doi:10.1007/BF01585867}}.

\bibitem{GoemansU17}
Michel~X. Goemans and Francisco Unda.
\newblock Approximating incremental combinatorial optimization problems.
\newblock In {\em Proceedings of the 20th International Workshop on
  Approximation Algorithms for Combinatorial Optimization Problems (APPROX)},
  pages 6:1--6:14, 2017.
\newblock \href {https://doi.org/10.4230/LIPIcs.APPROX-RANDOM.2017.6}
  {\path{doi:10.4230/LIPIcs.APPROX-RANDOM.2017.6}}.

\bibitem{Gonzalez85}
Teofilo~F. Gonzalez.
\newblock Clustering to minimize the maximum intercluster distance.
\newblock {\em Theor. Comput. Sci.}, 38:293--306, 1985.
\newblock \href {https://doi.org/10.1016/0304-3975(85)90224-5}
  {\path{doi:10.1016/0304-3975(85)90224-5}}.

\bibitem{HartlineS06}
Jeff Hartline and Alexa Sharp.
\newblock An incremental model for combinatorial maximization problems.
\newblock In {\em Proceedings of the 5th International Workshop on Experimental
  Algorithms (WEA)}, pages 36--48, 2006.
\newblock \href {https://doi.org/10.1007/11764298\_4}
  {\path{doi:10.1007/11764298\_4}}.

\bibitem{HartlineS07}
Jeff Hartline and Alexa Sharp.
\newblock Incremental flow.
\newblock {\em Networks}, 50(1):77--85, 2007.
\newblock \href {https://doi.org/10.1002/net.20168}
  {\path{doi:10.1002/net.20168}}.

\bibitem{HassinR02}
Refael Hassin and Shlomi Rubinstein.
\newblock Robust matchings.
\newblock {\em {SIAM} J. Discret. Math.}, 15(4):530--537, 2002.
\newblock \href {https://doi.org/10.1137/S0895480198332156}
  {\path{doi:10.1137/S0895480198332156}}.

\bibitem{HassinS06}
Refael Hassin and Danny Segev.
\newblock Robust subgraphs for trees and paths.
\newblock {\em {ACM} Trans. Algorithms}, 2(2):263--281, 2006.
\newblock \href {https://doi.org/10.1145/1150334.1150341}
  {\path{doi:10.1145/1150334.1150341}}.

\bibitem{KakimuraM13}
Naonori Kakimura and Kazuhisa Makino.
\newblock Robust independence systems.
\newblock {\em {SIAM} J. Discret. Math.}, 27(3):1257--1273, 2013.
\newblock \href {https://doi.org/10.1137/120899480}
  {\path{doi:10.1137/120899480}}.

\bibitem{KaoReifTate/93}
Ming-Yang Kao, John~H Reif, and Stephen~R Tate.
\newblock Searching in an unknown environment: An optimal randomized algorithm
  for the cow-path problem.
\newblock {\em Information and Computation}, 131(1):63--79, 1996.

\bibitem{KlimmK22}
Max Klimm and Martin Knaack.
\newblock Maximizing a submodular function with bounded curvature under an
  unknown knapsack constraint.
\newblock In {\em Proceedings of the 25th International Workshop on
  Approximation Algorithms for Combinatorial Optimization Problems (APPROX)},
  pages 49:1--49:19, 2022.
\newblock \href {https://doi.org/10.4230/LIPIcs.APPROX/RANDOM.2022.49}
  {\path{doi:10.4230/LIPIcs.APPROX/RANDOM.2022.49}}.

\bibitem{LinNRW10}
Guolong Lin, Chandrashekhar Nagarajan, Rajmohan Rajaraman, and David~P.
  Williamson.
\newblock A general approach for incremental approximation and hierarchical
  clustering.
\newblock {\em {SIAM} J. Comput.}, 39(8):3633--3669, 2010.
\newblock \href {https://doi.org/10.1137/070698257}
  {\path{doi:10.1137/070698257}}.

\bibitem{MatuschkeSS18}
Jannik Matuschke, Martin Skutella, and Jos{\'{e}}~A. Soto.
\newblock Robust randomized matchings.
\newblock {\em Math. Oper. Res.}, 43(2):675--692, 2018.
\newblock \href {https://doi.org/10.1287/moor.2017.0878}
  {\path{doi:10.1287/moor.2017.0878}}.

\bibitem{MegowM13}
Nicole Megow and Juli{\'{a}}n Mestre.
\newblock Instance-sensitive robustness guarantees for sequencing with unknown
  packing and covering constraints.
\newblock In {\em Proceedings of the 4th Innovations in Theoretical Computer
  Science Conference (ITCS)}, pages 495--504, 2013.
\newblock \href {https://doi.org/10.1145/2422436.2422490}
  {\path{doi:10.1145/2422436.2422490}}.

\bibitem{Mestre06}
Juli{\'{a}}n Mestre.
\newblock Greedy in approximation algorithms.
\newblock In {\em Proceedings of the 14th Annual European Symposium on
  Algorithms (ESA)}, pages 528--539, 2006.
\newblock \href {https://doi.org/10.1007/11841036\_48}
  {\path{doi:10.1007/11841036\_48}}.

\bibitem{MettuP03}
Ramgopal~R. Mettu and C.~Greg Plaxton.
\newblock The online median problem.
\newblock {\em {SIAM} J. Comput.}, 32(3):816--832, 2003.
\newblock \href {https://doi.org/10.1137/S0097539701383443}
  {\path{doi:10.1137/S0097539701383443}}.

\bibitem{Plaxton06}
C.~Greg Plaxton.
\newblock Approximation algorithms for hierarchical location problems.
\newblock {\em J. Comput. Syst. Sci.}, 72(3):425--443, 2006.
\newblock \href {https://doi.org/10.1016/j.jcss.2005.09.004}
  {\path{doi:10.1016/j.jcss.2005.09.004}}.

\bibitem{Yao1977}
Andrew Chi-Chin Yao.
\newblock Probabilistic computations: Toward a unified measure of complexity.
\newblock In {\em 18th Annual Symposium on Foundations of Computer Science
  (sfcs 1977)}, pages 222--227, 1977.
\newblock \href {https://doi.org/10.1109/SFCS.1977.24}
  {\path{doi:10.1109/SFCS.1977.24}}.

\end{thebibliography}

\end{document}